\newcommand{\N}{\mathbb{N}}
\newcommand{\R}{\mathbb{R}}
\newcommand{\Hp}{\mathcal{H}^+}
\newcommand{\vol}{\mathrm{vol}}
\newcommand{\volro}{\mathrm{vol}_{\{r = r_0\}}}
\newcommand{\nro}{n_{\{r=r_0\}}}
\newcommand{\nh}{n_{\mathcal{H}^+}}
\newcommand{\volh}{\mathrm{vol}_{\mathcal{H}^+}}
\begin{document}
	
	\numberwithin{equation}{section}
	\newtheorem{theorem}[equation]{Theorem}
	\newtheorem{remark}[equation]{Remark}
	\newtheorem{assumption}[equation]{Assumption}
	\newtheorem{claim}[equation]{Claim}
	\newtheorem{lemma}[equation]{Lemma}
	\newtheorem{definition}[equation]{Definition}
	\newtheorem{corollary}[equation]{Corollary}
	\newtheorem{proposition}[equation]{Proposition}
	\newtheorem*{theorem*}{Theorem}
	\newtheorem{conjecture}[equation]{Conjecture}

	\allowdisplaybreaks[1]

	\title{Generic blow-up results for the wave equation in the interior of a Schwarzschild black hole}
	\author{Grigorios Fournodavlos\thanks{Laboratoire Jacques-Louis Lions, Sorbonne Universit\'e,
			4 place Jussieu,
			Paris,
			75005,
			France}\; and Jan Sbierski\thanks{Mathematical Institute, 
			University of Oxford,
			Woodstock Road, 
			Oxford, 
			OX2 6GG,
			United Kingdom}}
	\date{\today}
	
	\maketitle
	
	\begin{abstract}
		We study the behaviour of smooth solutions to the wave equation, $\square_g\psi=0$, in the interior of a fixed Schwarzschild black hole. In particular, we obtain a full asymptotic expansion for all solutions towards $r=0$ and show that it is characterised by its first two leading terms, the principal logarithmic term and a bounded second order term. Moreover, we characterise an open set of initial data for which the corresponding solutions blow up logarithmically on the entirety of the singular hypersurface $\{r=0\}$. Our method is based on deriving weighted energy estimates in physical space and requires no symmetries of solutions. However, a key ingredient in our argument uses a precise analysis of the spherically symmetric part of the solution and a monotonicity property of spherically symmetric solutions in the interior. 
	\end{abstract}
	
	\tableofcontents

	\section{Introduction}
	
	The study of the linear wave equation on black hole backgrounds serves as a toy model for the study of gravitational perturbations of these backgrounds. This paper focuses on the study of waves in the interior of a Schwarzschild black hole. The Penrose diagram of a Schwarzschild black hole is given below in Figure \ref{FigSchw}; the interior corresponds to the top triangle.
	\begin{figure}[h]
		\centering
		\def\svgwidth{7cm}
		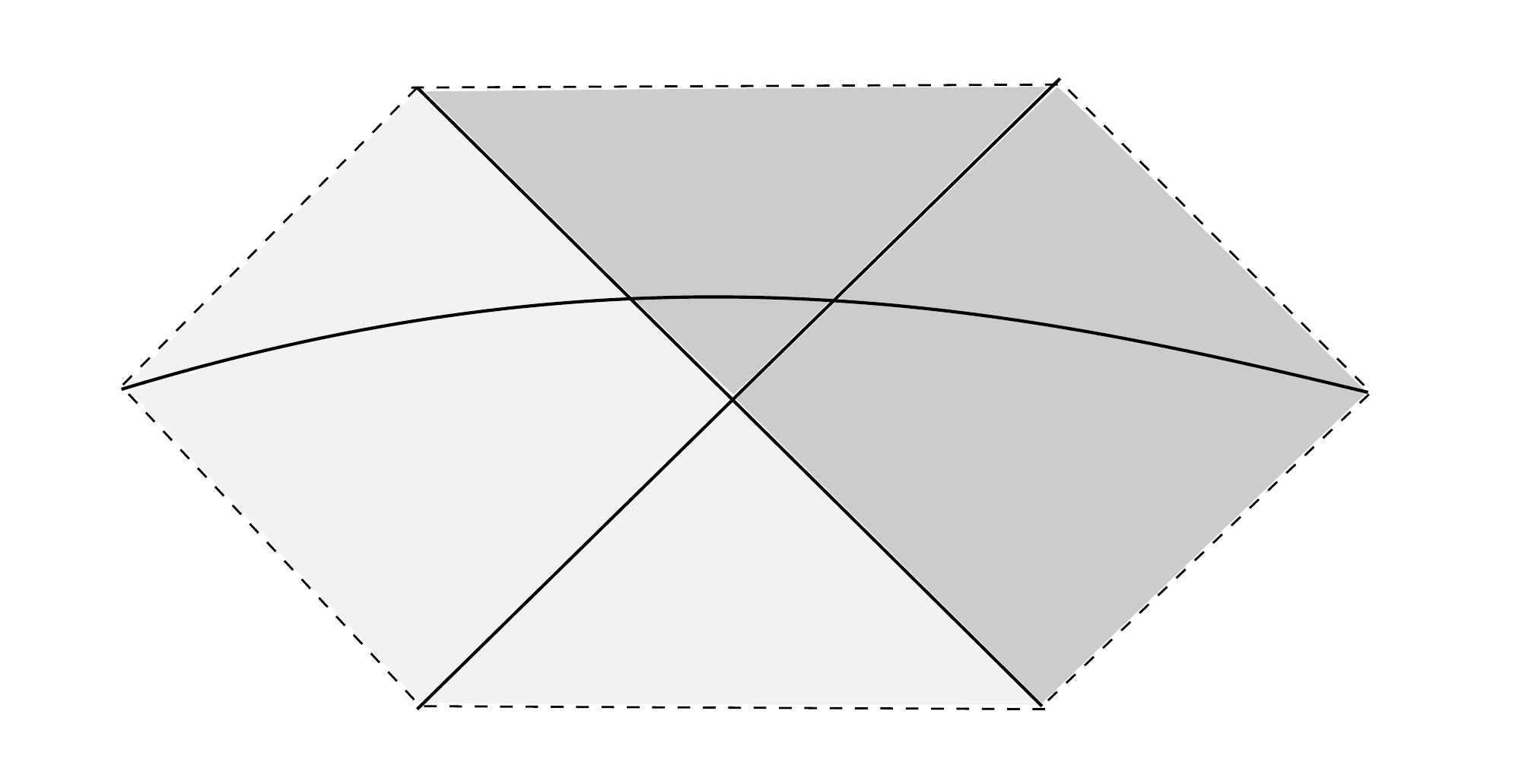
		\caption{The Penrose diagram of maximal analytic Schwarzschild} \label{FigSchw}
	\end{figure}
	
	The most widespread formulations of the influential black hole stability conjecture  and the strong cosmic censorship conjecture (for a neighbourhood of Schwarzschild) consider perturbations of the initial data for the Einstein equations given on a spacelike Cauchy hypersurface of the form of $\Sigma$ in Figure \ref{FigSchw}. This motivates the study of the \emph{Cauchy problem} for the wave equation on the Schwarzschild black hole \emph{with initial data given on $\Sigma$}. 
	
	The first boundedness and decay results for linear waves in the \emph{exterior} of the Schwarzschild black hole (i.e. the right diamond in Figure \ref{FigSchw}) were obtained in \cite{Wald79}, \cite{KayWald87}, and \cite{BlSo,BlSo2}, \cite{DafRod09a}. Many subsequent studies followed, see for example \cite{Luk10}, \cite{DoSchSo12}, \cite{Tat13}. Most relevant for this paper is the recent series \cite{AnArGa16a, AnArGa16b} which provides a very detailed analysis of linear waves in the exterior of the Schwarszchild black hole, in particular proving the upper and lower bounds of Price's law \cite{Price} along the event horizon.
	
	This paper complements the above study of waves in the exterior by presenting the first {rigorous} analysis of the behaviour of waves in the interior of the Schwarzschild black hole {(see however \cite{DoNo} for previous work in the physics literature)}. Our focus lies on the behaviour of the wave close to the curvature singularity at $\{r=0\}$ and we exhibit its generic blow-up profile there. More precisely, we derive the general asymptotic expansion to any finite order $N$ of all smooth solutions to the wave equation near the curvature singularity:
	\begin{align*}
	\psi = A(t,\omega)\log r + B(t,\omega) +\sum_{n=1}^N\big[\zeta_n(t,\omega)\log r+\eta_n(t,\omega)\big]r^n +\mathcal{O}(r^{N+1}|\log r|),\qquad \text{as $r\rightarrow0$},
	\end{align*}
	where $A,B,\zeta_n,\eta_n:\mathbb{R}\times\mathbb{S}^2\to\mathbb{R}$ are smooth functions, independent of $r$, and $\zeta_n,\eta_n$ are determined in terms $A,B$ and their derivatives, see Theorem \ref{ThmFullExp} for the precise formulas.\footnote{The remainder is also a smooth function, while its coordinate derivatives satisfy analogous bounds, see Theorem \ref{ThmFullExp} for more details.} In fact, we show that there exists a one to one correspondence between the solutions to the wave equation in the interior and the pair of functions $(A,B)$. 
	
	Assuming certain upper and lower bounds for the wave along the event horizons -- which according to \cite{AnArGa16a,AnArGa16b, AnArGa18} are satisfied by generic solutions  -- we show that $A$ is non-vanishing near the endpoints of the singular hypersurface $\{r=0\}$ and thus the wave blows up there logarithmically. 
	Moreover, we identify an open set of initial data which gives rise to solutions that blow up \emph{everywhere} at $\{r=0\}$. We obtain blow-up by exploiting a monotonicity property for the spherically symmetric part of the wave in the interior. To our knowledge, this monotonicity property first appeared in \cite{DafAnn}, where it was used to show blow-up along the Cauchy horizons of (dynamical) black hole interiors for the spherically symmetric Einstein-Maxwell-scalar field system. 
	
	%Although we do not assume any symmetries, our input along the event horizon is that the spherically symmetric part of the wave is dominant towards timelike infinity. According to [AnArGa16a,AnArGa16b] this agrees with the behaviour of the generic solutions to the wave equation in the exterior, which justifies our assumption.
	
	The generic blow-up profile that we exhibit in the Schwarzschild interior
	shares some features with the behaviour of linear waves close to Big Bang singularities \cite{AFF,AllRen} and even with dynamical waves for near-FLRW \cite{RodSp} Big Bang singularities, where a similar logarithmic blow-up profile has been exhibited. 
	It is however in sharp contrast to the behaviour of linear waves observed in sub-extremal Reissner-Nordstr\"om and Kerr black hole interiors, where the local $\dot{H}^1$ energy blows up generically at the Cauchy horizons \cite{LukOh15,LukSb,DafShlap}, but the waves themselves extend continuously past the inner null boundaries of the black hole \cite{Fra14,Fra17,Hin15}. In the extremal case, axi-symmetric solutions to the wave equation have been shown to extend continuously up to and including the Cauchy horizons, having also finite local energy \cite{Gajic15a,Gajic15b}. Furthermore, smooth solutions to the wave equation in Reissner-Nordstr\"om-de Sitter and Kerr-de Sitter black hole interiors were shown \cite{HinVas15} to extend continuously to the Cauchy horizons and decay exponentially to a constant along them. 
	
{Although the linear wave equation is regarded as a `poor' linearisation of the Einstein equations, the behaviour of linear waves in black hole interiors can potentially give some insight concerning the much harder non-linear stability problem and extendibility questions related to the strong cosmic censorship conjecture. Indeed, this has been proven to be the case in the recent breakthrough work \cite{DafLuk}, the first in an announced series of three papers, proving the $C^0$-stability of the Kerr Cauchy horizons, under the assumption that the dynamical exterior asymptotes to a rotating Kerr black hole at timelike infinity. Note that this is in perfect analogy to the $C^0$-extendibility of linear waves across null boundaries that we mentioned above. 
The situation is however drastically different for the Schwarzschild spacetime: in contrast to the exact Kerr spacetime, the metric here does not extend continuously in the interior beyond the maximal globally hyperbolic development, \cite{Sbier}, and, moreover, generic linear waves become unbounded in the interior. This indicates a radically different nature of the non-linear stability problem for the Schwarzschild interior compared to the Kerr interior. 
It is claimed that, should spacelike singularities occur, they will not in general be of Schwarzschild type, but they will instead exhibit what is called mixmaster or oscillatory behaviour (BKL), see \cite{BKL,IsenMon,Ring} and the references therein. However, for subclasses of spacetimes enjoying certain symmetries and or polarized conditions, see \cite{IsenMon}, spacelike singularities exhibiting AVTD behaviour\footnote{The Schwarzschild singularity is trivially AVTD (asymptotically velocity term dominated). Loosely speaking, these singularities can be viewed as Schwarzschild type, in the sense that the metric components have similar blowing up/vanishing rates, which vary between different spatial points along the singular hypersurface, while satisfying the classical Kasner relations. See also \cite{Four} for the backwards construction of  Schwarzschild type singularities in non-symmetric spacetimes.} are expected to form instead. In this latter case, what is called asymptotic velocity \cite{KichRen}, can be seen to correspond to the coefficient $A(t,\omega)$ of the leading order logarithmic term in our expansion above, where the corresponding dynamical variable that has this logarithmic behaviour, to leading order at the singularity, satisfies a non-linear wave type equation.}

Finally, we note that it would be interesting to examine whether similar blow-up behaviour to the one established in the present paper  is exhibited for the spherically symmetric Einstein-scalar field system, where general black hole solutions obtained in \cite{Christ1,Christ2} contain a spacelike singularity in their black hole interior. {Indeed, the analytic and numerical results of \cite{Bu} seem to suggest this.}

	\subsection{The Schwarzschild black hole and notation} \label{SecNotation}
	
	We recall that the darker shaded region of the maximal analytic Schwarzschild spacetime $(M,g)$ depicted in Figure \ref{FigSchw} is given by the manifold $\R \times (0, \infty) \times \mathbb{S}^2$, with standard $(v,r, \theta, \varphi)$ coordinates, and metric
	\begin{equation} \label{Metricvr}
	g= -(1 - \frac{2m}{r})\, dv^2 + dv \otimes dr + dr \otimes dv + r^2 \, \big(d\theta ^2 + \sin^2\theta \, d\varphi^2\big) \;.
	\end{equation}
	Here, $m >0$ is the mass of the Schwarzschild black hole. These coordinates are also referred to as ingoing Eddington--Finkelstein coordinates. We denote the hypersurface $\{r = 2m\}$ in this coordinate patch by $\Hp$ and call it the (right future) event horizon. 
	
	The lighter shaded region of the maximal analytic Schwarzschild spacetime in Figure \ref{FigSchw} is in fact isometric to the darker shaded one; they are glued together across a null hypersurface. We refer the reader to \cite{HawkEllis} for the precise construction of the maximal analytic Schwarzschild spacetime $(M, g)$. Although we will state the results of this paper for waves defined on all of the maximal analytic Schwarzschild spacetime $(M,g)$, it suffices for the proof, by the symmetry of $(M,g)$ which interchanges the two asymptotically flat regions, to only consider the darker shaded region defined above. 
	
	Returning to the darker shaded region, the part $\{0 < r < 2m\}$ is called the interior of the Schwarzschild black hole and is the region where most of our analysis takes place. We introduce a function $r^*(r)$ in this region that satisfies $\frac{dr^*}{dr} = \frac{1}{1 - \frac{2m}{r}}$,  $\lim_{r \to 2m} r^*(r) = -\infty$ and $\lim_{r \to 0}r^*(r) =: r^*(0)$. Setting $t = v - r^*$, the metric \eqref{Metricvr} takes the well-known form
	\begin{equation*}
	g = -(1- \frac{2m}{r}) \, dt^2 + \frac{1}{1 - \frac{2m}{r}} \, dr^2 + r^2 \, \big(d\theta ^2 + \sin^2\theta \, d\varphi^2\big) \;.
	\end{equation*}
	{ The interior region $\{r<2m\}$ is naturally foliated by the level sets $\{r = r_0\}$, which are  spacelike hypersurfaces and here denoted by $\Sigma_{r_0}$.
	Moreover,} we will denote with $\Omega_k$, $k = 1,2,3$, a basis of normalised generators of the symmetry group of $\mathbb{S}^2$. They give rise to a set of Killing vector fields on $(M,g)$, for which we use the same notation.
	Let us also recall that the Killing vector field $\partial_t$ (with respect to the above $(t,r, \theta, \varphi)$  coordinates) extends regularly to all of $M$. We denote this extended vector field again by $\partial_t$. In particular it is tangent to the right future event horizon $\Hp$ and future directed null there. We will also need the vector field\footnote{The following notation $\frac{\partial}{\partial r}\Big|_v$ indicates that this is a partial derivative with respect to the $(v,r,\theta, \varphi)$ coordinate system.} $Y:= -\frac{\partial}{\partial r}\Big|_v$, which is future directed null and is transversal to the right event horizon $\Hp$ and satisfies $g(Y, \partial_v) = -1$.
	
	Moreover, we define the function  $u = r^* -t$ in the interior. In $(u,v)$-coordinates for the interior the vector field $Y$ reads
	\begin{equation}\label{DefY}
	Y= -\frac{\partial}{\partial r}\Big|_v = -\frac{2}{1 - \frac{2m}{r}} \frac{\partial}{\partial u}\Big|_v \;.
	\end{equation}
	For the later analysis it is also convenient to recall $\partial_v r = \partial_u r = 1 - \frac{2m}{r}$.
Choosing $(u,r,\theta, \varphi)$ coordinates in the interior one sees that this chart can be extended by analyticity to cover also the left asymptotically flat (diamond shaped) region in Figure \ref{FigSchw}. We denote the hypersurface $\{r=2m\}$ in this coordinate chart	also by $\Hp$ and call it the left future event horizon. However, if not specified, $\Hp$ will always refer to the right event horizon. The vector field $-\partial_t$ is tangent to the left event horizon and future directed there  and the vector field $\tilde{Y}:=-\frac{\partial}{\partial r}\Big|_u$ is transversal to the left event horizon and future directed.
	
	Let us also recall that the \emph{wave equation} is defined by
	\begin{equation} \label{WaveEquation}
	\Box_g \psi := \frac{1}{\sqrt{-\det g}}\partial_\mu \big(g^{\mu \nu} \sqrt{- \det g} \,\partial_\nu \psi\big) = 0 \;.
	\end{equation}
	
	We will often use the notation $f(x) \lesssim g(x)$ for two functions $f$ and $g$, by which we mean that there exists a constant $C>0$ such that $f(x) \leq C \cdot g(x)$ holds for all $x$. The notation $f(x) \gtrsim g(x)$ is defined analogously. Finally, the notation $f(x) \sim g(x)$ means that both $f(x) \lesssim g(x)$ and $f(x) \gtrsim g(x)$ hold. We will also use the big O notation $\mathcal{O}$ in this paper.

	\subsection{The main results and outline of the paper} \label{SecMainThm}
	
	Our focus lies on the behaviour of the wave close to the curvature singularity at $\{r=0\}$. In Section \ref{SecSphericalSymmetry} we begin with an analysis for spherically symmetric waves and exhibit a blow-up mechanism that exploits a monotonicity property for the spherically symmetric wave equation in the trapped region: if the null derivatives $\partial_u \psi $ and $\partial_v \psi$ have the same sign in the black hole interior, this sign is propagated to the future and the amplitude of these quantities is monotonically growing. We prove

	\begin{theorem}
		\label{ThmSphSym}
		
		Let $\psi$ be a smooth, spherically symmetric solution to the wave equation \eqref{WaveEquation} on the maximal analytic Schwarzschild spacetime.
		\begin{enumerate}[a)]
			\item Let $\Sigma'$ be a spherically symmetric achronal hypersurface  containing a portion $\Hp \cap \{v \geq v_0\}$, $v_0 \geq 1$, of the right event horizon and then transitioning through the interior to an analogous component of the left event horizon, see Figure \ref{FigSigmaPrime}. 
			Assume there exists a  $q > -1$ such that $\psi$ satisfies along $\Hp \cap \{v \geq v_0\}$
			\begin{equation*}
			\begin{aligned}
			\partial_t \psi &\gtrsim  v^{-(q + 1)} \\
			Y\psi &\gtrsim v^{-(q+1)}
			\end{aligned}
			\end{equation*}
			and an analogous estimate (with the same positive sign)\footnote{By an ``analogous"  estimate along the left event horizon we mean that we replace $\partial_t$ by $-\partial_t$ (recalling that $-\partial_t$ is future directed on the left event horizon!), $Y$ by $\tilde{Y}$, and $v$ by $u$. The rest remains unchanged. Here, for example, the analogous estimate would take the form $-\partial_t \psi \gtrsim u^{-(q+1)}$ and $\tilde{Y}\psi \gtrsim u^{-(q+1)}$.}  along the portion of the left event horizon. Moreover, assume that $\partial_u \psi >0$ and $\partial_v \psi >0$ hold along the portion of $\Sigma'$ in the black hole interior that connects the two event horizons and let $0< r_0 < 2m$. Then there exists a $C>0$ such that 
			\begin{equation*}
			\psi (t,r) \geq \psi(t,r_0) + C  (|t|+1)^{-(q+1)} \log \frac{r_0}{r}
			\end{equation*}
			holds in $\{0<r<r_0\}$.   
			
			\item Assume $\psi$ satisfies along the right event horizon $\Hp \cap \{v \geq 1\}$
			\begin{equation} \label{ThmSphSymAssumptionLowerBound}
			\partial_t \psi \gtrsim v^{-(q + 1)} \qquad \textnormal{ for } v \geq v_0 \;,
			\end{equation}
			where $v_0 \geq 1$ and $q >-1$; and similarly\footnote{Here we do not require that the analogous estimate has the same positive sign, i.e., we also allow for $-\partial_t \psi \lesssim -u^{-(q+1)}$.} for the left event horizon. Moreover, let $0< r_0 < 2m$. Then there exists a $t_0 \geq 1$ and a $C>0$ such that 
			\begin{equation}\label{Mihalis}
			\psi (t,r) \geq \psi(t,r_0) + C |t|^{-(q+1)}  \log \frac{r_0}{r} 
			\end{equation}
			holds in $\{0<r<r_0\}\cap\{ t\geq t_0\}$ -- and similarly\footnote{In the case of the assumption $-\partial_t \psi \lesssim -u^{-(q+1)}$ along the left event horizon, i.e., assuming a negative sign, \eqref{Mihalis} has to be changed to $\psi(t,r) \leq \psi(t,r_0) - C |t|^{-(q+1)} \log \frac{r_0}{r}$, i.e., the wave blows up to $-\infty$ for $r\to 0$. Cf.\ the discussion below.} for the vicinity of the left  end of the singular hypersurface $\{r = 0\}$.   
		\end{enumerate}
	\end{theorem}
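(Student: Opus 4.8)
I would work with the spherically symmetric form of $\Box_g\psi=0$ in the double null coordinates $(u,v)$ of the interior, where it reads $\partial_u(r^2\partial_v\psi)+\partial_v(r^2\partial_u\psi)=0$, i.e.\ the two ``Dafermos monotonicity'' identities $\partial_v(r\partial_u\psi)=(\tfrac{2m}{r}-1)\partial_v\psi$ and $\partial_u(r\partial_v\psi)=(\tfrac{2m}{r}-1)\partial_u\psi$. In the interior $\tfrac{2m}{r}-1>0$, $\partial_u r=\partial_v r=1-\tfrac{2m}{r}<0$, and $\partial_u,\partial_v$ are both future directed. Rewriting the second identity along a $v=\mathrm{const}$ ray, using $dr=(1-\tfrac{2m}{r})du$, as $\tfrac{d}{dr}(r\partial_v\psi)=-\partial_u\psi$ and integrating from the right event horizon to $(t,r)$ gives
\[
r\,\partial_v\psi\big|_{(t,r)}=2m\,\partial_t\psi\big|_{\Hp}(v)+\int_r^{2m}\partial_u\psi\,dr',
\]
and symmetrically $r\partial_u\psi$ at $(t,r)$ equals $2m$ times $-\partial_t\psi$ at parameter $u$ on the left event horizon plus $\int_r^{2m}\partial_v\psi\,dr'$, evaluated along the $u=\mathrm{const}$ ray; here I use $\partial_t\psi=\partial_v\psi$ on $\Hp$ and $\partial_u\psi=-\tfrac12(1-\tfrac{2m}{r})Y\psi\to 0$ as $r\to 2m$. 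Finally, along $\{t=\mathrm{const}\}$ one has $\tfrac{d\psi}{dr}=\tfrac{\partial_u\psi+\partial_v\psi}{1-2m/r}$, hence $\psi(t,r)-\psi(t,r_0)=\int_r^{r_0}\tfrac{r'(\partial_u\psi+\partial_v\psi)}{2m-r'}\,dr'$.

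\emph{Part (a).} The hypotheses give $\partial_u\psi,\partial_v\psi>0$ on the interior portion of $\Sigma'$, while near the right horizon $\partial_v\psi=\partial_t\psi\gtrsim v^{-(q+1)}>0$ and $\partial_u\psi>0$ (since $-\tfrac12(1-\tfrac{2m}{r})>0$ and $Y\psi\gtrsim v^{-(q+1)}>0$), with the analogous statements near the left horizon using $\tilde Y$, $-\partial_t\psi$, $u$. A continuity argument then propagates positivity: were $\partial_u\psi>0,\partial_v\psi>0$ to fail first at $p\in J^+(\Sigma')\cap\{0<r<2m\}$, integrating the two monotonicity identities (whose right sides are $\ge 0$ on $J^-(p)$) back to $\Sigma'$ along the two null rays through $p$ would force $r\partial_u\psi(p),r\partial_v\psi(p)>0$, a contradiction; so $\partial_u\psi,\partial_v\psi>0$ on $J^+(\Sigma')\cap\{0<r<2m\}$. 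With positivity in hand I feed the two integral identities into each other: for $(t,r)$ with $t>0$ one has $v=r^*(r)+t\sim|t|+1$ (as $r^*$ is bounded on $(0,r_0)$), so $r\partial_v\psi\ge 2m\partial_t\psi|_{\Hp}(v)+\int_r^{r_0}\partial_u\psi\,dr'$, and each $\partial_u\psi$ in the integrand is $\ge\tfrac1{r'}\int_{r'}^{r_0}\partial_v\psi\,dr''$ by the symmetric identity. Writing $g(r)$ for the resulting lower bound on $r\partial_v\psi$ along these rays,
\[
g(r)\ \ge\ c\,(|t|+1)^{-(q+1)}+\int_r^{r_0}\frac1{r'}\Big(\int_{r'}^{r_0}\frac{g(r'')}{r''}\,dr''\Big)dr',
\]
which in the variable $x=\log\tfrac{r_0}{r}$ is the problem $\phi''=\phi$, $\phi(0)=1$, $\phi'(0)=0$; hence $g(r)\gtrsim (|t|+1)^{-(q+1)}\cosh(\log\tfrac{r_0}{r})\gtrsim (|t|+1)^{-(q+1)}\tfrac{r_0}{r}$, i.e.\ $\partial_v\psi\gtrsim (|t|+1)^{-(q+1)}r_0/r^2$. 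Plugging this and $\partial_u\psi>0$ into the $\{t=\mathrm{const}\}$ identity yields $\psi(t,r)-\psi(t,r_0)\gtrsim (|t|+1)^{-(q+1)}\int_r^{r_0}\tfrac{dr'}{r'(2m-r')}\gtrsim (|t|+1)^{-(q+1)}\log\tfrac{r_0}{r}$, which is the claim; the case $t<0$ is symmetric with the two horizons exchanged.

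\emph{Part (b) and the main obstacle.} The mechanism is the same once positivity holds, but positivity can no longer be imported from an initial surface and must be \emph{generated}: I would run a continuity/bootstrap argument in the radius $r$ along the $v=\mathrm{const}$ rays with $v$ large (equivalently on $\{t\ge t_0\}$), where the assumption ``$\partial_u\psi,\partial_v\psi\ge 0$ down to radius $r$'' is closed by the same two integral identities, simultaneously making the signs strict and producing $r\partial_v\psi\gtrsim t^{-(q+1)}r_0/r$, after which one integrates exactly as above. The genuine difficulty — and the reason part (b) is restricted to $t\ge t_0$, i.e.\ to a vicinity of the endpoints of $\{r=0\}$ — is that the causal past of a point near the right end of $\{r=0\}$ unavoidably contains data that is not sign-controlled: a portion of the right horizon with bounded $v<v_0$, a portion of the left horizon near the bifurcation sphere (the $u=\mathrm{const}$ ray through the point meets it at $u\sim r^*(r)-t$, very negative), and the value of $\partial_u\psi$ on these rays for $r$ close to $2m$. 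I expect these to be controlled quantitatively: the left-horizon term in the integral identity is exponentially small in $|u|$, hence in $t$, by the smoothness of $\psi$ at the bifurcation sphere, and $\partial_u\psi=-\tfrac12(1-\tfrac{2m}{r})Y\psi$ near $r=2m$ is governed by the degenerate factor $1-\tfrac{2m}{r}$, itself exponentially small in $|r^*|$ along these rays; both are dominated by the polynomial quantity $t^{-(q+1)}$ once $t_0$ is taken large enough (and large enough that the $v$-values that occur exceed $v_0$), so the $\cosh$-type amplification still produces the $\tfrac1r$ growth of $r\partial_v\psi$ and hence the logarithmic lower bound. The left endpoint of $\{r=0\}$ is handled identically with the two horizons exchanged, the sign of the conclusion flipping when the left-horizon estimate is assumed with the opposite sign.
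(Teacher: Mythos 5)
Your setup (the two null identities $\partial_u(r\partial_v\psi)=(\tfrac{2m}{r}-1)\partial_u\psi$, $\partial_v(r\partial_u\psi)=(\tfrac{2m}{r}-1)\partial_v\psi$, sign propagation, a $1/r^2$ lower bound on the null derivatives, then integration along $\{t=\mathrm{const}\}$) is the same skeleton as the paper's. For part (a) your argument is essentially sound, and the quantitative step is genuinely different: you iterate the two integral identities against each other and compare with $\phi''=\phi$ in $x=\log(r_0/r)$ to get the $\cosh$-type, hence $r_0/r$, growth of $r\partial_v\psi$, whereas the paper (Proposition \ref{ToSing}) observes that the ansatz $\tfrac{C}{r}v^{-p}$ is exactly self-reproducing under a single application of each identity and closes a continuity argument in $r$. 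Both give $\partial_v\psi,\partial_u\psi\gtrsim r^{-2}$ and hence the logarithm. Two points need repair: for a point with $t$ large the ingoing past ray $u=\mathrm{const}\approx -t$ meets the left horizon near the bifurcation sphere, i.e.\ in the \emph{past} of $\Sigma'$, so you may not integrate the $r\partial_u\psi$ identity back to the left horizon — you must truncate at $\Sigma'$ and use the nonnegativity of the boundary term there; and the claimed bound for all $t$ (not just $|t|$ large) requires the compactness of the interior portion of $\Sigma'$ to produce a uniform positive seed in the middle region. Both are routine.

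Part (b) has a genuine gap: you have omitted the redshift mechanism, which is precisely how the paper generates the sign of the transversal derivative from the hypothesis $\partial_t\psi\gtrsim v^{-(q+1)}$ alone (Propositions \ref{PropRedShiftHorizon} and \ref{PropOffHp}). In your identity $r\partial_v\psi=2m\,\partial_t\psi|_{\Hp}(v)+\int_r^{2m}\partial_u\psi\,dr'$ the portion $\int_{r_0}^{2m}\partial_u\psi\,dr'$ of the error is \emph{not} controlled by the degeneracy of $1-\tfrac{2m}{r}$: that factor is of order one on most of $[r_0,2m]$ (it is small only in $r^*$, i.e.\ only for $r$ extremely close to $2m$, and is in no way small in $t$), and even where it is small you would still need an upper bound on $|Y\psi|$ in $\{r_0\le r\le 2m\}\cap\{v\ \mathrm{large}\}$ to exploit it — no such upper bound follows from the hypotheses of part (b), which only impose a \emph{lower} bound on $\partial_t\psi$ along the horizon ($\psi$ and $Y\psi$ could a priori grow along $\Hp$). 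The paper sidesteps this entirely: on the horizon, \eqref{EqForY} gives $\partial_v\big(e^{v/2m}Y\psi|_{\Hp}\big)=\tfrac1m e^{v/2m}\partial_v\psi|_{\Hp}$, so the unknown (possibly negative, possibly large) initial value $Y\psi|_{\Hp}(v_0)$ is exponentially damped while the positive source accumulates to give $Y\psi|_{\Hp}\gtrsim v^{-(q+1)}$ for $v\ge v_1$; a continuity argument then transports the strict bounds to $\{r\ge r_0\}\cap\{u\le u_0\}$, after which your $r\le r_0$ analysis applies. Your treatment of the left-horizon boundary term (exponentially small in $|u|\sim t$ by smoothness at the bifurcation sphere, hence absorbed for $t\ge t_0$) is correct, but without the redshift step the bootstrap in part (b) does not close.
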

	\begin{figure}[h]
		\centering
		\def\svgwidth{7cm}
		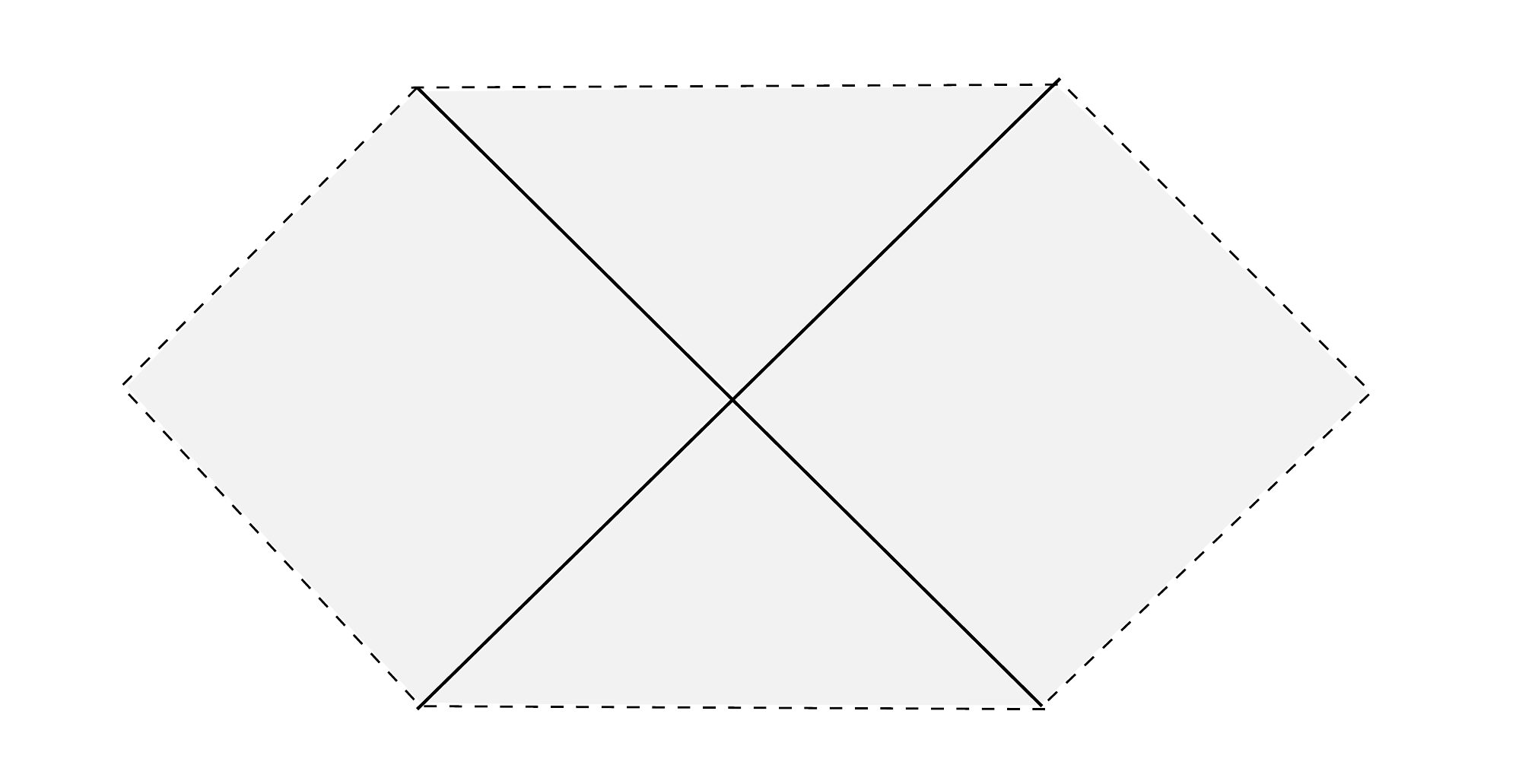
		\caption{The hypersurface $\Sigma'$ in the maximal analytic Schwarzschild spacetime} \label{FigSigmaPrime}
	\end{figure}
	We remark that replacing $\psi$ by $-\psi$ gives another version of this theorem with reversed signs on the assumptions, { i.e., $\partial_t\psi\lesssim -v^{-(q+1)}$ etc}. We would like to emphasise that the sign of the derivatives in the assumptions determines the sign of $\psi$ close to the singularity, {i.e., whether $\psi\rightarrow+\infty$ or $\psi\rightarrow-\infty$ as $r\rightarrow0$}. Thus, while it is important for the first part of this theorem that the signs on the left and right event horizon are the same in order to obtain blow up of $\psi$ on all of $\{r = 0\}$, the signs might very well be different in the second part of this theorem -- and thus leading to a blow-up of $\psi$ with different signs near the opposite endpoints of $\{r=0\}$. { In the latter case however, there will unavoidably be a non-empty subset of $\{r=0\}$ on which $\psi$ will be bounded; this follows from the continuity of the coefficient of its logarithmic part, see \eqref{AsymptoticExp} below}.
	
	Section \ref{SecExp} studies waves without any symmetry assumptions. The main theorem proved in this section provides in particular an expansion of the wave close to the singular hypersurface:
	\begin{theorem}
		\label{ThmExp}
		Let $\psi$ be a smooth solution to the wave equation \eqref{WaveEquation} on the maximal analytic Schwarzschild spacetime that satisfies the following bounds along the event horizon $\Hp \cap \{v \geq 1\}$
		\begin{equation}\label{ThmExpAssumptions}
		\begin{aligned} 
		|\psi| &\lesssim v^{-q}   \\ 
		| \partial_t^{(i)} \Omega_k^{(j)} \psi | &\lesssim v^{-(q + \delta)} \quad \textnormal{ for } i,j \in \N_0, 1 \leq i + j \leq 6, k = 1,2,3 \;,
		\end{aligned}
		\end{equation}
		where $q > 0$ and $ \delta  \geq 0$. Moreover, we assume that the analogous bounds are satisfied along the left event horizon.
		
		Then 
		there exists an $r_0 \in (0, 2m)$ close to $2m$ and
		there exist functions $A,B \in C^\infty(\R \times \mathbb{S}^2)$ with $|A(t, \omega)| \lesssim (|t|+1)^{-(q + \delta)}$ and $|B(t,\omega)| \lesssim (|t| +1)^{-q}$  and $P \in C^\infty(M)$ with $|P(t,r, \omega)| \lesssim r|\log r|(|t| +1)^{-q}$, such that the expansion
		\begin{equation}
		\label{AsymptoticExp}
		\psi(t,r, \omega) = A(t, \omega)\log r + B(t,\omega) + P(t,r,\omega)
		\end{equation}
		holds in $\{0< r< r_0\}$. 
		Moreover, the bound
		\begin{equation*}
		|\psi| \lesssim v^{-q} 
		\end{equation*}
		holds in $\{v \geq 1\} \cap \{r_0 \leq r \leq 2m\}$ (together with the analogous estimate for the left side of the black hole interior).
	\end{theorem}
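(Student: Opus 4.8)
\emph{Proof strategy.} By the reflection symmetry of $(M,g)$ it suffices to work in the darker shaded region. There the wave equation in $(t,r,\theta,\varphi)$ coordinates reads
\[
-\frac{r^3}{r-2m}\,\rd_t^2\psi+\rd_r\big((r^2-2mr)\rd_r\psi\big)+\Delta_{\mathbb{S}^2}\psi=0 ,
\]
with $\Delta_{\mathbb{S}^2}$ the Laplacian on the unit sphere. The coefficient $\tfrac{r^3}{r-2m}$ vanishes at $r=0$, so writing $\Phi:=(r^2-2mr)\rd_r\psi$ the equation becomes the transport equation $\rd_r\Phi=F$ with $F:=\tfrac{r^3}{r-2m}\rd_t^2\psi-\Delta_{\mathbb{S}^2}\psi$. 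Passing between the $(v,r)$ and $(t,r)$ charts one finds the identity $\Phi=-r(r-2m)\,Y\psi+r^2\,\rd_t\psi$ valid throughout the interior, hence in particular $\Phi\big|_{\Hp}=4m^2\,\rd_t\psi\big|_{\Hp}$. If $\psi=A\log r+B+P$ with $P=\mathcal O(r|\log r|)$ then $\Phi\to-2mA$ as $r\to0$; the plan is therefore to propagate the horizon bounds \eqref{ThmExpAssumptions} to a slice $\Sigma_{r_0}$ with $r_0$ close to $2m$, to prove a weighted energy estimate from $\Sigma_{r_0}$ towards $\{r=0\}$ good enough to make $F$ integrable up to $r=0$, and then to read off $A:=-\tfrac1{2m}\lim_{r\to0}\Phi$, $B:=\lim_{r\to0}(\psi-A\log r)$ and $P:=\psi-A\log r-B$ directly from $\rd_r\Phi=F$ and $\rd_r\psi=\Phi/(r^2-2mr)$.

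The propagation from $\Hp$ to $\Sigma_{r_0}$ rests on a red-shift effect in the interior. Restricting the wave equation to $\Hp$ gives the transport equation $\rd_v(Y\psi)+\tfrac1{4m}Y\psi=\tfrac1{2m}\rd_t\psi+\tfrac1{8m^2}\Delta_{\mathbb{S}^2}\psi$ along $\Hp$; by \eqref{ThmExpAssumptions} its right-hand side is $\mathcal O(v^{-(q+\delta)})$, and since $\tfrac1{4m}>0$ integrating against $e^{v/4m}$ gives $|Y\psi|\lesssim v^{-(q+\delta)}$ on $\Hp$, and likewise for $\tilde Y\psi$ on the left horizon and for the $\rd_t$- and $\Omega_k$-commuted quantities. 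A red-shift energy estimate in the style of Dafermos--Rodnianski, adapted to the interior, in the collar $\{r_0\le r\le2m\}$ --- which is what forces $r_0$ close to $2m$ --- together with commutation by $\rd_t,\Omega_k$ and Sobolev embedding on the slices (carried out on $v$-localised subregions, the ingoing null fluxes being controlled by data at smaller $v$, since $q$ may be small) then yields $|\psi|\lesssim v^{-q}$ and $|\rd_t^{(i)}\Omega_k^{(j)}\psi|\lesssim v^{-(q+\delta)}$ throughout $\{r_0\le r\le2m\}$ for $i+j$ up to the available order --- this is the final assertion of the theorem --- and, because $\tfrac1{1-2m/r_0}$ is now a fixed constant, $|\rd_r\psi|\lesssim(|t|+1)^{-(q+\delta)}$ on $\Sigma_{r_0}$, so $|\Phi(t,r_0,\omega)|\lesssim(|t|+1)^{-(q+\delta)}$.

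In $\{0<r<r_0\}$ I would then prove a weighted energy estimate with respect to the foliation $\{r=\mathrm{const}\}$: near $r=0$ the slices expand in $t$ and contract on $\mathbb{S}^2$, so the deformation terms of a timelike multiplier are not absorbable by a naive Gr\"onwall argument, and the $r$-weights must be chosen so that their leading (badly signed) parts cancel, using the structure of the equation --- the $\rd_t^2\psi$ term being damped by $\tfrac{r^3}{r-2m}$ and the "potential" $\Delta_{\mathbb{S}^2}\psi$ consisting of angular derivatives. Commuting with $\rd_t,\Omega_k$ and applying Sobolev embedding, this furnishes a priori bounds $|\rd_t^{(i)}\Omega_k^{(j)}\psi|\lesssim\big(\log\tfrac1r\big)^{N}$ (times the appropriate $t$-weight for small $i+j$; at high order mere boundedness in $r$ suffices, available since $\psi\in C^\infty(M)$) for $i+j$ up to a fixed order determined by the number of commutations, which is what dictates the "$6$" in \eqref{ThmExpAssumptions}. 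With these in hand $|F|\lesssim\big(\log\tfrac1r\big)^{N}(|t|+1)^{-(q+\delta)}$ is integrable up to $r=0$; integrating $\rd_r\Phi=F$ from $\Sigma_{r_0}$ gives $\Phi(t,r,\omega)\to\tilde A(t,\omega)$ with $|\tilde A|\lesssim(|t|+1)^{-(q+\delta)}$ and $\Phi-\tilde A=\mathcal O\big(r(\log\tfrac1r)^{N}\big)$, and then integrating $\rd_r\psi=\Phi/(r^2-2mr)$ produces $\psi=A\log r+B+P$ with $A=-\tilde A/(2m)$, $|A|\lesssim(|t|+1)^{-(q+\delta)}$, $|B|\lesssim(|t|+1)^{-q}$ ($B$ being dominated by $\psi|_{\Sigma_{r_0}}$), and $|P|\lesssim r|\log r|(|t|+1)^{-q}$. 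Feeding the resulting bounds back into $\rd_r\Phi=F$ sharpens $\big(\log\tfrac1r\big)^{N}$ to $1+|\log r|$ where needed, and running the whole scheme for all $\rd_t^{(i)}\Omega_k^{(j)}\psi$ --- at high order only to obtain smoothness, not decay --- gives uniform-in-$r$ convergence of all $(t,\omega)$-derivatives of $\Phi$ as $r\to0$, hence $A,B\in C^\infty(\R\times\mathbb{S}^2)$ and $P\in C^\infty(M)$.

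The principal obstacle is the weighted energy estimate of the third step: near $r=0$ the $\{r=\mathrm{const}\}$ slices degenerate, the deformation tensor of any timelike multiplier blows up, and there is no timelike Killing field to fall back on, so a workable estimate is possible only because the worst terms have a definite (AVTD-type) sign structure that can be made to cancel against a suitably chosen weight --- morally the same favourable monotonicity that is exploited more crudely for the spherically symmetric part. Alongside this, the $v$-localisation must be threaded consistently through both the near-horizon and the interior estimates, the number of commutations must be balanced against the Sobolev losses so that the six derivatives assumed in \eqref{ThmExpAssumptions} suffice, and the various null boundary terms must be controlled; this bookkeeping is the bulk of the technical work.
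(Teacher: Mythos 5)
Your strategy is sound and its first two stages coincide with the paper's: the red-shift transport identity on $\Hp$ (your coefficients $\tfrac{1}{4m}$, $\tfrac{1}{2m}$, $\tfrac{1}{8m^2}$ are correct in the $(v,r)$ chart and reproduce Proposition \ref{PropRedShiftHorizon} up to convention), the Dafermos--Rodnianski red-shift energy estimate in the collar $\{r_0\le r\le 2m\}$ combined with a Luk-type pigeonhole lemma to convert flux decay into decay of the energies on $\{v=\mathrm{const}\}$ slices (Theorem \ref{StabHorR}), and the $r$-weighted energy estimate towards $\{r=0\}$ whose mechanism you describe correctly -- the paper's weight is $f(r)=r^{3/2}$, chosen so that the non-integrable $\tfrac32\tfrac1r$ trace term of $\pi^{e_0}$ is absorbed and the residual bulk $\tfrac1r\big[4(e_1\psi)^2+|\slashed\nabla\psi|^2\big]$ carries a favourable sign (Proposition \ref{r0to0prop}). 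Where you genuinely diverge is the extraction of $A$, $B$, $P$: the paper renormalises, proving a second weighted estimate for $\psi/\log r$ with multiplier $r^{3/2}|\log r|^4 e_0$, defines $A$ as the $L^2$-limit of $\psi/\log r$, and then runs a further energy estimate for $\psi_1=\psi-A\log r$ (with an inhomogeneity involving $A$) to define $B$ and control $P$; you instead integrate the transport equation $\rd_r\Phi=F$, $\Phi=(r^2-2mr)\rd_r\psi$, using the logarithmic pointwise bounds to make $F$ integrable up to $r=0$. This is exactly the alternative the authors themselves flag in the Remark at the end of Section \ref{sec:fullexp} and is the method they use for the full expansion in Theorem \ref{ThmFullExp}. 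The trade-off is as they state: the ODE route is structurally simpler (no renormalised multiplier, no hierarchy of subtractions) but more wasteful with derivatives, since controlling $F$ pointwise costs two extra commutations plus the Sobolev losses, whereas the renormalised estimates stay at the level of first-order energies; your bound $|F|\lesssim(\log\tfrac1r)^N(|t|+1)^{-(q+\delta)}$ correctly exploits that $F$ contains only \emph{derivatives} of $\psi$, which is what yields the sharper rate $|A|\lesssim(|t|+1)^{-(q+\delta)}$ versus $|B|\lesssim(|t|+1)^{-q}$.

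Two small repairs. First, your claim that at high order ``mere boundedness in $r$ suffices, available since $\psi\in C^\infty(M)$'' is not right as stated: $\{r=0\}$ is not in $M$, so smoothness gives no uniform control as $r\to 0$; the high-order logarithmic bounds must instead come from rerunning the $r^{3/2}e_0$-weighted estimate on the higher commuted quantities, with constants depending on local-in-$t$ norms at $r=r_0$ but carrying no decay weight (this is how the paper obtains smoothness of $A$). Second, the pointwise bound $|\Phi(t,r_0,\omega)|\lesssim(|t|+1)^{-(q+\delta)}$ needs the $J^N$-flux through $\{r=r_0\}$ for two additional $\rd_t,\Omega_k$ commutations plus Sobolev on $\{r=r_0\}\cap\{v_0\le v\le v_0+1\}$; this fits within the six derivatives assumed, but should be counted explicitly in the bookkeeping you defer.
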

	The proof of the preceding theorem is based on deriving energy estimates in physical space. First, we propagate energy bounds on the event horizon, which are derived from \eqref{ThmExpAssumptions}, to some constant $r=r_0$ hypersurface, using the redshift vector field of Dafermos and Rodnianski, \cite{DafRod09a}, \cite{DafRod08}. Then we derive suitable $r$-weighted renormalised energy estimates from $r=r_0$ to $r=0$ to obtain an upper bound on the asymptotic behaviour of general solutions. This defines the principal term in the expansion of $\psi$ via 
	\begin{align*}
	A:\overset{L^2}{=}\lim_{r\rightarrow0}\frac{\psi}{\log r}.
	\end{align*}
	Then the next order terms in the expansion of $\psi$ and the control on the error term are derived by successively subtracting the former order terms from $\psi$ and deriving energy estimates for each difference, which in particular imply (see Section \ref{SecExp}):
	\begin{align*}
	B:\overset{L^2}{=}\lim_{r\rightarrow0}(\psi-A\log r)
	\end{align*}
	and
	\begin{align*}
	P:=\psi-A\log r-B,&&|P|\lesssim r|\log r|(|t|+1)^{-q}.
	\end{align*}
	We would also like to draw the reader's attention to the different decay rates of $A$ and $B$ in $t$: $A$ inherits the faster decay rate in $v$ of the derivatives along the event horizon, while $B$ inherits the slower decay rate of the field itself along the event horizon. The reason for this is that the logarithmic blow-up is triggered and controlled by the first derivatives, cf.\ Theorem \ref{ThmSphSym} and the preceding discussion, as well as the proof of Theorem \ref{ThmSphSym}, in particular \eqref{ControlBlowUpFirstDer}.
	
	The following theorem is an easy consequence of  Theorem \ref{ThmSphSym} a) and Theorem \ref{ThmExp}.
	\begin{theorem}\label{ThmMaster}
		Let $\psi = \psi_{ 0} + \psi_{\ell \geq 1}$ be a smooth solution to the wave equation \eqref{WaveEquation} on the maximal analytic Schwarzschild spacetime, where $\psi_0$ denotes the spherically symmetric part of $\psi$ and $\psi_{\ell \geq 1}$ the projection on the higher spherical harmonics. Assume that the following bounds are satisfied along the right event horizon $\Hp \cap \{ v \geq 1\}$:
		\begin{equation} \label{ThmMasterCond}
		\begin{aligned}
		|\psi| &\lesssim v^{-q}   \\ 
		|\partial_t^{(i)} \Omega_k^{(j)} \psi | &\lesssim v^{-(q +1)} \quad &&\textnormal{ for }  i,j \in \N_0,  1 \leq i + j \leq 6,   k = 1,2,3 \\
		|\partial_t^{(i)} \Omega_k^{(j)} \psi_{\ell \geq 1} | &\lesssim v^{-(q +1 + \varepsilon)} \quad &&\textnormal{ for } i,j \in \N_0, 1 \leq i + j \leq 6,    k = 1,2,3 \\
		\big|\partial_t\psi_0\big| &\gtrsim v^{-(q+1)} &&\textnormal{ for } v \geq v_0  \;,
		\end{aligned}
		\end{equation}
		where $q, \varepsilon>0$, $v_0 \geq 1$. 
		Moreover, we assume that the analogous estimates hold along the left event horizon.  Then there exists a $t_0 > 0$ such that the conclusion of Theorem \ref{ThmExp} holds with $A = A_0 + A_{\ell \geq 1}$, $|A_0| \sim (|t|+1)^{-(q+1)}$ for $|t| \geq t_0$, and $|A_{\ell \geq 1}| \lesssim (|t|+1)^{-(q+1+\varepsilon)}$. Here, $A_0$ is the spherically symmetric part of $A$. In particular, it follows that $|A| \sim (|t|+1)^{-(q+1)}$ for $t$ large enough so that $\psi$ blows up pointwise and logarithmically in a neighbourhood of the endpoints $|t| = \infty$ of the singular hypersurface $\{r=0\}$.
	\end{theorem}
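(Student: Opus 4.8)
The plan is to deduce everything from three applications of Theorem~\ref{ThmExp} --- to $\psi$, to its spherically symmetric part $\psi_0$, and to the higher-harmonic part $\psi_{\ell\geq1}$ --- together with one application of Theorem~\ref{ThmSphSym} to $\psi_0$, reading off the asymptotics of $A=A_0+A_{\ell\geq1}$ from the expansion \eqref{AsymptoticExp}. For $\psi$ the hypotheses of Theorem~\ref{ThmExp} are exactly \eqref{ThmMasterCond} with $\delta=1$; for $\psi_0$ they follow by projecting \eqref{ThmMasterCond} onto $\ell=0$ (only the $\partial_t$-derivatives survive, as $\Omega_k\psi_0=0$), again with $\delta=1$. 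This gives $\psi=A\log r+B+P$ and $\psi_0=A_0\log r+B_0+P_0$, with $|A|,|A_0|\lesssim(|t|+1)^{-(q+1)}$, $|B|,|B_0|\lesssim(|t|+1)^{-q}$ and the stated remainder bounds. For $\psi_{\ell\geq1}$ the only input not listed directly in \eqref{ThmMasterCond} is the zeroth-order horizon bound $|\psi_{\ell\geq1}|\lesssim v^{-(q+1+\varepsilon)}$, which follows from the Poincar\'e inequality on $\mathbb{S}^2$ (gaining because $\psi_{\ell\geq1}$ has no $\ell=0$ part) together with the Sobolev embedding $H^2(\mathbb{S}^2)\hookrightarrow C^0(\mathbb{S}^2)$, both applied to the angular derivatives of $\psi_{\ell\geq1}$ supplied by the third line of \eqref{ThmMasterCond}. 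Theorem~\ref{ThmExp} then applies to $\psi_{\ell\geq1}$ with $q+1+\varepsilon$ in place of $q$ and $\delta=0$, giving $\psi_{\ell\geq1}=A_{\ell\geq1}\log r+B_{\ell\geq1}+P_{\ell\geq1}$ with $|A_{\ell\geq1}|\lesssim(|t|+1)^{-(q+1+\varepsilon)}$. Since $A$ is characterised as the $L^2(\mathbb{S}^2)$-limit of $\psi/\log r$ as $r\to0$, and $L^2$-orthogonal projection onto spherical harmonics is continuous and commutes both with division by the $r$-dependent factor $\log r$ and with this limit, one obtains $A=A_0+A_{\ell\geq1}$ with $A_0$ the spherically symmetric part (and likewise $B=B_0+B_{\ell\geq1}$, $P=P_0+P_{\ell\geq1}$).

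For the matching lower bound $|A_0|\gtrsim(|t|+1)^{-(q+1)}$ for $|t|$ large, I would invoke Theorem~\ref{ThmSphSym} applied to $\psi_0$. By the fourth line of \eqref{ThmMasterCond} and continuity, $\partial_t\psi_0$ is nowhere vanishing --- hence of constant sign --- on $\Hp\cap\{v\geq v_0\}$; after replacing $\psi$ by $-\psi$ if necessary we may assume $\partial_t\psi_0\gtrsim v^{-(q+1)}$ there, and $\partial_t\psi_0$ has some fixed sign with $|\partial_t\psi_0|\gtrsim u^{-(q+1)}$ along the left horizon. Thus the hypotheses of Theorem~\ref{ThmSphSym}(b) hold verbatim for $\psi_0$, and it yields, near each endpoint $|t|=\infty$ of $\{r=0\}$, a bound of the form \eqref{Mihalis}, i.e.\ $\psi_0(t,r)\geq\psi_0(t,r_0)+C|t|^{-(q+1)}\log\frac{r_0}{r}$ in $\{0<r<r_0\}\cap\{|t|\geq t_0\}$, with a possible overall sign reversal near the left endpoint when the two horizon signs disagree. (If the signs agree one can do better: upgrade to $Y\psi_0\gtrsim v^{-(q+1)}$ on the horizon via the red-shift transport equation $\partial_t(Y\psi_0)+\tfrac1{4m}Y\psi_0=\tfrac1{2m}\partial_t\psi_0$ obtained by restricting \eqref{WaveEquation} to $\Hp$ and projecting onto $\ell=0$, propagate the consequent positivity of $\partial_u\psi_0,\partial_v\psi_0$ into a neighbourhood of the horizons, and apply Theorem~\ref{ThmSphSym}(a) to get blow-up on all of $\{r=0\}$.) Substituting \eqref{AsymptoticExp} for $\psi_0$ into \eqref{Mihalis} the $B_0(t)$-terms cancel, leaving $-A_0(t)\log\frac{r_0}{r}+P_0(t,r)-P_0(t,r_0)\geq C|t|^{-(q+1)}\log\frac{r_0}{r}$; dividing by $\log\frac{r_0}{r}>0$ and letting $r\to0$ for fixed $t$ --- so that $\frac{|P_0(t,r)|}{\log(r_0/r)}\to0$ (using $|P_0(t,r)|\lesssim r|\log r|(|t|+1)^{-q}$) and $\frac{|P_0(t,r_0)|}{\log(r_0/r)}\to0$ --- gives $|A_0(t)|\geq C|t|^{-(q+1)}$, hence $|A_0(t)|\gtrsim(|t|+1)^{-(q+1)}$ for $|t|\geq t_0$ (the sign-reversed case near the left endpoint is identical). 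Together with the upper bound from the first step this is $|A_0|\sim(|t|+1)^{-(q+1)}$, and since $|A_{\ell\geq1}|\lesssim(|t|+1)^{-(q+1+\varepsilon)}=o\big((|t|+1)^{-(q+1)}\big)$ we conclude $|A|\sim(|t|+1)^{-(q+1)}$ for $|t|$ large, so $\psi=A\log r+B+P$ blows up logarithmically in a neighbourhood of the endpoints of $\{r=0\}$.

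I expect the main difficulty to be the sign bookkeeping in the application of Theorem~\ref{ThmSphSym}: keeping track of the a priori undetermined signs of $\partial_t\psi_0$ on the two event horizons, and --- should one want the stronger statement that $\psi$ blows up on all of $\{r=0\}$ via part~(a) --- verifying $\partial_u\psi_0,\partial_v\psi_0>0$ along the interior portion of a suitable $\Sigma'$, which requires propagating the near-horizon positivity through the monotonicity of the spherically symmetric wave equation and is the only genuinely delicate step. Everything else --- transferring the hypotheses of Theorem~\ref{ThmExp}, the Poincar\'e/Sobolev estimate for $\psi_{\ell\geq1}$, and extracting $A_0$ from \eqref{AsymptoticExp} via the $r\to0$ limit --- is routine.
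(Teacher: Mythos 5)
Your proposal is correct and follows essentially the same route as the paper: Theorem \ref{ThmExp} for the upper bounds on $A_0$ and $A_{\ell\geq 1}$, then Theorem \ref{ThmSphSym}~b) applied to $\psi_0$ (whose $\partial_t$-derivative has constant sign on each horizon by continuity) combined with the spherically symmetric part of the expansion \eqref{AsymptoticExp}, divided by $\log r$ as $r\to 0$, for the matching lower bound on $A_0$. The only (harmless) deviation is that you upgrade the zeroth-order horizon bound on $\psi_{\ell\geq1}$ via Poincar\'e--Sobolev and apply Theorem \ref{ThmExp} with $q+1+\varepsilon$ and $\delta=0$, whereas the paper simply notes $|\psi_{\ell\geq1}|\lesssim v^{-q}$ by subtraction and takes $\delta=1+\varepsilon$; both yield $|A_{\ell\geq1}|\lesssim(|t|+1)^{-(q+1+\varepsilon)}$.
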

	
	\begin{proof}
		By Theorem \ref{ThmExp}  it follows from the first two conditions in \eqref{ThmMasterCond} that there exists an $r_0 \in (0,2m)$ and functions $A,B \in C^\infty(\R \times \mathbb{S}^2)$ with $|A(t, \omega)| \lesssim (|t|+1)^{-(q + 1)}$ and $|B(t,\omega)| \lesssim (|t|+1)^{-q}$, and $P \in C^\infty(M)$ with $P = \mathcal{O}(r|\log r|(|t|+1)^{-q})$ such that the expansion \eqref{AsymptoticExp} holds in $\{0< r< r_0\}$. We decompose $A$ in its spherically symmetric part $A_0(t) = \int_{\mathbb{S}^2} A(t,\theta,\varphi) \, \sin \theta \, d\theta d\varphi$ and in the reminder $A_{\ell \geq 1} = A - A_0$. It follows immediately that $|A_0| \lesssim (|t|+1)^{-(q+1)}$. Let us now assume without loss of generality that $\partial_t \psi_0$ is positive in the  last assumption in \eqref{ThmMasterCond} along the right event horizon (otherwise replace $\psi$ by $-\psi$). It then follows from Theorem \ref{ThmSphSym} b) that there exists a $C> 0$ such that 
		\begin{equation}\label{LB}
		\psi_0(t,r) \geq \psi_0(t,r_0) + C t^{-(q+1)} \log \frac{r_0}{r}
		\end{equation}
		holds for $t$ large enough and $0<r<r_0$. Moreover, considering the spherically symmetric part of the expansion \eqref{AsymptoticExp} gives
		\begin{equation}\label{SE}
		\psi_0 (t,r) = A_0(t) \log r + B_0(t) + P_0(t,r) 
		\end{equation}
		in the same region. Combining \eqref{LB} and \eqref{SE}, dividing by $\log r$ and letting $r$ go to $0$ gives  $A_0(t) \leq - C t^{-(q+1)}$. Together with the same argument for the left event horizon this shows $|A_0| \gtrsim (|t|+1)^{-(q+1)}$ for $|t|$ large enough, and hence $|A_0| \sim (|t|+1)^{-(q+1)}$ for $|t|$ large enough. Finally we note that the first condition of \eqref{ThmMasterCond} also implies $|\psi_0| \lesssim v^{-q}$ along the event horizon, and thus the same bound also holds for $|\psi_{\ell \geq 1}|$. Together with the third condition of \eqref{ThmMasterCond} it follows from Theorem \ref{ThmExp} that $|A_{\ell \geq 1}| \lesssim (|t|+1)^{-(q+1+\varepsilon)}$ holds.
	\end{proof}

	The methods developed in \cite{AnArGa16a} can be used to show that the assumptions \eqref{ThmMasterCond} of the above theorem are satisfied for solutions arising from generic initial data prescribed on $\Sigma$, see the forthcoming work \cite{AnArGa18} (but also \cite{AnArGa16b}):
	\begin{theorem}[Angelopoulos--Aretakis--Gajic \cite{AnArGa16a,AnArGa16b,AnArGa18}] \label{ThmAAG}
		A solution of the wave equation \eqref{WaveEquation} on the maximal analytic Schwarzschild spacetime arising from  generic smooth initial data on $\Sigma$ with a certain upper bound on the decay towards spacelike infinity $\iota^0$ satisfies the conditions \eqref{ThmMasterCond} in Theorem \ref{ThmMaster} along the right (and left) event horizon $\Hp \cap \{v \geq 1\}$ with $v_0$ sufficiently large, $\varepsilon = 2$, and $q = 2$ or $q = 3$ depending on the upper bound on the decay towards $\iota^0$.
	\end{theorem}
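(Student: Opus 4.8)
The plan is to reduce the statement entirely to the exterior region (the right diamond in Figure \ref{FigSchw}), where the data on $\Sigma$ determines $\psi$ by domain of dependence, and then to invoke the sharp late-time asymptotics of Angelopoulos--Aretakis--Gajic for \eqref{WaveEquation} restricted to $\Hp$. First I would decompose $\psi = \psi_0 + \psi_{\ell \geq 1}$ into spherical harmonics; on the exterior each mode $\psi_\ell$ satisfies a decoupled $1{+}1$-dimensional wave equation with a potential on the $(t,r^*)$ half-plane, and the key input from \cite{AnArGa16a} is a uniform expansion along $\Hp$ of the schematic form $\psi_\ell(v,\omega) = c_\ell\, v^{-p_\ell}\,Y_\ell(\omega) + o(v^{-p_\ell})$, where the $c_\ell$ are explicit bounded linear functionals of the initial data (generalised Newman--Penrose constants, or those of the time-integrated solution), $p_0 = q$ with $q \in \{2,3\}$, and $p_\ell \geq q + 1 + \varepsilon$ for $\ell \geq 1$ with $\varepsilon = 2$ --- the standard gap between the $\ell = 0$ and $\ell \geq 1$ late-time rates in Schwarzschild. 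I would then check that this expansion is stable under differentiation to sufficiently high order (six derivatives suffice): since on $\Hp$ the Killing field $\partial_t$ acts as $\partial_v$, applying $\partial_t$ costs a factor $v^{-1}$, while the $\Omega_k$ annihilate $\psi_0$ and preserve the faster decay of $\psi_{\ell \geq 1}$. This immediately yields the three upper bounds in \eqref{ThmMasterCond}: $|\psi| \lesssim v^{-q}$, $|\partial_t^{(i)}\Omega_k^{(j)}\psi| \lesssim v^{-(q+1)}$ for $1 \leq i+j \leq 6$, and $|\partial_t^{(i)}\Omega_k^{(j)}\psi_{\ell \geq 1}| \lesssim v^{-(q+1+\varepsilon)}$.

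Next I would pin down $q$ according to the assumed decay of the data towards $\iota^0$. If only enough decay is imposed for the $\ell = 0$ Newman--Penrose constant $I_0$ to be well-defined, it is generically nonzero and forces the leading rate $v^{-2}$ along $\Hp$, so $q = 2$; if the stronger decay forcing $I_0 = 0$ is imposed, the leading behaviour is governed instead by the Newman--Penrose constant of the time-integral $\psi^{(1)}$ (with $\partial_t \psi^{(1)} = \psi$), which is generically nonzero and produces the rate $v^{-3}$, so $q = 3$. In either case the spherically symmetric part satisfies $\psi_0(v) = c_0\, v^{-q} + o(v^{-q})$ with $c_0 \neq 0$, whence, differentiating the expansion, $\partial_t \psi_0(v) = -q\, c_0\, v^{-(q+1)} + o(v^{-(q+1)})$ and therefore $|\partial_t \psi_0| \gtrsim v^{-(q+1)}$ for $v \geq v_0$ with $v_0$ large --- the last, and only genuinely two-sided, condition in \eqref{ThmMasterCond}. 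The analogous bounds along the left event horizon follow by running the identical analysis on the left exterior region, equivalently on the portion of $\Sigma$ lying in the left exterior.

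For the genericity statement I would argue that each constant $c_\ell$ (or $c_0^{(1)}$ for the time-integral) is a bounded linear functional on the space of admissible data with the prescribed $\iota^0$-decay, so the single condition $c_0 \neq 0$ needed for the lower bound holds on the complement of a closed hyperplane --- open, dense, and of full measure in every finite-dimensional subfamily --- while the upper bounds hold for \emph{all} data in the class; intersecting these finitely many conditions gives the conclusion. The main obstacle is not the bookkeeping above but the analytic heart borrowed from \cite{AnArGa16a, AnArGa16b, AnArGa18}: establishing the sharp asymptotic expansion along $\Hp$ \emph{with the claimed nonzero leading coefficient} and with enough differentiability. The upper bounds (Price's law, \cite{Price}) are by now classical, but the \emph{lower} bound on $\partial_t \psi_0$ requires ruling out any cancellation at the leading order, which is exactly what the refined physical-space and spectral analysis of Angelopoulos--Aretakis--Gajic delivers; faithfully transcribing it, including the dependence of $q$ on the decay towards $\iota^0$ and the precise meaning of ``generic'', is the delicate part.
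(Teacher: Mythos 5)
The paper gives no proof of Theorem \ref{ThmAAG}: it is imported wholesale from \cite{AnArGa16a,AnArGa16b,AnArGa18}, accompanied only by a remark identifying the dichotomy $q=2$ versus $q=3$ with the (non)vanishing of the Newman--Penrose constant $I_0[\psi]$ and, when $I_0[\psi]=0$, with the nonvanishing of $I_0^{(1)}[\psi]$ for the time integral of $\psi$. Your outline --- reduction to the exterior, mode decomposition, the sharp late-time expansion along $\Hp$ with leading coefficients given by these Newman--Penrose charges, commutation with $\partial_t$ and $\Omega_k$ to get the differentiated upper bounds and the lower bound $|\partial_t\psi_0|\gtrsim v^{-(q+1)}$, and genericity as the complement of the hyperplane where the relevant charge vanishes --- is consistent with that remark and with the cited works, and you correctly place the entire analytic burden (in particular the no-cancellation lower bound and the $\ell\geq 1$ gap giving $\varepsilon=2$) on those references, so there is nothing in the paper to compare your argument against beyond the citation itself.
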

\begin{remark}
The notion of genericity of smooth solutions in the previous theorem and in the subsequent corollary comes from \cite{AnArGa16a,AnArGa16b,AnArGa18}. The initial data on $\Sigma$ are assumed to have finite energies of the form \eqref{EnergyNorm}, for a certain finite number of derivatives and $r$-weights. Then, the case $q=2$ holds true when the Newman-Penrose constant of $\psi$, $I_0[\psi]$, is non-zero. On the other hand, the case $q=3$ corresponds to initial data with $I_0[\psi]=0$, e.g., compactly supported, for which the Newman-Penrose constant $I_0^{(1)}[\psi]$ of the time integral of $\psi$, see $(1.12)$ in \cite{AnArGa16b}, is non-zero.
\end{remark}
	In particular, we can conclude from the above two theorems the following
	\begin{corollary}
		Smooth solutions to the wave equation $\square_g\psi=0$ in the maximal analytic Schwarzschild spacetime, arising from generic initial data prescribed on $\Sigma$, blow up pointwise and logarithmically in a neighbourhood of the endpoints $|t|=+\infty$ of the singular hypersurface $\{r=0\}$ in the interior of the black hole. 
	\end{corollary}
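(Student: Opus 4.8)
The plan is simply to chain the two theorems that immediately precede the corollary, so almost nothing is left to do. First I would invoke Theorem \ref{ThmAAG}: for a solution $\psi$ of \eqref{WaveEquation} arising from generic smooth data on $\Sigma$ — generic in the sense of \cite{AnArGa16a,AnArGa16b,AnArGa18}, and satisfying the required upper bound on its decay towards spacelike infinity $\iota^0$ — the hypotheses \eqref{ThmMasterCond} of Theorem \ref{ThmMaster} hold along both the right and the left event horizon, with $v_0$ sufficiently large, $\varepsilon = 2$, and $q \in \{2,3\}$ (the value of $q$ being dictated by whether the relevant Newman--Penrose constant vanishes). After replacing $\psi$ by $-\psi$ if necessary, I may assume the one-sided lower bound $\partial_t \psi_0 \gtrsim v^{-(q+1)}$ along $\Hp \cap \{v \geq v_0\}$, and the analogous bound along the left horizon.

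Next I would feed these into Theorem \ref{ThmMaster}, which produces an $r_0 \in (0,2m)$, functions $A = A_0 + A_{\ell \geq 1}, B \in C^\infty(\R \times \mathbb{S}^2)$ with $|B| \lesssim (|t|+1)^{-q}$, and $P \in C^\infty(M)$ with $|P| \lesssim r|\log r|(|t|+1)^{-q}$, such that
\[
\psi(t,r,\omega) = A(t,\omega) \log r + B(t,\omega) + P(t,r,\omega) \qquad \text{in } \{0 < r < r_0\},
\]
together with $|A_0| \sim (|t|+1)^{-(q+1)}$ and $|A_{\ell \geq 1}| \lesssim (|t|+1)^{-(q+1+\varepsilon)}$ for $|t| \geq t_0$. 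Since $\varepsilon = 2 > 0$, the spherical mean dominates the higher harmonics for large $|t|$, and the triangle inequality gives, for $|t|$ larger than some $t_1 \geq t_0$,
\[
\tfrac12 |A_0(t)| \leq |A_0(t)| - |A_{\ell \geq 1}(t,\omega)| \leq |A(t,\omega)| \leq |A_0(t)| + |A_{\ell \geq 1}(t,\omega)| \lesssim (|t|+1)^{-(q+1)},
\]
uniformly in $\omega \in \mathbb{S}^2$; hence $|A(t,\omega)| \sim (|t|+1)^{-(q+1)}$, and in particular $A(t,\omega) \neq 0$ for every $\omega$ once $|t| \geq t_1$.

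The last step is to read off pointwise blow-up from the expansion. Fixing $(t,\omega)$ with $|t| \geq t_1$ and letting $r \to 0$: the constant $B(t,\omega)$ is bounded and $P(t,r,\omega) = \mathcal{O}(r|\log r|) \to 0$, while $A(t,\omega) \log r \to \pm\infty$ since $A(t,\omega) \neq 0$. Thus $\psi(t,r,\omega) \to \pm\infty$ as $r \to 0$, at the logarithmic rate $|\psi(t,r,\omega)| \sim (|t|+1)^{-(q+1)}|\log r|$, with sign opposite to that of $\partial_t\psi_0$ along the horizon; this is exactly the claimed pointwise, logarithmic blow-up in $\{|t|\geq t_1\}\cap\{0<r<r_0\}$, a neighbourhood of the endpoints $|t|=+\infty$ of $\{r=0\}$. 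I do not expect any genuine obstacle here: all the analytic content is already carried by Theorems \ref{ThmExp}, \ref{ThmSphSym} and \ref{ThmMaster} and by the cited horizon estimates. The only point worth spelling out is the elementary estimate above — the passage from the separate bounds on $A_0$ and $A_{\ell\geq1}$ to the non-vanishing of $A$ itself, which is where the strict inequality $\varepsilon>0$ enters — after which pointwise blow-up follows at once from \eqref{AsymptoticExp}.
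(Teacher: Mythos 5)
Your proposal is correct and follows exactly the route the paper intends: the corollary is stated as an immediate consequence of Theorems \ref{ThmAAG} and \ref{ThmMaster}, and the only detail you supply beyond what Theorem \ref{ThmMaster} already asserts (namely $|A|\sim(|t|+1)^{-(q+1)}$ for large $|t|$, obtained from the triangle inequality using $\varepsilon>0$) is precisely the step the paper leaves implicit. The only inaccuracy is your closing aside on signs: by Theorem \ref{ThmSphSym} b), a positive lower bound on $\partial_t\psi_0$ along the horizon forces $A_0<0$ and hence $\psi\to+\infty$ as $r\to0$, i.e.\ the blow-up has the \emph{same} sign as $\partial_t\psi_0$, not the opposite one — but this remark plays no role in the conclusion of the corollary.
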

	Note that the above corollary only ensures blow-up of generic solutions to the wave equation near the endpoints $|t| = \infty$ of the singular hypersurface $\{r=0\}$. However, we do have the following 
	\begin{theorem}\label{ThmOpenSet}
		There exists an open set of initial data on $\Sigma$ for the wave equation \eqref{WaveEquation}  on the maximal analytic Schwarzschild spacetime such that the arising solutions blow up logarithmically everywhere on the singular hypersurface $\{r=0\}$.
	\end{theorem}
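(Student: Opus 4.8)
I would obtain the open set as a small neighbourhood, in a topology controlling suitable weighted norms along the event horizons, of the Cauchy data of a fixed spherically symmetric solution to which Theorem \ref{ThmSphSym} a) applies; the point being that the conclusion of Theorem \ref{ThmSphSym} a) holds for \emph{all} $t\in\R$, so -- unlike in Theorem \ref{ThmMaster}, which only invokes part b) -- one obtains blow-up on all of $\{r=0\}$ rather than only near its endpoints. The plan: (1) isolate a finite list of conditions on the data, each open in such a topology, forcing the spherically symmetric part $\psi_0$ of the arising solution $\psi=\psi_0+\psi_{\ell\geq1}$ to satisfy all hypotheses of Theorem \ref{ThmSphSym} a) for a fixed achronal hypersurface $\Sigma'$ (with the \emph{same} positive sign along both event horizons), together with bounds of type \eqref{ThmMasterCond} for $\psi$ and $\psi_{\ell\geq1}$; (2) deduce blow-up on all of $\{r=0\}$; (3) verify the list is non-empty. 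Step (2) is immediate from (1): Theorem \ref{ThmSphSym} a) gives $\psi_0(t,r)\geq\psi_0(t,r_0)+C(|t|+1)^{-(q+1)}\log\tfrac{r_0}{r}$ for all $t\in\R$ and $0<r<r_0$; combining with \eqref{AsymptoticExp} for $\psi_0$, dividing by $\log r$ and letting $r\to0$ exactly as in the proof of Theorem \ref{ThmMaster}, one gets $A_0(t)\leq-C(|t|+1)^{-(q+1)}$ for every $t$. Since the wave operator commutes with the angular Laplacian, $\psi_{\ell\geq1}$ is itself a solution of \eqref{WaveEquation}, and Theorem \ref{ThmExp} applied to it yields $|A_{\ell\geq1}|\lesssim(|t|+1)^{-(q+1+\varepsilon)}$ with implicit constant proportional to the size of the data of $\psi_{\ell\geq1}$; after shrinking the neighbourhood so this constant is $<C$, one has $|A(t,\omega)|\geq|A_0(t)|-|A_{\ell\geq1}(t,\omega)|>0$ on $\R\times\mathbb{S}^2$, so $\psi=A\log r+B+P$ blows up logarithmically at every point of $\{r=0\}$.

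For step (1), the horizon hypotheses of Theorem \ref{ThmSphSym} a) are essentially those of \eqref{ThmMasterCond}: once $\partial_t\psi_0\gtrsim v^{-(q+1)}$ holds along the right event horizon for $v\geq v_0$ (with positive sign; analogously on the left), the companion bound $Y\psi_0\gtrsim v^{-(q+1)}$ is automatic, since restricting \eqref{WaveEquation} to $\Hp$ yields the transport equation $\partial_v(Y\psi_0)+\tfrac1{4m}Y\psi_0=\tfrac1{2m}\partial_t\psi_0$ along $\Hp$, whose solution is bounded below by a positive multiple of $v^{-(q+1)}$ for large $v$ once the source is. I would take the admissible data to be those for which (i) $\psi$ satisfies \eqref{ThmMasterCond}; (ii) $\psi_0$ restricted to each event horizon is strictly increasing in the corresponding affine parameter (equivalently $\partial_t\psi_0>0$ along the right event horizon and $-\partial_t\psi_0>0$ along the left), quantified by a lower bound against a fixed positive weight which is $\sim v^{-(q+1)}$ towards $\iota^+$, bounded below by a positive constant on compact ranges of $v$, and comparable to $e^{v/(4m)}$ near the bifurcation sphere $\mathcal{B}$ where the two event horizons meet -- so that in particular the transversal null derivatives of $\psi_0$ at $\mathcal{B}$ are positive; (iii) the data of $\psi_{\ell\geq1}$ is small in the norm controlling the constant of step (2). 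Each of (i)--(iii) is open in the chosen topology, using that the relevant weighted norms along $\Hp$ and a finite-order jet at $\mathcal{B}$ are controlled by Cauchy data on $\Sigma$ (cf.\ the energy norms \eqref{EnergyNorm}).

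The heart of the matter is that (ii) forces $\partial_u\psi_0>0$ and $\partial_v\psi_0>0$ \emph{throughout} the black hole interior, hence on the interior portion of any admissible $\Sigma'$. I would work in Kruskal null coordinates $(U,V)$, where the interior is $\{U>0,\,V>0\}$, the event horizons are $\{U=0\}$ and $\{V=0\}$, $\mathcal{B}=\{U=V=0\}$, the spherically symmetric wave equation reads $\partial_U\partial_V\psi_0=\tfrac{|\partial_U r|}{r}\partial_V\psi_0+\tfrac{|\partial_V r|}{r}\partial_U\psi_0$ with non-negative coefficients in the interior, and $\partial_u\psi_0,\partial_v\psi_0$ are positive multiples of $\partial_U\psi_0,\partial_V\psi_0$. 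Condition (ii) gives $\partial_V\psi_0>0$ on $\{U=0\}$ and $\partial_U\psi_0>0$ on $\{V=0\}$; restricting the wave equation to $\{U=0\}$ shows $\partial_U\psi_0$ is nondecreasing in $V$ there, which together with its positivity at $\mathcal{B}$ (also from (ii)) gives $\partial_U\psi_0>0$ on all of $\{U=0\}$, and symmetrically $\partial_V\psi_0>0$ on all of $\{V=0\}$; so $\partial_U\psi_0,\partial_V\psi_0>0$ on the whole bifurcate horizon. The monotonicity argument underlying Theorem \ref{ThmSphSym} (a continuity/Gr\"onwall-type argument on the above equation, integrating $\partial_U(\partial_V\psi_0)\geq0$ and $\partial_V(\partial_U\psi_0)\geq0$ outwards from the bifurcate horizon) then propagates the positivity to all of $\{U>0,\,V>0\}$. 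Taking $\Sigma'$ with interior portion a compact spacelike surface joining $\Hp\cap\{v=v_0\}$ to the corresponding point of the left event horizon and lying in $\{r\geq r_0\}$ -- so that $\{0<r<r_0\}$ is to its future and Theorem \ref{ThmSphSym} a) applies with the stated conclusion -- yields $\partial_u\psi_0,\partial_v\psi_0>0$ on that surface, and this persists under perturbations of the data inside the neighbourhood since it was deduced solely from (ii).

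For step (3), I would build one base solution by prescribing spherically symmetric characteristic data on the bifurcate horizon, strictly increasing along each branch, matched at $\mathcal{B}$, with positive transversal null derivatives at $\mathcal{B}$ and $\sim v^{-q}$ decay towards $\iota^+$; solving the characteristic initial value problem yields a smooth solution in the interior with $\partial_u\psi_0,\partial_v\psi_0>0$ everywhere and the required horizon bounds, which extends to a global solution of \eqref{WaveEquation} on the maximal analytic Schwarzschild spacetime (by a backward solve in the two exterior regions from the horizon data together with a choice of radiation field, and then as the maximal development of the induced smooth Cauchy data on $\Sigma$), with vanishing $\psi_{\ell\geq1}$. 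I expect the main obstacle to be precisely this interior-positivity circle of ideas: reconciling the monotonicity propagation with the boundary positivity along the horizons and at $\mathcal{B}$, and checking that the conditions guaranteeing $\partial_u\psi_0,\partial_v\psi_0>0$ are genuinely open for a topology realised by smooth Cauchy data on $\Sigma$ -- together with the bookkeeping ensuring that the constructed base solution really arises from such data and still satisfies the weighted horizon bounds feeding Theorem \ref{ThmExp}.
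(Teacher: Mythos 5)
Your overall strategy is the paper's: produce a spherically symmetric base solution to which Theorem \ref{ThmSphSym} a) applies (so that $A_0(t)\leq -C_0(|t|+1)^{-(q+1)}$ for \emph{all} $t$, not just large $|t|$), and then argue that sufficiently small perturbations of its data still have a nowhere-vanishing coefficient $A$. Your step (2), including the deduction of the sign of $A_0$ from \eqref{AsymptoticExp}, is exactly the paper's. The two places where you deviate are precisely the two places where your write-up has gaps.

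First, the interior positivity. The paper does not propagate signs inward from the bifurcation sphere. It takes a solution from \cite{AnArGa16b} with $\partial_t\psi_0\gtrsim v^{-4}$ on both horizons, upgrades this to $Y\psi_0\gtrsim v^{-4}$ via Proposition \ref{PropRedShiftHorizon}, chooses a Cauchy hypersurface $\Sigma_1$ meeting the horizons at $v=v_1$, and then simply \emph{replaces the spherically symmetric data on the interior portion of $\Sigma_1$} by data with $\partial_u\psi_0>0$, $\partial_v\psi_0>0$ there; by domain of dependence this does not disturb the solution on $\Hp\cap\{v\geq v_1\}$, and the hypotheses of Theorem \ref{ThmSphSym} a) are met with $\Sigma'=\Sigma_1$. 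Your alternative --- positivity of both null derivatives on the whole bifurcate horizon, propagated inward --- is sound as a monotonicity computation, but it forces you to \emph{construct} a base solution with prescribed signs all the way down to the bifurcation sphere $\mathcal{B}$, and your proposed backwards solve from the bifurcate horizon through the exteriors leaves unverified that the induced data on $\Sigma$ lies in the weighted energy class \eqref{EnergyNorm} (the decay towards $\iota^0$ is what feeds the entire \cite{AnArGa18} perturbation machinery). This is a genuine gap, and it is exactly what the paper's interior-modification trick is designed to avoid.

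Second, the perturbation step. You re-apply Theorem \ref{ThmSphSym} a) to the spherically symmetric part of \emph{every} perturbed solution, which requires (i) that a strict pointwise lower bound for $\partial_t\psi_0$ along the entire non-compact horizon, against a weight degenerating like $e^{v/4m}$ at $\mathcal{B}$, be an open condition in the data topology, and (ii) that the constant produced by Theorem \ref{ThmSphSym} a) --- which in its proof arises partly from compactness and strict inequalities on $\Sigma'$ --- be uniform over the neighbourhood. Neither is established in your proposal. The paper sidesteps both: Theorem \ref{ThmSphSym} a) is applied once, to the fixed base solution, yielding the fixed constant in \eqref{C0}; the perturbation is then handled by Corollary \ref{CorContDepA} applied to the difference $\psi-\psi_0$, whose coefficient satisfies $|A_{\psi-\psi_0}|\leq C(\mu(\lambda))(|t|+1)^{-4}$ with $C(\mu)\to0$, so no sign condition on the perturbed solution is ever needed and the $\ell=0$ and $\ell\geq1$ parts of the perturbation are treated simultaneously. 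If you want to salvage your route, the cleanest fix is to adopt both of the paper's devices: modify the interior data on a late hypersurface rather than working at $\mathcal{B}$, and replace your openness conditions by the single continuity statement of Corollary \ref{CorContDepA}.
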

	Let us add here that we consider smooth initial data on $\Sigma$ such that an energy quantity is finite, see Section \ref{SecOpenSet} for more details.
	The openness of the initial data is measured with respect to the topology induced by this energy.
	
	The proof of Theorem \ref{ThmOpenSet} is given in Section \ref{SecOpenSet}. It proceeds by first showing the existence of a spherically symmetric solution such that the assumptions of Theorem \ref{ThmSphSym} a) are satisfied, and hence the solution blows up on all of the singular hypersurface $\{r= 0\}$. The second step makes use of the fact that small perturbations of the initial data of this spherically symmetric solution also lead to small perturbations of $A$ in the expansion \ref{AsymptoticExp} -- so that small enough perturbations still blow up on all of $\{r=0\}$.
	
	In Section \ref{sec:fullexp} we give a proof for the full asymptotic expansion of linear waves in the interior. This time we use a different argument, based on considering the wave equation as an ODE in $r$ and treating the spatial derivatives as error terms. It is simpler than using a separate energy argument for each order in the expansion, as in the proof of (\ref{AsymptoticExp}), but at the same time more wasteful with derivatives.
	\begin{theorem}\label{ThmFullExp}
		Let $\psi$ be a smooth solution to the wave equation in the interior region $\{0<r< 2m\}$. Then given $N\in\mathbb{N}$, $N\ge0$, $\psi$ can be represented by the following expansion:
		\begin{align}\label{fullasymexp}
		\psi(r,t,\omega)=\sum_{n=0}^N\zeta_n(t,\omega)r^n\log r
		+\sum_{n=0}^N\eta_n(t,\omega)r^n+R_N(t,r,\omega)
		\end{align}
		where $R_N$ satisfies $|\partial_t^{(i)}\Omega_k^{(j)}R_N|\lesssim r^{N+1}|\log r|$, $|\partial_r\partial_t^{(i)}\Omega_k^{(j)}R_N|\lesssim r^N|\log r|$, for all $i,j$, and $\zeta_n,\eta_n$ are smooth functions given by the recurrence relations:
		\begin{align}\label{zetanetan}
		\notag\zeta_{n+1}:=&\,\frac{n(n+1)\zeta_n+\Delta_{\mathbb{S}^2}\zeta_n+\sum_{l=0}^{n-3}(\frac{1}{2m})^l\frac{\partial_t^2\zeta_{n-3-l}}{2m}}{2m(n+1)^2},\\
		\eta_{n+1}:=&\,\frac{(2n+1)\zeta_n+n(n+1)\eta_n-2m(2n+2)\zeta_{n+1}+\Delta_{\mathbb{S}^2}\eta_n+\sum_{l=0}^{n-3}(\frac{1}{2m})^l\frac{\partial_t^2\eta_{n-3-l}}{2m}}{2m(n+1)^2},
		\end{align}
		for every $n=0,\ldots,N-1$, where $\zeta_0=A,\eta_0=B$ and by convention $\zeta_l=\eta_l=0$, for $l<0$.
	\end{theorem}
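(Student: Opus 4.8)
The plan is to treat the wave equation $\Box_g\psi=0$ in the interior as a second-order ODE in $r$ with the angular Laplacian $\Delta_{\mathbb{S}^2}$ and the time derivatives $\partial_t^2$ appearing as source terms, and then to plug in a formal expansion of the form \eqref{fullasymexp} and match coefficients of $r^n$ and $r^n\log r$ to extract the recurrence \eqref{zetanetan}. First I would write $\Box_g\psi=0$ explicitly in $(t,r,\theta,\varphi)$ coordinates using the metric $g=-(1-\tfrac{2m}{r})dt^2+(1-\tfrac{2m}{r})^{-1}dr^2+r^2 g_{\mathbb{S}^2}$; with $\sqrt{-\det g}=r^2\sin\theta$ this gives
\begin{equation*}
-\partial_r\Big(r^2\big(1-\tfrac{2m}{r}\big)\partial_r\psi\Big)-\frac{r^2}{1-\tfrac{2m}{r}}\partial_t^2\psi+\Delta_{\mathbb{S}^2}\psi=0\;,
\end{equation*}
i.e., multiplying through and rearranging, an equation of the schematic form $\partial_r\big((2m-r)r\,\partial_r\psi\big)=\Delta_{\mathbb{S}^2}\psi-\tfrac{r^3}{2m-r}\partial_t^2\psi$. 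Since $0<r<2m$ we may expand $\tfrac{1}{2m-r}=\tfrac{1}{2m}\sum_{l\ge0}(\tfrac{r}{2m})^l$, which is exactly what produces the geometric-series sums $\sum_l(\tfrac{1}{2m})^l$ with the shift by $3$ (from the $r^3$) in \eqref{zetanetan}.

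The core computation is then purely algebraic: substitute $\psi=\sum_{n\ge0}\zeta_n r^n\log r+\sum_{n\ge0}\eta_n r^n$ into this ODE, compute $\partial_r\psi$ and $(2m-r)r\,\partial_r\psi$ and its $r$-derivative, and collect the coefficient of $r^n$ and of $r^n\log r$ separately (using that $1,\log r$ are ``independent'' in the relevant sense — more precisely one tracks the two families of terms, those with a $\log r$ factor and those without, noting that $\partial_r(r^k\log r)=k r^{k-1}\log r+r^{k-1}$ feeds the $\log$-part into the non-$\log$ part but never vice versa). Matching the $\log r$ terms gives a closed recursion for $\{\zeta_n\}$ alone, and matching the non-$\log$ terms gives $\{\eta_{n+1}\}$ in terms of $\eta_n,\ldots$ and the already-determined $\zeta$'s — this is the structure visible in \eqref{zetanetan}, with the extra $-2m(2n+2)\zeta_{n+1}$ and $(2n+1)\zeta_n$ terms in the $\eta$-recursion coming precisely from the $\partial_r$ hitting the $\log r$ factor. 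I would carry out this bookkeeping carefully to verify the stated coefficients, in particular the factor $2m(n+1)^2$ in the denominators, which arises as the coefficient of the leading $r^n$ term of $\partial_r((2m-r)r\,\partial_r(r^{n+1}))$.

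Having pinned down the coefficients $\zeta_n,\eta_n$ as smooth functions of $(t,\omega)$ via the recurrence (smoothness is immediate since each is a finite combination of $\Delta_{\mathbb{S}^2}$- and $\partial_t^2$-derivatives of $\zeta_0=A,\eta_0=B$, which are smooth by Theorem~\ref{ThmExp}), the remaining task is the rigorous remainder estimate. Define $R_N:=\psi-\sum_{n=0}^N(\zeta_n r^n\log r+\eta_n r^n)$. By construction $R_N$ solves an inhomogeneous ODE in $r$ whose source, after the cancellations, is $\mathcal{O}(r^{N}|\log r|)$ in the relevant norm together with contributions from $\Delta_{\mathbb{S}^2}R_N$ and $\tfrac{r^3}{2m-r}\partial_t^2 R_N$; integrating this ODE twice from $r_0$ down to $r$ — using the expansion \eqref{AsymptoticExp} of Theorem~\ref{ThmExp} (and its differentiated version) as the ``initial data'' at a small $r_0$, applied to $\psi$ and to its finitely many $\partial_t,\Omega_k$-derivatives — yields $|R_N|\lesssim r^{N+1}|\log r|$ and $|\partial_r R_N|\lesssim r^N|\log r|$, and the same argument applied to $\partial_t^{(i)}\Omega_k^{(j)}\psi$ (which also solves the wave equation, since $\partial_t$ and $\Omega_k$ are Killing) gives the stated bounds on $\partial_t^{(i)}\Omega_k^{(j)}R_N$. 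One must proceed by induction on $N$ so that at step $N$ the estimates from step $N-1$ supply control on the angular and time-derivative terms appearing in the source. The main obstacle I anticipate is organising this induction cleanly: the ODE for $R_N$ is only mildly singular at $r=0$ (the coefficient $(2m-r)r$ degenerates linearly), so the double integration requires care near $r=0$ to see that the $\log$-losses accumulate only as a single power of $|\log r|$ rather than $|\log r|^{N}$, and one must track that differentiating in $t$ or $\omega$ costs no extra powers of $r$ — i.e., the weights in the source are ``$N$-good'' uniformly in the number of commuted derivatives (up to the derivative budget inherited from Theorem~\ref{ThmExp}). Once the weighted estimates are set up as a closed induction, the rest is routine Grönwall-type / direct integration in $r$.
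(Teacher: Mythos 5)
Your overall strategy --- rewriting $\square_g\psi=0$ as a second-order ODE in $r$ with $\Delta_{\mathbb{S}^2}$ and $\partial_t^2$ treated as sources, extracting the recurrence \eqref{zetanetan} by separating the $\log$-carrying and $\log$-free families of terms, using Theorem \ref{ThmExp} as the base case $N=0$, commuting with the Killing fields $\partial_t,\Omega_k$, and running an induction on $N$ for the remainder --- is exactly the strategy of Section \ref{sec:fullexp}. The one place where your argument as written would fail is the anchoring of the remainder estimate. You propose to integrate the ODE for $R_N$ ``twice from $r_0$ down to $r$'', using data at a small $r_0$. But the operator $\partial_r\big(r(2m-r)\partial_r\,\cdot\,\big)$ has the homogeneous solution $\int^r \frac{ds}{s(2m-s)}\sim\frac{1}{2m}\log r$ near $r=0$, so with generic Cauchy data at $r=r_0$ the double integration produces a remainder of size $O(|\log r|)$, not $O(r^{N+1}|\log r|)$; indeed, integrating from $r_0$ with the exact data of $\psi$ merely reconstructs $\psi$ and yields no smallness for $R_N$. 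The correct anchoring, which is what the paper does, is at $r=0$: the inductive hypothesis $|\partial_t^{(i)}\Omega_k^{(j)}R_N|\lesssim r^{N+1}|\log r|$, $|\partial_r\partial_t^{(i)}\Omega_k^{(j)}R_N|\lesssim r^{N}|\log r|$ for all $i,j$ is what bounds the source terms $\frac{1}{r^2}\Delta_{\mathbb{S}^2}R_N$ and $(\frac{2m}{r}-1)^{-1}\partial_t^2R_N$ by $O(r^{N-1}|\log r|)$ \emph{and} guarantees $r\partial_r R_N\to0$ as $r\to0$, so that one integrates $\partial_r(r\partial_r R_N)$ from $0$ to $r$ with vanishing boundary term, divides by $r$, and integrates from $0$ again. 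This also dissolves your worry about $|\log r|^N$ losses: each step is a single explicit integration of $s^{N}\log s$, not an iterated Gronwall argument.

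A related structural point: in the paper the next coefficients are not obtained by formal matching but are \emph{defined} as the limits $\zeta_{N+1}=\lim_{r\to0}R_N/(r^{N+1}\log r)$ and $\eta_{N+1}=\lim_{r\to0}\big(R_N-\zeta_{N+1}r^{N+1}\log r\big)/r^{N+1}$ (see \eqref{zetaN+1}, \eqref{etaN+1}), whose existence is exactly what the integration from $r=0$ delivers, with the recurrence \eqref{zetanetan} then read off from the explicit antiderivatives; the final improvement of the bounds on $R_{N+1}$ is your ``ODE for the new remainder'' step, again integrated from $r=0$ (see \eqref{ODEvN+1}--\eqref{vN+1bounds}). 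Your plan of first defining $\zeta_n,\eta_n$ by the recurrence and then estimating the difference is logically equivalent once the anchoring is repaired. (Minor: your displayed wave equation has a sign slip --- the term $\partial_r\big(r^2(1-\frac{2m}{r})\partial_r\psi\big)$ enters $r^2\square_g\psi$ with a plus sign, which flips the sign of the $\partial_t^2$ source relative to what you wrote; since that term only enters the recurrence from order $n\ge3$ on, it is exactly the kind of bookkeeping you flagged for verification, not a structural issue.)
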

	Finally, in Section \ref{isoS} we first demonstrate via a backwards construction argument that for any pair of smooth functions $A(t,\omega),B(t,\omega)$ with certain decay in $t$, there exists a smooth solution $\psi$ to the wave equation in the {\it interior} with the same decay in $t$, having the asymptotic expansion (\ref{AsymptoticExp}) and hence generating numerous blow-up examples, see Theorem \ref{backthm}. In fact, we subsequently show that the map $\psi\leftrightarrow(A,B)$ induced by our forwards-backwards analysis is an isomorphism.
	\begin{theorem}\label{ThmIso}
		The map
		\begin{align*}
		\mathcal{S}:\{\psi \;\mathrm{smooth}, \square_g\psi=0\;\mathrm{in}\;\{0<r\leq r_0\},\;|\partial_t^{(i)}\Omega^{(j)}_k\psi(r=r_0)|\lesssim (1+|t|)^{-q},\;k=1,2,3, \;\mathrm{for \;all}\;i,j\}\\
		\notag\to\{A(t,\omega),B(t,\omega)\;\mathrm{smooth},\;|\partial_t^{(i)}\Omega_k^{(j)}A|,|\partial_t^{(i)}\Omega_k^{(j)}B|\lesssim(1+|t|)^{-q},\;k=1,2,3,\;\mathrm{for\;all}\;i,j\}
		\end{align*}
		given by 
		\begin{align}\label{S}
		\mathcal{S}(\psi)=\big(\lim_{r\rightarrow0}\frac{\psi}{\log r}, \lim_{r\rightarrow0} [\psi-(\lim_{r\rightarrow0}\frac{\psi}{\log r})\log r]\big)=(A,B)
		\end{align}
		is an isomorphism, for any $0<r_0<2m$.
	\end{theorem}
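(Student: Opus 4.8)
The plan is to establish that $\mathcal{S}$ is both well-defined and a bijection by combining the forwards results (Theorem \ref{ThmExp}, which gives that $\mathcal{S}$ maps into the target space and is well-defined) with the backwards construction (Theorem \ref{backthm}, referenced in the text), and then verifying the two compositions are identities. First I would note that Theorem \ref{ThmExp} already does most of the work for well-definedness: given $\psi$ smooth solving $\square_g\psi = 0$ in $\{0<r\le r_0\}$ with the stated decay of $\partial_t^{(i)}\Omega_k^{(j)}\psi$ along $\{r=r_0\}$, one first propagates this decay into the region $\{0<r<r_0\}$ via the energy estimates of Section \ref{SecExp} (the $r$-weighted renormalised estimates from $r=r_0$ to $r=0$), so that the hypotheses feeding into the expansion \eqref{AsymptoticExp} hold; then the limits defining $A$ and $B$ in \eqref{S} exist (a priori in $L^2(\mathbb{S}^2)$, then pointwise/smoothly using the smoothness assertion of Theorem \ref{ThmExp}), and the decay bounds $|\partial_t^{(i)}\Omega_k^{(j)}A|, |\partial_t^{(i)}\Omega_k^{(j)}B| \lesssim (1+|t|)^{-q}$ are exactly the bounds provided there (after commuting the equation with $\partial_t$ and $\Omega_k$, which is legitimate since these are Killing). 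So $\mathcal{S}$ is a well-defined map between the stated spaces.

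Next I would address surjectivity and injectivity together by exhibiting a two-sided inverse. Surjectivity is Theorem \ref{backthm}: given any pair $(A,B)$ in the target space, the backwards construction produces a smooth solution $\psi$ in $\{0<r\le r_0\}$ (or in a sub-collar $\{0<r\le r_1\}$, extended by solving the equation forwards/backwards in the $r$-variable treated as an ODE, as in the proof of Theorem \ref{ThmFullExp}) with the decay $|\partial_t^{(i)}\Omega_k^{(j)}\psi| \lesssim (1+|t|)^{-q}$ and having the asymptotic expansion \eqref{AsymptoticExp} with precisely those leading coefficients; hence $\mathcal{S}(\psi) = (A,B)$. This shows $\mathcal{S}$ is onto. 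For injectivity: suppose $\mathcal{S}(\psi_1) = \mathcal{S}(\psi_2) = (A,B)$ and set $\phi := \psi_1 - \psi_2$. Then $\phi$ is a smooth solution with $\mathcal{S}(\phi) = (0,0)$, i.e.\ $\lim_{r\to 0}\phi/\log r = 0$ and $\lim_{r\to 0}\phi = 0$ (in $L^2$, hence genuinely by smoothness of the expansion). Plugging $A=B=0$ into the recurrence \eqref{zetanetan} forces all $\zeta_n = \eta_n = 0$, so by Theorem \ref{ThmFullExp} $\phi = R_N$ satisfies $|\partial_t^{(i)}\Omega_k^{(j)}\phi| \lesssim r^{N+1}|\log r|$ and $|\partial_r \partial_t^{(i)}\Omega_k^{(j)}\phi| \lesssim r^N|\log r|$ for every $N$; that is, $\phi$ and all its $(t,\omega)$- and $r$-derivatives vanish to infinite order at $\{r=0\}$. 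I would then feed this into the energy estimates of Section \ref{SecExp} run in the forward direction in $r$ (from $r=0$ towards $r=r_0$): the renormalised $r$-weighted energy of $\phi$ on $\{r=\varepsilon\}$ tends to $0$ as $\varepsilon \to 0$ by the infinite-order vanishing, and the estimate then propagates this to give that the energy of $\phi$ on every $\{r=r'\}$, $0<r'\le r_0$, vanishes — equivalently, $\phi \equiv 0$ on $\{0<r\le r_0\}$ by a unique continuation / Grönwall argument. Hence $\psi_1 = \psi_2$.

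A cleaner way to organise the last two paragraphs, which I would actually prefer, is: define $\mathcal{R}:(A,B)\mapsto\psi$ by the backwards construction of Theorem \ref{backthm}, check $\mathcal{S}\circ\mathcal{R} = \mathrm{id}$ directly (this is immediate, since the construction is designed to realise the expansion \eqref{AsymptoticExp} with leading terms $A\log r + B$, and \eqref{S} just reads these off), and then check $\mathcal{R}\circ\mathcal{S} = \mathrm{id}$, which is precisely the injectivity statement $\phi := \psi - \mathcal{R}(\mathcal{S}(\psi)) \equiv 0$ established via the infinite-order-vanishing-plus-energy-estimate argument above. The point is that $\mathcal{S}$ and $\mathcal{R}$ are manufactured from the same expansion, so the only non-formal content is the uniqueness: a solution in the collar $\{0<r\le r_0\}$ is determined by its full asymptotic data at $\{r=0\}$, and vanishing of that data forces vanishing of the solution.

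\textbf{Main obstacle.} The genuinely substantive step is the injectivity argument, i.e.\ showing that a smooth solution vanishing to infinite order at $\{r=0\}$ must vanish identically in the collar. Theorem \ref{ThmFullExp} gives the infinite-order vanishing for free once $\mathcal{S}(\phi)=0$ (because the recurrence \eqref{zetanetan} propagates zero initial data), but converting "$\phi$ flat at $r=0$" into "$\phi\equiv 0$" requires a quantitative energy estimate with good $r$-weights near $r=0$ — essentially re-running the machinery of Section \ref{SecExp}, being careful that the estimate can be \emph{closed starting from $r=0$} with zero boundary contribution rather than from $r=r_0$. One must ensure the renormalised energy is coercive and that the error terms (from the angular Laplacian and from $\partial_t^2$, which appear with the degenerate Schwarzschild-interior weights) are absorbable by a Grönwall argument in $r$ on $(0,r_0]$; the decay-in-$t$ weights play no essential role here and can be tracked trivially since $\partial_t$ is Killing. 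A secondary, more bookkeeping-level obstacle is making sure the backwards construction of Theorem \ref{backthm} really does land in the \emph{same} function space with the \emph{same} $r_0$ (or arguing that the value of $r_0\in(0,2m)$ is immaterial, which follows by propagating the equation in $r$ on any subinterval of $(0,2m)$), so that $\mathcal{S}$ and $\mathcal{R}$ are honestly inverse maps between a fixed pair of spaces.
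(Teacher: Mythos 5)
Your proposal is correct and follows essentially the same route as the paper: well-definedness from Theorem \ref{ThmExp}, surjectivity from the backwards construction of Theorem \ref{backthm}, and injectivity by showing that a solution whose asymptotic data at $\{r=0\}$ vanishes must vanish identically, via an $r$-weighted energy estimate anchored at the singularity. The paper merely packages the last step more economically: it invokes Theorem \ref{ThmFullExp} only for $N=2$ (so the difference of two solutions with the same $(A,B)$ is $O(r^3|\log r|)$, which already kills the $r^{-5/2}$-weighted energy at $r=0$) and then cites the uniqueness clause of Theorem \ref{backthm} (Corollary \ref{thm:iso}) rather than establishing infinite-order vanishing and re-running the estimates from scratch.
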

	We remark that there is a derivative loss in the energy estimates in both directions of our forwards-backwards analysis and that is precisely the reason why we define the map $\mathcal{S}$ above between sets of smooth functions and not energy spaces containing finitely many derivatives. 
	
	We conclude the introduction with the following remark:
	\begin{remark}\label{rem:symexp}
		In the canonical $(t,r)$-coordinates, the wave equation \eqref{WaveEquation} under spherical symmetry reads\footnote{See also \eqref{sswaveeq} in the next section.}
		\begin{equation*}
		-(1 - \frac{2m}{r})^{-1} \partial_t^2 \psi + (1 - \frac{2m}{r})\partial^2_r\psi + \frac{2}{r}(1 - \frac{m}{r}) \partial_r \psi = 0 \;.
		\end{equation*}
		Looking for solutions independent of $t$, the equation takes on the form
		\begin{equation} \label{Frob}
		\frac{d^2}{dr^2} \psi (r) + \frac{2(r-m)}{r(r-2m)} \frac{d}{dr} \psi (r) = 0 \;.
		\end{equation}
		It now follows from the Frobenius method that \eqref{Frob} has a fundamental system of solutions given by $$\{h_1(r), h_2(r) + c (\log r) \cdot h_1(r)\},$$ where $h_1(r)$ and $h_2(r)$ are analytic near $r=0$ and $c \neq 0$. It is easy to see that one can choose $h_1(r) =1$. This does not only demonstrate the logarithmic blow-up, but also motivates the form of the asymptotic expansion \eqref{AsymptoticExp} of a general solution (without symmetry assumptions) in Theorem \ref{ThmExp} and of \eqref{fullasymexp} in Theorem  \ref{ThmFullExp}. 
	\end{remark}

	\section{Spherically symmetric waves in the black hole interior: proof of Theorem \ref{ThmSphSym}}\label{SecSphericalSymmetry}

	This section provides the proof of Theorem \ref{ThmSphSym}. We start with the proof of part b) of Theorem \ref{ThmSphSym} and establish a series of results which, when put together, will accomplish the proof. The proof of part a) of Theorem \ref{ThmSphSym} follows thereafter easily from the methods developed.
	
	Assuming the wave $\psi$ to be spherically symmetric, the wave equation \eqref{WaveEquation} reads in  the $(u,v)$-coordinates introduced in Section \ref{SecNotation}
	\begin{equation}\label{sswaveeq}
	\partial_u\partial_v\psi+\frac{\partial_ur}{r}\partial_v\psi+\frac{\partial_vr}{r}\partial_u\psi=0 \;.
	\end{equation}
	It is easily seen that the spherically symmetric wave equation \eqref{sswaveeq} is equivalent to
	\begin{equation}
	\label{WaveEq2}
	\partial_u (r \partial_v \psi) = -\frac{\partial_v r}{r} r \partial_u \psi
	\end{equation}
	as well as to
	\begin{equation}
	\label{WaveEq3}
	\partial_v (r \partial_u \psi) = -\frac{\partial_u r}{r} r \partial_v \psi \;.
	\end{equation}
	Let us remark here that the monotonicity property for spherically symmetric waves in the black hole interior, mentioned in the beginning of Section \ref{SecMainThm}, can now be easily understood from \eqref{WaveEq2} and \eqref{WaveEq3} by virtue of $\partial_v r$ and $ \partial_u r$ being negative in the black hole interior.
	
	We now begin by showing that the lower bound \eqref{ThmSphSymAssumptionLowerBound} in Theorem \ref{ThmSphSym} b) also implies eventually a lower bound for $Y\psi$ along the event horizon $\Hp$. This is a manifestation of the red-shift effect on $\Hp$. In order to ease notation, we set $p = q+1$ in this section. We also remind the reader that along the right event horizon we have $\partial_v = \partial_t$ and that $\partial_v$ in $(u,v)$-coordinates extends continuously to $\partial_v$ in $(v,r)$-coordinates on the right event horizon. Thus the notation $\partial_v$ on $\mathcal{H}^+$ is unambiguous.
	
	\begin{proposition} \label{PropRedShiftHorizon}
		Assume $\psi$ is a smooth solution to the spherically symmetric wave equation \eqref{sswaveeq} which satisfies  $\partial_v\psi|_{\Hp} \geq C_0 v^{-p}$ along the event horizon for $v \geq v_0$, where $v_0 \geq 1$ is some constant and $p>0$. For any $\varepsilon >0$ there then exists a $v_1 \geq 1$ such that
		\begin{equation*}
		Y\psi|_{\Hp} (v) \geq (2C_0 - \varepsilon) v^{-p}
		\end{equation*}
		holds along the event horizon for $v \geq v_1$.
	\end{proposition}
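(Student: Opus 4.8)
The plan is to reduce the statement to an elementary Duhamel estimate for the ordinary differential equation that $Y\psi$ satisfies along $\Hp$ --- the red-shift ODE. The first step is to derive this ODE. Restricting $\Box_g\psi=0$ to a spherically symmetric $\psi$ in the ingoing $(v,r)$-coordinates of \eqref{Metricvr} --- in which $g^{vv}=0$, $g^{vr}=1$, $g^{rr}=1-\tfrac{2m}{r}$ and $\sqrt{-\det g}=r^2\sin\theta$ --- yields
\[
2r^2\,\partial_v\partial_r\psi + 2r\,\partial_v\psi + (2r-2m)\,\partial_r\psi + r(r-2m)\,\partial_r^2\psi = 0 \;.
\]
On the hypersurface $\{r=2m\}$ the last term drops out, and since $\partial_v$ is tangent to $\Hp$ one has $\partial_v\partial_r\psi|_{\Hp}=-\partial_v(Y\psi)|_{\Hp}$ and $\partial_r\psi|_{\Hp}=-Y\psi|_{\Hp}$; dividing by $2m$, the identity becomes
\[
\partial_v(Y\psi) + \frac{1}{4m}\,Y\psi = \frac{1}{2m}\,\partial_v\psi \qquad \text{on } \Hp \;.
\]
Equivalently one may start from \eqref{WaveEq3}, substitute $r\,\partial_u\psi=-\tfrac12(r-2m)\,Y\psi$ using \eqref{DefY}, divide the resulting identity by the factor $(r-2m)$ --- non-zero throughout the interior --- and then let $r\to 2m$. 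Either way, the essential structural feature is that the coefficient $\kappa:=\tfrac{1}{4m}$, the surface gravity of $\Hp$, is strictly positive: this is the red-shift.

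Next I would solve this linear ODE. With integrating factor $e^{v/(4m)}$, integration from $v_0$ gives, for $v\ge v_0$,
\[
Y\psi|_{\Hp}(v) = e^{-(v-v_0)/(4m)}\,Y\psi|_{\Hp}(v_0) + \frac{1}{2m}\int_{v_0}^{v} e^{-(v-w)/(4m)}\,\partial_v\psi|_{\Hp}(w)\,dw \;.
\]
The boundary term decays exponentially in $v$ and is therefore $o(v^{-p})$ as $v\to\infty$.

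It then remains to bound the Duhamel integral from below. Inserting the hypothesis $\partial_v\psi|_{\Hp}(w)\ge C_0\,w^{-p}$ (where $C_0>0$, the relevant case, so that the inequalities retain their direction after multiplying by $C_0$) and substituting $s=v-w$,
\begin{align*}
\frac{1}{2m}\int_{v_0}^{v} e^{-(v-w)/(4m)}\,\partial_v\psi|_{\Hp}(w)\,dw
&\ge \frac{C_0}{2m}\int_0^{v-v_0} e^{-s/(4m)}\,(v-s)^{-p}\,ds \\
&\ge \frac{C_0}{2m}\,v^{-p}\int_0^{v-v_0} e^{-s/(4m)}\,ds = 2C_0\,v^{-p}\bigl(1-e^{-(v-v_0)/(4m)}\bigr)\;,
\end{align*}
where I used $(v-s)^{-p}\ge v^{-p}$ for $0\le s<v$, $p>0$. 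The mechanism is simply that the red-shift kernel $e^{-s/(4m)}$ has total mass $4m$, which combines with the forcing coefficient $\tfrac{1}{2m}$ to produce the factor $2$; since $\partial_v\psi$ varies slowly near $w=v$ for large $v$, the Duhamel integral effectively reproduces $2C_0\,v^{-p}$.

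Combining the last two displays gives $Y\psi|_{\Hp}(v)\ge 2C_0\,v^{-p} - \bigl(2C_0\,v^{-p} + |Y\psi|_{\Hp}(v_0)|\bigr)e^{-(v-v_0)/(4m)}$. Since both $e^{-(v-v_0)/(4m)}$ and $v^{p}e^{-(v-v_0)/(4m)}$ tend to $0$ as $v\to\infty$, for every $\varepsilon>0$ there is a $v_1\ge v_0\ge 1$ such that the error term is bounded below by $-\varepsilon\,v^{-p}$ for all $v\ge v_1$, which is the assertion. The only step that demands any care is the derivation of the red-shift ODE --- fixing the sign and the value of $\kappa$, and handling the degeneration of the $(u,v)$-chart at $\Hp$ (respectively, the transversal $\partial_r$-derivatives in the $(v,r)$-route) --- and even that is a short computation; the remaining Duhamel/Grönwall steps are entirely routine.
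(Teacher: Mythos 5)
Your proof is correct and follows the same overall strategy as the paper's: derive a linear ODE of red-shift type for $Y\psi$ along $\Hp$, solve it with an integrating factor, and lower-bound the resulting Duhamel integral. Two details differ, both in your favour. First, you derive the ODE directly in $(v,r)$ coordinates, obtaining $\partial_v(Y\psi)+\frac{1}{4m}Y\psi=\frac{1}{2m}\partial_v\psi$ on $\Hp$, with the damping coefficient equal to the surface gravity $\frac{1}{4m}$; the paper instead manipulates \eqref{WaveEq3} in $(u,v)$ coordinates and arrives at $\partial_v(Y\psi)=\frac{1}{m}\partial_v\psi-\frac{1}{2m}Y\psi$, which differs from yours by an overall factor of $2$ traceable to the paper's normalisation $\partial_v r=\partial_u r=1-\frac{2m}{r}$ (inconsistent with $r^*=\frac{1}{2}(u+v)$ and with \eqref{DefY}, which your computation respects). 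Since only the ratio of the forcing coefficient to the damping rate enters the asymptotic constant, both ODEs yield the same limit $2C_0 v^{-p}$, so the discrepancy is harmless for the statement, but your coefficients are the correct ones. Second, where the paper invokes Lemma \ref{AuxLemma} (a two-sided asymptotic $e^{-Dv}\int_{v_0}^v e^{Dv'}(v')^{-p}\,dv'=\frac{1}{D}v^{-p}+\mathcal{O}(v^{-(p+1)})$ obtained by repeated integration by parts), you only need a lower bound and get it in one line from $(v-s)^{-p}\geq v^{-p}$ under the Duhamel kernel; this is simpler and entirely sufficient here. The final absorption of the exponentially decaying boundary and remainder terms into $\varepsilon v^{-p}$ for $v\geq v_1$ is carried out correctly.
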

	
	The following lemma is needed for the proof. 
	
	\begin{lemma}
		\label{AuxLemma}
		Let $p, D, v_0 >0$ be constants. For $v \geq v_0$  the following holds:
		\begin{equation*}
		e^{-Dv}\int\limits_{v_0}^v e^{Dv'} (v')^{-p} \, dv' = \frac{1}{D} v^{-p} + \mathcal{O}(v^{-(p+1)}) \;.
		\end{equation*}
	\end{lemma}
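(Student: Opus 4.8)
The plan is to establish the asymptotic expansion by integration by parts, treating $e^{Dv'}$ as the factor to integrate and $(v')^{-p}$ as the factor to differentiate. First I would write
\[
e^{-Dv}\int_{v_0}^v e^{Dv'}(v')^{-p}\,dv' = e^{-Dv}\left[\frac{1}{D}e^{Dv'}(v')^{-p}\right]_{v_0}^v + \frac{p}{D}e^{-Dv}\int_{v_0}^v e^{Dv'}(v')^{-(p+1)}\,dv'\,.
\]
The boundary term at $v'=v$ yields exactly $\frac{1}{D}v^{-p}$, while the boundary term at $v'=v_0$ contributes $-\frac{1}{D}e^{-D(v-v_0)}v_0^{-p}$, which is exponentially small and hence $\mathcal{O}(v^{-(p+1)})$ (indeed $\mathcal{O}(v^{-k})$ for any $k$).

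It then remains to show that the remaining integral term $\frac{p}{D}e^{-Dv}\int_{v_0}^v e^{Dv'}(v')^{-(p+1)}\,dv'$ is $\mathcal{O}(v^{-(p+1)})$. For this I would split the range of integration at $v/2$: on $[v_0, v/2]$ one bounds $(v')^{-(p+1)} \leq v_0^{-(p+1)}$ (assuming $v \geq 2v_0$, which is harmless for an asymptotic statement) and $\int_{v_0}^{v/2} e^{Dv'}\,dv' \leq \frac{1}{D}e^{Dv/2}$, so after multiplying by $e^{-Dv}$ this piece is $\lesssim e^{-Dv/2}$, again exponentially small. On $[v/2, v]$ one bounds $(v')^{-(p+1)} \leq (v/2)^{-(p+1)} = 2^{p+1}v^{-(p+1)}$, pulls this constant out, and uses $e^{-Dv}\int_{v/2}^v e^{Dv'}\,dv' \leq \frac{1}{D}$; this gives the desired $\mathcal{O}(v^{-(p+1)})$ contribution. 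Collecting the two pieces establishes the claim.

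There is really no serious obstacle here; the only mild subtlety is that one wants the estimate uniformly for $v \geq v_0$ rather than merely for large $v$, but since the $\mathcal{O}$ notation in this paper allows an implicit constant and the exponentially small terms are trivially controlled on any $[v_0, \infty)$ (being bounded there), one may simply absorb the behaviour on a compact initial range into the constant. Alternatively, one observes that the whole left-hand side minus $\frac{1}{D}v^{-p}$ is a continuous function of $v$ on $[v_0,\infty)$ whose product with $v^{p+1}$ is bounded near $v_0$ and has a finite limit (in fact bounded) as $v\to\infty$ by the computation above, hence is globally bounded. This completes the proof.
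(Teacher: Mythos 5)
Your proposal is correct. The first step coincides with the paper's: a single integration by parts producing the boundary term $\frac{1}{D}v^{-p}$, an exponentially small contribution from $v'=v_0$, and the remainder $\frac{p}{D}e^{-Dv}\int_{v_0}^v e^{Dv'}(v')^{-(p+1)}\,dv'$. Where you diverge is in showing that this remainder is $\mathcal{O}(v^{-(p+1)})$: the paper integrates by parts a second time and then runs an absorption argument, choosing $v_1$ large enough that $\frac{p+1}{D}(v')^{-(p+2)}\leq\frac{1}{2}(v')^{-(p+1)}$ so that half of the integral can be reabsorbed into itself, whereas you split the domain at $v/2$, discard the $[v_0,v/2]$ piece as exponentially small, and bound $(v')^{-(p+1)}\leq 2^{p+1}v^{-(p+1)}$ on $[v/2,v]$ before integrating the exponential exactly. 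Your domain-splitting is the more elementary and arguably cleaner of the two (it avoids the slightly delicate bookkeeping of dropping a negative term and subtracting half the integral from both sides); the paper's bootstrap has the mild advantage of reusing the identity \eqref{FirstIntParts} verbatim at the next order, which fits the way that computation is recycled later in the proof of Proposition \ref{PropOffHp}. Your closing remark on uniformity for $v$ in a compact initial range is a point the paper glosses over, and your resolution of it is fine.
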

	
	\begin{proof}[Proof of Lemma \ref{AuxLemma}:]
		We compute
		\begin{equation}
		\label{FirstIntParts}
		\begin{split}
		e^{-Dv} \int\limits_{v_0}^v e^{Dv'} (v')^{-p} \, dv' &= e^{-Dv}\frac{1}{D} \int\limits_{v_0}^v \partial_{v'} \Big[ e^{Dv'} (v')^{-p}\Big] \,dv' + e^{-Dv} \frac{p}{D} \int\limits_{v_0}^v  e^{Dv'} (v')^{-(p+1)}\,dv' \\
		&=\frac{1}{D} v^{-p} - \frac{1}{D} e^{D(v_0 -v)} v_0^{-p} + \frac{p}{D} e^{-Dv} \int\limits_{v_0}^v  e^{Dv'} (v')^{-(p+1)}\,dv' \;.
		\end{split}
		\end{equation}
		Similarly, we have
		\begin{equation}
		\label{AuxEst}
		\frac{p}{D} e^{-Dv} \int\limits_{v_0}^v  e^{Dv'} (v')^{-(p+1)}\,dv' = \frac{p}{D^2} v^{-(p+1)} - \underbrace{\frac{p}{D^2} e^{D(v_0 - v)} v_0^{-(p+1)}}_{\geq 0} + \frac{p(p+1)}{D^2} e^{-Dv}\int\limits_{v_0}^v e^{Dv'} (v')^{-(p+2)} \, dv' \;.
		\end{equation}
		We can now choose $v_1(p,D) >0$ big enough such that $\frac{p+1}{D} (v')^{-(p+2)} \leq \frac{1}{2}(v')^{-(p+1)}$ holds for all $v' \geq v_1$. To show that the left hand side of \eqref{AuxEst} is $\mathcal{O}(v^{-(p+1)})$, we bring one half of it over to the right hand side and drop the second, negative term of \eqref{AuxEst}:  
		\begin{equation*}
		\begin{split}
		0 \leq \frac{p}{2D} e^{-Dv} \int\limits_{v_0}^v  e^{Dv'} (v')^{-(p+1)}\,dv' &\leq \frac{p}{D^2} v^{-(p+1)} + \frac{p(p+1)}{D^2} e^{-Dv}\int\limits_{v_0}^{v_1} e^{Dv'} (v')^{-(p+2)} \, dv'\\ &+ \underbrace{\frac{p(p+1)}{D^2} e^{-Dv}\int\limits_{v_1}^v e^{Dv'} (v')^{-(p+2)} \, dv' - \frac{p}{2D} e^{-Dv} \int\limits_{v_0}^v  e^{Dv'} (v')^{-(p+1)}\,dv' }_{\leq 0} \;.
		\end{split}
		\end{equation*}
		Hence, the last term in \eqref{FirstIntParts} is $\mathcal{O}(v^{-(p+1)})$, which concludes the proof.
	\end{proof}
	
	\begin{proof}[Proof of Proposition \ref{PropRedShiftHorizon}:]
		Manipulating the form \eqref{WaveEq2} of the spherically symmetric wave equation we obtain
		\begin{equation*}
		\begin{split}
		\partial_v \Big( - r\frac{\partial_u \psi}{\partial_u r}\Big) &= -\frac{1}{\partial_u r} \partial_v ( r \partial_u \psi) + \frac{ r \partial_u \psi}{(\partial_u r)^2} \partial_v \partial_u r \\
		&= \partial_v \psi + \frac{2m}{r \partial_u r} \partial_u \psi \;,
		\end{split}
		\end{equation*}
		where we have used $\partial_v \partial_u r = \frac{2m}{r^2} \partial_v r = \frac{2m}{r^2} \partial_u r$. Using the definition \eqref{DefY} of the vector field $Y$ this reads
		\begin{equation}
		\label{EqForY}
		\partial_v\Big( \frac{r}{2} Y\psi\Big) = \partial_v \psi - \frac{m}{r^2}\Big(r Y \psi\Big) \;.
		\end{equation}
		We now recall that the vector field $\partial_v$ in $(u,v)$-coordinates extends continuously to the event horizon and equals there $\partial_v$ in $(v,r)$-coordinates (which again equals $\partial_t$). It is convenient not to change the $(u,v)$-coordinate system when computing at the event horizon $\Hp$ but to include $\Hp$ as the asymptotic hypersurface $u = -\infty$, as is done in particular in the next proposition.
		
		We now consider \eqref{EqForY} at the event horizon $\Hp$:
		\begin{equation*}
		\partial_v (Y\psi)|_{\Hp} = \frac{1}{m} \partial_v \psi|_{\Hp}  - \frac{1}{2m} Y\psi|_{\Hp} \;,
		\end{equation*}
		which is equivalent to
		\begin{equation*}
		\partial_v\Big(e^{\frac{v}{2m}} Y\psi|_{\Hp} \Big) = \frac{1}{m} e^{\frac{v}{2m}}\partial_v \psi|_{\Hp} \;.
		\end{equation*}
		Integrating from $v_0$ to $v$ and using Lemma \ref{AuxLemma} we thus obtain
		\begin{equation}\label{LowBoundV1}
		\begin{split}
		Y\psi|_{\Hp} (v) &= e^{\frac{1}{2m}(v_0 -v)} Y\psi|_{\Hp} (v_0) + \frac{1}{m} e^{-\frac{v}{2m}} \int\limits_{v_0}^v e^{\frac{v'}{2m}} \partial_{v'} \psi|_{\Hp} (v')\, dv' \\
		&\geq e^{\frac{1}{2m}(v_0 -v)} Y\psi|_{\Hp} (v_0) + \frac{1}{m} e^{-\frac{v}{2m}} \int\limits_{v_0}^v e^{\frac{v'}{2m}} C_0 (v')^{-p} \, dv' \\
		&= e^{\frac{1}{2m}(v_0 -v)} Y\psi|_{\Hp} (v_0) + 2C_0 v^{-p} + \mathcal{O}(v^{-(p+1)}) \;.
		\end{split}
		\end{equation}
	\end{proof}
	\begin{remark}[to Proposition \ref{PropRedShiftHorizon}:]
		Let us remark that for given $\varepsilon >0$, the lower bound on the affine time $v_1$, provided by \eqref{LowBoundV1}, depends continuously on $v_0, Y\psi|_{\Hp}(v_0), C_0$ and the parameter $m$ of the black hole.
	\end{remark}
	
	The next proposition shows that the lower bounds on $\partial_v \psi$ and $Y\psi$ on $\Hp \cap \{v \geq v_1\}$ can be propagated slightly off the event horizon into the black hole interior.
	
	\begin{proposition} \label{PropOffHp}
		Let $0< r_0 < 2m$ and $D_0 := \frac{2m}{r_0^2}$. Assume $\psi$ is a smooth solution to the spherically symmetric wave equation that satisfies $r\partial_v\psi|_{\Hp} \geq 2C_1 v^{-p}$ and $rY\psi|_{\Hp} (v) \geq \frac{2C_1}{mD_0}  v^{-p}$ along the event horizon for $v \geq v_1$, where $v_1 \geq 1$ and $C_1 >0$ are constants.  Then there exists a $- \infty < u_0$ such that
		\begin{equation*}
		r\partial_v \psi \geq C_1 v^{-p} \quad \textnormal{ and } \quad rY\psi \geq \frac{C_1}{mD_0} v^{-p}
		\end{equation*}
		holds in $[-\infty,u_0] \times [v_1, \infty) \cap \{r \geq r_0\}$.
	\end{proposition}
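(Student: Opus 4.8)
The plan is to recast the spherically symmetric wave equation as a coupled pair of transport equations for the renormalised quantities $\phi := r\partial_v\psi$ and $\Psi := rY\psi$, and then to propagate the lower bounds from a thin neighbourhood of $\Hp$ into the interior by a continuity (bootstrap) argument, the essential input being the redshift identity \eqref{EqForY}.

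In the interior write $\mu := \tfrac{2m}{r}-1>0$, so that $\partial_u r=\partial_v r=-\mu$ and, by \eqref{DefY}, $Y=\tfrac{2}{\mu}\,\partial_u|_v$. Then \eqref{WaveEq2} becomes
\[
\partial_u\phi=\frac{\mu^2}{2r}\,\Psi,
\]
and rearranging \eqref{EqForY} and using $\partial_v\psi=\phi/r$,
\[
\partial_v\Psi=\frac{2}{r}\,\phi-\frac{2m}{r^2}\,\Psi.
\]
On $\{r\ge r_0\}$ all coefficients are under control: $0\le\mu\le\tfrac{2m}{r_0}-1$, with $\mu=0$ exactly on $\Hp$; $r\le 2m$; and $\tfrac{2m}{r^2}\le D_0$. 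Since $\psi$ is smooth on the maximal analytic Schwarzschild spacetime and $\Hp$ lies in its interior, $\phi$ and $\Psi$ extend smoothly up to $\Hp$, where $\phi|_{\Hp}=r\partial_v\psi|_{\Hp}$ and $\Psi|_{\Hp}=rY\psi|_{\Hp}$ satisfy the assumed lower bounds. Note also that $\mathcal{R}_{u_0}:=[-\infty,u_0]\times[v_1,\infty)\cap\{r\ge r_0\}$ is closed under decreasing $u$ at fixed $v$ and under decreasing $v$ at fixed $u$ (as $r$ decreases in both coordinates), so integration of the first equation from $\Hp$ in $u$, or of the second from $\{v=v_1\}$ in $v$, stays inside $\mathcal{R}_{u_0}$.

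The argument rests on two monotonicity estimates on $\mathcal{R}_{u_0}$. First, if $\Psi\ge 0$ on the $u$-segment ending at $(u,v)$, then $\partial_u\phi=\tfrac{\mu^2}{2r}\Psi\ge 0$ there, hence $\phi(u,v)\ge\phi|_{\Hp}(v)\ge 2C_1 v^{-p}$. Second, if $\phi\ge C_1 v^{-p}$ and $\Psi\ge 0$ on the $v$-segment $\{u\}\times[v_1,v]$, then, since $D_0\Psi\ge\tfrac{2m}{r^2}\Psi$ and $r\le 2m$,
\[
\partial_v\big(e^{D_0 v}\Psi\big)=e^{D_0 v}\big(\partial_v\Psi+D_0\Psi\big)\ge e^{D_0 v}\,\frac{2}{r}\,\phi\ge \frac{C_1}{m}\,e^{D_0 v}v^{-p};
\]
integrating from $v_1$ and using the elementary bound $\int_{v_1}^{v}e^{D_0 v'}(v')^{-p}\,dv'\ge \tfrac{1}{D_0}\big(e^{D_0 v}v^{-p}-e^{D_0 v_1}v_1^{-p}\big)$ (in the spirit of Lemma \ref{AuxLemma}) gives
\[
\Psi(u,v)\ge e^{-D_0(v-v_1)}\Psi(u,v_1)+\frac{C_1}{mD_0}\Big(v^{-p}-e^{-D_0(v-v_1)}v_1^{-p}\Big).
\]
Now I would choose $u_0$ close enough to $-\infty$ that $\phi(u,v_1)>C_1 v_1^{-p}$ and $\Psi(u,v_1)\ge \tfrac{3C_1}{2mD_0}v_1^{-p}$ for all $u\le u_0$; this is possible because $\phi$ and $\Psi$ are continuous up to $\Hp$ at $(-\infty,v_1)$ and their values on $\Hp$ carry the surplus factor $2$. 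Feeding this into the second estimate yields the strict improvement $\Psi(u,v)\ge \tfrac{C_1}{mD_0}v^{-p}+\tfrac{C_1}{2mD_0}v_1^{-p}e^{-D_0(v-v_1)}$.

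To conclude, I would run a continuity argument in $v$: let $V_*$ be the supremum of all $V\ge v_1$ for which $\phi\ge C_1 v^{-p}$ and $\Psi\ge \tfrac{C_1}{mD_0}v^{-p}$ hold throughout $\mathcal{R}_{u_0}\cap\{v\le V\}$. The choice of $u_0$ together with uniform continuity on a compact set forces $V_*>v_1$. If $V_*<\infty$, then on $\mathcal{R}_{u_0}\cap\{v\le V_*\}$ the bootstrap hypotheses $\phi\ge C_1 v^{-p}$, $\Psi\ge 0$ hold, so the two estimates above upgrade them to $\phi\ge 2C_1 v^{-p}$ and $\Psi\ge \tfrac{C_1}{mD_0}v^{-p}+\tfrac{C_1}{2mD_0}v_1^{-p}e^{-D_0(v-v_1)}$, which have strictly positive slack at $v=V_*$; by uniform continuity of $\phi$ and $\Psi$ on the compact set $\mathcal{R}_{u_0}\cap\{v_1\le v\le V_*+1\}$ these bounds persist on $\{v\le V_*+\delta\}$ for some $\delta>0$, contradicting the maximality of $V_*$. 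Hence $V_*=+\infty$, which is exactly the assertion of the proposition. The main obstacle is that the coupling forces $\phi$ and $\Psi$ to be treated simultaneously, while $\Psi$, unlike $\phi$, is transported only in the $v$-direction and therefore cannot be propagated directly off $\Hp$; this is precisely what makes the continuity argument and the careful choice of $u_0$ (controlling $\Psi$ on the edge $\{v=v_1\}$) necessary, and it is the factor $2$ in the horizon bounds that absorbs the degeneration of the $\Psi$-estimate near $v=v_1$.
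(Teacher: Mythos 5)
Your proposal is correct and follows essentially the same route as the paper: the same renormalised quantities $r\partial_v\psi$ and $rY\psi$, the same two transport identities (from \eqref{WaveEq2} and \eqref{EqForY}), the same integrating factor $e^{D_0 v}$ with the integration-by-parts bound of \eqref{FirstIntParts}, the same choice of $u_0$ by continuity off the horizon exploiting the surplus factor $2$, and the same open--closed continuity argument in $v$ (your supremum $V_*$ plays the role of the paper's interval $J$). The only differences are cosmetic, e.g.\ your explicit exponential surplus term in the $\Psi$-bound versus the paper's strict inequality.
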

	
	\begin{proof}
		Let $-\infty < u_0$ be such that $rY\psi \geq \frac{C_1}{mD_0} v^{-p}$ and $r\partial_v \psi \geq C_1 v^{-p}$ hold on $[-\infty, u_0] \times \{v_1\}$.
		Let 
		\begin{equation*}
		J :=\Big{\{} v \in [v_1, \infty) \; \Big| \; rY\psi \geq \frac{C_1}{mD_0} v^{-p} \textnormal{ and } r\partial_v \psi \geq C_1 v^{-p} \quad \textnormal{ in } \big([-\infty, u_0] \times [v_1, v]\big) \cap \{r \geq r_0\} \Big{\}} \;.
		\end{equation*}
		The interval $J$ is clearly non-empty and closed. We show openness, from which the proposition follows. 
		
		Let $v \in J$ and $u \in [-\infty, u_0]$ with $r(v,u) \geq r_0$. In $\big([-\infty, u_0] \times [v_1,v]\big) \cap \{r \geq r_0\}$ we in particular have $rY\psi \geq 0$, so that we obtain from \eqref{EqForY} that
		\begin{equation*}
		\partial_v(rY\psi) \geq 2\partial_v \psi - D_0 rY\psi
		\end{equation*}
		holds in this region, which is equivalent to
		\begin{equation*}
		\partial_v \Big(e^{D_0 v}rY\psi\Big) \geq e^{D_0 v}2\partial_v \psi \;.
		\end{equation*}
		Integrating from $v_1$ to $v$ yields
		\begin{equation*}
		\begin{split}
		rY\psi(v,u) &\geq e^{D_0(v_1 - v)}rY\psi(v_1,u) + e^{-D_0v} \int\limits_{v_1}^v e^{D_0 v'} 2\partial_v \psi (v',u)\, dv'\\
		&\geq e^{D_0(v_1 - v)}rY\psi(v_1,u) + e^{-D_0v} \int\limits_{v_1}^v e^{D_0 v'} \frac{C_1}{m} (v')^{-p} \, dv' \\
		&\geq e^{D_0(v_1 - v)}rY\psi(v_1,u) + \frac{C_1}{mD_0} v^{-p} - \frac{C_1}{mD_0} v_1^{-p} e^{D_0(v_1 - v)} + \underbrace{\frac{pC_1}{mD_0} e^{-D_0 v} \int\limits_{v_1}^v e^{D_0 v'} (v')^{-(p+1)} \, dv'}_{>0} \\
		&> \frac{C_1}{mD_0} v^{-p} \;,
		\end{split}
		\end{equation*}
		where, to obtain the third inequality, we used the same integration by parts computation as in \eqref{FirstIntParts}, and to obtain the final inequality, we used $rY\psi (v_1,u) \geq \frac{C_1}{mD_0} v_1^{-p}$.
		Moreover, from \eqref{WaveEq2} we obtain
		\begin{equation*}
		\begin{split}
		r\partial_v\psi (v,u) &= r\partial_v \psi(v, -\infty) + \underbrace{\int\limits_{-\infty}^u - \frac{\partial_v r}{r} r\partial_u \psi (v, u') \, du'}_{> 0} \\
		&\geq 2C_1 v^{-p} \\
		&> C_1 v^{-p} \;.
		\end{split}
		\end{equation*}
		Together with the compactness of $\big([-\infty, u_0] \times \{v\}\big) \cap \{r \geq r_0\}$ this shows openness of $J$.
	\end{proof}
	
	The next proposition propagates the lower decay bounds on $\partial_u \psi$ and $\partial_v \psi$ in $v$ all the way to the singularity and at the same time shows that they  blow up there like $\sim \frac{1}{r^2}$.
	
	\begin{proposition}
		\label{ToSing}
		Assume $\psi$ is a smooth solution to the spherically symmetric wave equation that satisfies $r\partial_v\psi > \frac{C_2}{r} v^{-p}$ and $r\partial_u\psi  > \frac{C_2}{r} v^{-p}$ on $\{r=r_0\} \cap \{v \geq v_1\}$, where $0<r_0 < 2m$, $v_1 \geq 1$ and $C_2 >0$ are constants. Let $\Delta t := 2[r^*(0) - r^*_0]$, where $r^*_0 = r^*(r_0)$.
		
		Then
		\begin{equation*}
		r\partial_v \psi \geq \frac{C_2}{r} v ^{-p} \quad \textnormal{ and } \quad r\partial_u\psi \geq \frac{C_2}{r} v^{-p}
		\end{equation*}
		hold in $\{0 < r \leq r_0\} \cap \{ v \geq v_1 + \Delta t\}$.
	\end{proposition}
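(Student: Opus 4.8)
The plan is to propagate the two lower bounds \emph{simultaneously} from $\{r=r_0\}$ towards the singularity by a continuity argument, using the divergence-form identities \eqref{WaveEq2}, \eqref{WaveEq3} together with a change of integration variable from the relevant null coordinate to $r$. I would work in $(u,v)$-coordinates: there $r$ depends only on $r^*=\tfrac12(u+v)$, so $\{r=r_0\}$ is the line $u+v=2r^*_0$ and $\{r=0\}$ the line $u+v=2r^*(0)$, and both $\partial_u,\partial_v$ are future-directed null. Let $\tilde R$ be the future domain of dependence of $\{r=r_0\}\cap\{v\ge v_1\}$ inside $\{0<r\le r_0\}$; a short computation gives $\tilde R=\{0<r\le r_0\}\cap\{u\le 2r^*_0-v_1\}$, and moreover $\{0<r\le r_0\}\cap\{v\ge v_1+\Delta t\}\subseteq\tilde R$, because $r>0$ forces $u<2r^*(0)-v\le 2r^*_0-v_1$ once $v\ge v_1+\Delta t$. (The shift $\Delta t=2[r^*(0)-r^*_0]$ is exactly the drop in $v$ along the constant-$u$ past null ray running from $\{r=0\}$ down to $\{r=r_0\}$; this is why the conclusion is claimed only for $v\ge v_1+\Delta t$.) It therefore suffices to prove both bounds on all of $\tilde R$.

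For the propagation step, using $\partial_ur=\partial_vr=\tfrac{r-2m}{r}$ rewrite \eqref{WaveEq2} and \eqref{WaveEq3} as
\begin{equation*}
\partial_u(r\partial_v\psi)=\frac{2m-r}{r^2}\,(r\partial_u\psi)\,,\qquad\quad\partial_v(r\partial_u\psi)=\frac{2m-r}{r^2}\,(r\partial_v\psi)\,,
\end{equation*}
and note that $\tfrac{2m-r}{r^2}>0$ in the interior. Fix $(\bar u,\bar v)\in\tilde R$, set $\bar r:=r(\bar u+\bar v)$, and integrate the first identity in $u$ at fixed $v=\bar v$, from the point $u_*=2r^*_0-\bar v$ on $\{r=r_0\}$ --- which has $v$-coordinate $\bar v\ge v_1$, so the hypothesis applies there --- up to $\bar u$. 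If one knows inductively that $r\partial_u\psi\ge\tfrac{C_2}{r}v^{-p}$ along this ingoing ray, the source term is non-negative, and the substitution $du=\tfrac{r}{r-2m}\,dr$ (so that $r$ decreases from $r_0$ to $\bar r$) collapses it to
\begin{equation*}
\int_{u_*}^{\bar u}\frac{2m-r}{r^2}(r\partial_u\psi)\,du\ \geq\ C_2\,\bar v^{-p}\int_{\bar r}^{r_0}\frac{dr}{r^2}\ =\ C_2\,\bar v^{-p}\Bigl(\frac{1}{\bar r}-\frac{1}{r_0}\Bigr)\,.
\end{equation*}
Adding the strict boundary value $r\partial_v\psi(u_*,\bar v)>\tfrac{C_2}{r_0}\bar v^{-p}$ yields $r\partial_v\psi(\bar u,\bar v)>\tfrac{C_2}{\bar r}\bar v^{-p}$; the point is that the $r^{-3}$ blow-up of the source, weighted by the $\tfrac{r}{2m-r}$ Jacobian, is exactly $\int r^{-2}\,dr$, i.e.\ exactly the $r^{-1}$-growth we are after. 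The symmetric computation --- integrate the second identity in $v$ at fixed $\bar u$, from $v_*=2r^*_0-\bar u$ on $\{r=r_0\}$ (which lies at $v\ge v_1$ precisely because $(\bar u,\bar v)\in\tilde R$), using $(v')^{-p}\ge\bar v^{-p}$ along the ray --- gives $r\partial_u\psi(\bar u,\bar v)>\tfrac{C_2}{\bar r}\bar v^{-p}$.

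To turn this into a proof I would exhaust $\tilde R$ by the compact regions $\tilde R_{\rho,V}:=\tilde R\cap\{r\ge\rho\}\cap\{v\le V\}$ with $0<\rho<r_0<2m$ and $V>v_1$; note that the two null rays used above, run backward from any point of $\tilde R_{\rho,V}$, remain inside $\tilde R_{\rho,V}$ (they only increase $r$ and never increase $v$) and end on $\{r=r_0\}\cap\{v_1\le v\le V\}$, where the hypothesis holds with a \emph{uniform} positive gap by compactness. On $\tilde R_{\rho,V}$ one then runs a continuity argument in the spacelike foliation $\{u+v=\mathrm{const}\}$: the set of $\tau$ for which both bounds hold on $\tilde R_{\rho,V}\cap\{u+v\le\tau\}$ is non-empty (at $\tau=2r^*_0$ it reduces to $\{r=r_0\}\cap\{v_1\le v\le V\}$, handled by the hypothesis), closed, and open --- openness because the estimate above upgrades the bounds to strict inequalities with a gap that is uniform on the compact set and hence persists under a small increase of $\tau$. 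Thus both bounds hold on $\tilde R_{\rho,V}$, and letting $\rho\to0$, $V\to\infty$ gives them on $\tilde R\supseteq\{0<r\le r_0\}\cap\{v\ge v_1+\Delta t\}$.

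The main obstacle I anticipate is the bookkeeping of this \emph{coupled} bootstrap: the estimate for $r\partial_v\psi$ at a point requires the bound on $r\partial_u\psi$ along a whole ingoing ray reaching $\{r=r_0\}$, and conversely, so the two rates cannot be separated, and one must check that every such ray stays in the region where the bootstrap hypothesis is already established --- it is precisely this requirement, that the constant-$u$ ray from a point near $\{r=0\}$ still reaches $\{r=r_0\}$ at $v\ge v_1$, that forces the $\Delta t$-shift in the statement. By contrast the change-of-variables integral and the cancellation turning the boundary term plus the source term into $\tfrac{C_2}{\bar r}\bar v^{-p}$ are short and essentially forced.
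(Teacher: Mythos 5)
Your argument is correct and is essentially the paper's proof: the same continuity/bootstrap argument in the constant-$r$ (equivalently constant-$(u+v)$) foliation, integrating \eqref{WaveEq2} and \eqref{WaveEq3} along null rays back to $\{r=r_0\}$, with the same cancellation producing $\tfrac1r-\tfrac1{r_0}$ and the same role for $\Delta t$. The only (immaterial) differences are bookkeeping: the paper freezes the $v$-weight at $(v_2+\Delta t)^{-p}$ on diamonds $D(v_2)$ and recovers $v^{-p}$ by choosing $v_2=v-\Delta t$ at the end, whereas you keep the pointwise weight and use its monotonicity along the backward rays, and you compactify with $\tilde R_{\rho,V}$ instead of $D(v_2)$.
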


	\begin{proof}
		Let $u_0 =  2r^*_0 - v_1$ and take $v_2 \geq v_1$. We consider the region $D(v_2) := \{0<r\leq r_0\} \cap \{ u \leq 2r^*_0 - v_2\} \cap \{v \leq v_2 + \Delta t\}$, see also Figure \ref{FigRegionsProp}. Recall that $r^* = \frac{1}{2}(v+u)$. Moreover, $\frac{1}{2}(v_2 + \Delta t + 2r_0^* - v_2) = r^*(0)$, and thus the hypersurfaces $u = 2r^*_0 - v_2$ and $v = v_2 + \Delta t$ `intersect at $r=0$' in the Penrose diagram.
		
		\begin{figure}[h] 
			\centering
			\def\svgwidth{7cm}
			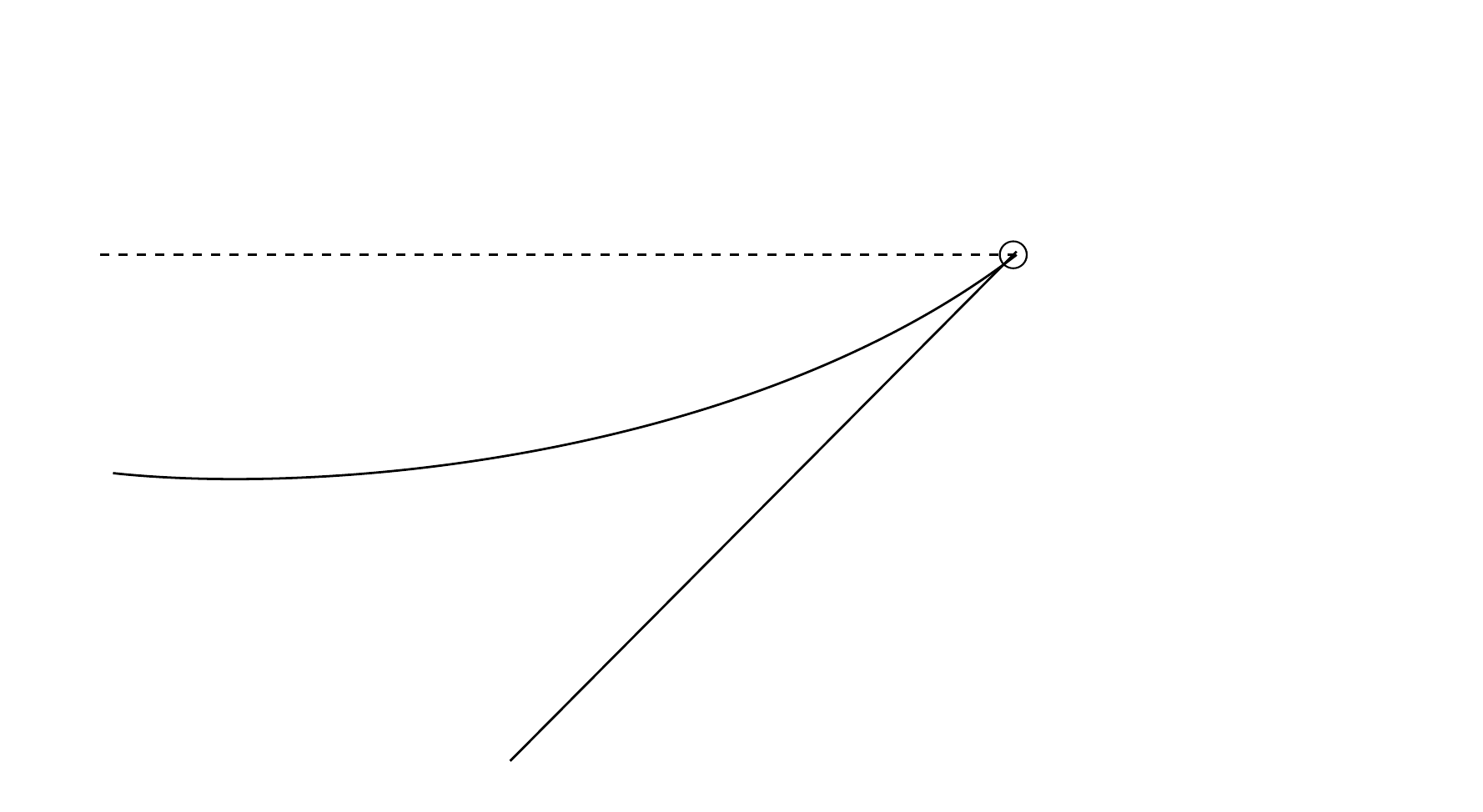
			\caption{The regions in the proof of Proposition \ref{ToSing}} \label{FigRegionsProp}
		\end{figure}
		
		Let
		\begin{equation*}
		J:= \Big{\{} r' \in (0,r_0] \, \Big| \, r\partial_u \psi \geq \frac{C_2}{r} (v_2 + \Delta t)^{-p} \; \textnormal{ and }  r\partial_v \psi \geq \frac{C_2}{r} (v_2 + \Delta t)^{-p} \textnormal{ holds in } D(v_2) \cap \{r' \leq r \leq r_0\} \Big{\}} \;.
		\end{equation*}
		The interval $J$ is clearly non-empty and closed. We show openness. Let $(u,v)$ be such that $r(u,v) \in J$.

		Let $v_{r_0}(u)$ be such that $r\big(u, v_{r_0}(u)\big) = r_0$.
		It follows from \eqref{WaveEq3} that
		\begin{equation*}
		\begin{split}
		r \partial_u\psi(u,v) &= r\partial_u \psi (u,v_{r_0}(u)) + \int\limits_{v_{r_0}(u)}^v -\frac{\partial_u r}{r} r\partial_v\psi(u,v') \, dv' \\
		&\geq r\partial_u \psi (u,v_{r_0}(u)) + \int\limits_{v_{r_0}(u)}^v -\frac{\partial_v r}{r} \frac{C_2}{r} (v_2 + \Delta t)^{-p} (u,v')\, dv' \\
		&=  r\partial_u \psi (u,v_{r_0}(u)) + C_2 (v_2 + \Delta t)^{-p}  \int\limits_{v_{r_0}(u)}^v \partial_v (\frac{1}{r}) (u,v') \, dv' \\
		&=  r\partial_u \psi (u,v_{r_0}(u)) + C_2 (v_2 + \Delta t)^{-p} \frac{1}{r} (u,v) - C_2 (v_1 + \Delta t)^{-p} \frac{1}{r_0} \\
		&>  C_2 (v_2 + \Delta t)^{-p} \frac{1}{r} (u,v)  \;,
		\end{split}
		\end{equation*}
		where we have used the given assumption of the proposition for the last inequality.
		Now, let $u_{r_0}(v)$ be such that $r\big(u_{r_0}(v), v\big) = r_0$. It follows from \eqref{WaveEq2} that
		\begin{equation*}
		\begin{split}
		r \partial_v\psi(u,v) &= r\partial_v \psi (u_{r_0}(v), v) + \int\limits_{u_{r_0}(v)}^u -\frac{\partial_v r}{r} r\partial_u\psi(u',v) \, du' \\
		&\geq r\partial_v \psi (u_{r_0}(v), v) + \int\limits_{u_{r_0}(v)}^u -\frac{\partial_u r}{r} \frac{C_2}{r} (v_1 + \Delta t)^{-p} (u',v)\, du' \\
		&=  r\partial_v \psi (u_{r_0}(v), v) + C_2 (v_1 + \Delta t)^{-p}  \int\limits_{u_{r_0}(v)}^u  \partial_u (\frac{1}{r}) (u',v) \, du' \\
		&=  r\partial_v \psi (u_{r_0}(v), v)  + C_2 (v_1 + \Delta t)^{-p} \frac{1}{r} (u,v) - C_2 (v_1 + \Delta t)^{-p} \frac{1}{r_0} \\
		&>  C_2 (v_1 + \Delta t)^{-p} \frac{1}{r} (u,v)  \;,
		\end{split}
		\end{equation*}
		where we made again use of the given assumption of the proposition. Together with the compactness of $\{r = const\} \cap D(v_2)$, this establishes the openness of $J$ and thus shows $J = (0,r_0]$. Given now a point $(u,v) \in \{0 < r \leq r_0\} \cap \{ v \geq v_1 + \Delta t\}$, we have $(u,v) \in D(v-\Delta t)$. This finishes the proof of Proposition \ref{ToSing}.
	\end{proof}
	
	We are now ready to prove Theorem \ref{ThmSphSym} b):
	
	\begin{proof}[Proof of Theorem \ref{ThmSphSym} b):]
		It follows from Propositions \ref{PropRedShiftHorizon}, \ref{PropOffHp}, and \ref{ToSing} that for given $r_0 \in (0, 2m)$ there exists a $v_2 \geq 1$ and a $C>0$ such that 
		\begin{equation*}
		\partial_v\psi \geq \frac{C}{r^2}v^{-p} \quad \textnormal{ and } \quad \partial_u\psi \geq \frac{C_3}{r^2}v^{-p} 
		\end{equation*}
		holds in $\{0<r< r_0\} \cap \{v \geq v_2\}$. Recalling 
		\begin{equation*}
		\frac{\partial}{\partial r}\Big|_t = \frac{1}{1-\frac{2m}{r}}\Big(\frac{\partial}{\partial u}\Big|_v + \frac{\partial}{\partial v}\Big|_u\Big)\;,
		\end{equation*}
		it follows that there exists a $t_0 \geq 1$ and a  $C>0$ (possibly different from the previous constant) such that
		\begin{equation*}
		-\partial_r \psi \ge \frac{C}{r(2m-r)}t^{-p}
		\end{equation*}
		holds in $\{t_0 \leq t\} \cap \{0< r \leq r_0\}$.
		
		We  now compute for $t \geq t_0$ and $0<r< r_0$
		\begin{equation}\label{ControlBlowUpFirstDer}
		\begin{aligned}
		\psi(t,r )&= \psi(t,r_0) + \int\limits_{r_0}^{r} \partial_r \psi (t,r') \, dr' \\
		&\geq \psi(t,r_0)  + \int\limits_{r_0}^{r} \frac{C}{r'(r' - 2m)}t^{-p} \, dr' \\
		&\geq \psi(t,r_0) + \frac{C}{2m} t^{-p} \int\limits_{r}^{r_0} \frac{1}{r'}  \,dr' \\
		&= \psi(t,r_0) + \frac{ C}{2m } t^{-p}  \log \frac{r_0}{r} \;,
		\end{aligned}
		\end{equation}
		which concludes the proof of Theorem \ref{ThmSphSym} b).
	\end{proof}
	
	The proof of part a) of Theorem \ref{ThmSphSym} is very similar:
	
	\begin{proof}[Proof of Theorem \ref{ThmSphSym} a):]
		It follows from Proposition \ref{PropOffHp}, together with the basic monotonicity argument showing that positive signs of $\partial_v \psi$ and $\partial_u \psi$ are preserved in future development in the interior (cf.\ beginning of this section and the proof of Proposition \ref{ToSing}), that there exists an $r_0 \in (0,2m)$ and a $C>0$  such that 
		\begin{equation*}
		r \partial_v \psi \geq \frac{C}{r} (|t| + 1)^{-p} \qquad \textnormal{ and } \qquad r \partial_u \psi \geq \frac{C}{r} (|t| + 1)^{-p}
		\end{equation*}
		holds along $\{r = r_0\}$. The conclusion then follows from Proposition \ref{ToSing} and the argument given in the proof of Theorem \ref{ThmSphSym} b).
	\end{proof}

	\section{Expansion of waves near the singular hypersurface $\{r=0\}$: proof of Theorem \ref{ThmExp}}\label{SecExp}
	
	This section provides the proof of Theorem \ref{ThmExp}. In Section \ref{SecHR} we begin by propagating the assumed decay on the horizon, written in terms of energy norms, to a surface of constant $r$. Section \ref{SecRS} establishes the asymptotic expansion \eqref{AsymptoticExp} near $\{r= 0\}$ and propagates the obtained decay on a surface of constant $r$ all the way to the singular hypersurface. The proofs are based on energy estimates, the basics of which we recall in the following.
	
	Given a smooth function $\psi$ we define the \emph{stress-energy tensor} of $\psi$ to be
	\begin{equation*}
	T_{\mu \nu}(\psi) = \partial_\mu \psi \partial_\nu \psi - \frac{1}{2} g_{\mu \nu} \partial^\alpha \psi \partial_\alpha \psi \;.
	\end{equation*}
	The divergence of the stress-energy tensor equals 
	\begin{equation*}
	\nabla^\mu T_{\mu \nu}(\psi)=\partial_\nu\psi\cdot\square \psi
	\end{equation*}
	and so for solutions to the wave equation, $T_{\mu\nu}(\psi)$ is divergence free.
	For a vector field $X$ we define the associated current 
	\begin{equation*}
	J^X_\mu(\psi) = X^\nu T_{\mu \nu}(\psi) 
	\end{equation*}
	and set
	\begin{equation*}
	K^X(\psi) = \pi^X_{\mu \nu} T^{\mu \nu}(\psi)+X\psi\cdot\square\psi \;,
	\end{equation*}
	where $\pi^X_{\mu \nu} = \frac{1}{2}(\nabla_\mu X_\nu + \nabla_\nu X_\mu)$ is the \emph{deformation tensor} of $X$. With this notation we obtain the identity
	\begin{equation} \label{EnergyEstimate}
	\nabla^\mu J_\mu^X(\psi) = K^X(\psi) \;.
	\end{equation}
	For special choices of $X$ we will integrate this identity over a bounded domain and use the divergence theorem to rewrite the integrated left hand side of \eqref{EnergyEstimate} as a boundary term  to obtain a so-called \emph{energy estimate}.
	
	\subsection{Propagating decay from the horizon to a surface of constant $r$} \label{SecHR}

	In this section we find it convenient to work in the $(v,r,\theta,\varphi)$ coordinates, which were introduced in Section \ref{SecNotation}. It follows from \eqref{Metricvr} that the volume form is given by $\vol = r^2 \sin \theta \, dv \wedge dr \wedge d\theta \wedge d \varphi$.
	The hypersurfaces $\{r = r_0\}$ are spacelike hypersurfaces in the black hole interior. The induced volume form on $\{r = r_0\}$ is denoted by $\volro$, and the future directed normal to $\{r = r_0\}$ by $\nro$. Moreover, we choose $\nh = \frac{\partial}{\partial v}\Big|_r $ as a future directed normal for the horizon $\Hp$ and $\volh = r^2 \sin \theta\, dv \wedge d\theta \wedge d\varphi$ as a volume form. On the hypersurfaces $\{v = v_0\}$ we set $n_{\{v = v_0\}} = -\frac{\partial}{ \partial r}\big|_v$ and $\vol_{\{v = v_0\}} = r^2 \sin \theta\, dr \wedge d\theta \wedge d\varphi$. We prove the following

	\begin{theorem} \label{StabHorR}
		Let $\psi$ be a smooth solution to the  wave equation on the maximal analytic Schwarzschild spacetime that satisfies
		\begin{align} 
		|\psi| &\lesssim v^{-q}   \label{DecayAssumptionsHor1}\\ 
		| \partial_t^{(i)} \Omega_k^{(j)} \psi | &\lesssim v^{-(q + \delta)} \quad \textnormal{ for } i,j \in \N_0, 1 \leq i + j \leq 6, k = 1,2,3 \label{DecayAssumptionsHor2}
		\end{align}
		along the event horizon $\Hp \cap \{v \geq 1\}$, where $q >0$, $\delta \geq 0$.
		
		Then there exists an $r_0 \in (0, 2m)$ close to $2m$ such that 
		\begin{equation}\label{FirstStatementThm}
		|\psi| \lesssim v^{-q} \quad \textnormal{ holds in } \{v \geq 1\} \cap \{r_0 \leq r \leq 2m\}
		\end{equation}
		and for any future directed timelike vector field $N$ on $\{r = r_0\}$ that commutes with $\partial_t$, the following inequality holds true:
		\begin{equation}\label{SecondStatementThm}
		\int\limits_{\{r = r_0\} \cap \{v_0 \leq v \leq v_1\}} J^N\big(\partial_t^{(i)} \Omega_k^{(j)} \psi \big) \cdot \nro \volro \lesssim \big(|v_1- v_0|+1 \big) \cdot v_0^{-2(q + \delta)},
		\end{equation}
		%holds 
		for all $1 \leq v_0 < v_1$, $i,j \in \N_0$, $0 \leq i+j \leq 6$, $k = 1,2,3$.
	\end{theorem}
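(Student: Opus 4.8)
The plan is to prove \eqref{FirstStatementThm} and \eqref{SecondStatementThm} by a redshift-type energy estimate in the region $\{r_0 \leq r \leq 2m\}$, following Dafermos--Rodnianski. First I would recall the key property of the redshift vector field: there exists $r_0 \in (0,2m)$ close to $2m$ and a future-directed timelike vector field $N$ (built from $\partial_v|_r$ and $Y = -\partial_r|_v$ with suitable $r$-dependent weights) such that the deformation-tensor term $K^N(\psi)$ satisfies the coercive lower bound $K^N(\psi) \gtrsim J^N(\psi) \cdot N$ pointwise in $\{r_0 \leq r \leq 2m\}$ for solutions of $\Box_g \psi = 0$. This is the standard redshift estimate and I would simply cite \cite{DafRod09a}, \cite{DafRod08}; the only point requiring care is that we are now inside the black hole, so $\{r = r_0\}$ is \emph{spacelike} and the roles of $u$ and $v$ are different than in the exterior, but the redshift structure at $\Hp$ is unchanged. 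Since $N$ is taken to commute with $\partial_t$ and the wave equation commutes with the Killing fields $\partial_t$ and $\Omega_k$, each commuted quantity $\psi^{(i,j)} := \partial_t^{(i)} \Omega_k^{(j)} \psi$ again solves $\Box_g \psi^{(i,j)} = 0$, so it suffices to prove the estimate for $\psi$ itself and then apply it to each $\psi^{(i,j)}$, $0 \leq i+j \leq 6$.

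Next I would set up the energy identity. Integrate \eqref{EnergyEstimate} for the field $\psi^{(i,j)}$ over the region $\mathcal{R}(v_0,v_1) := \{r_0 \leq r \leq 2m\} \cap \{v_0 \leq v \leq v_1\}$ and apply the divergence theorem. The boundary consists of: the piece of the horizon $\Hp \cap \{v_0 \leq v \leq v_1\}$, the piece of $\{r = r_0\}$ (appearing with a favourable sign since $N$ is timelike and future-directed and $\{r=r_0\}$ is spacelike), the initial slice $\{v = v_0\} \cap \{r_0 \leq r \leq 2m\}$ (a spacelike-or-null bounded piece, finite), and the final slice $\{v = v_1\}$. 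The spacetime term $\int_{\mathcal{R}} K^N(\psi^{(i,j)})$ is controlled in absolute value using the redshift coercivity: $|K^N(\psi^{(i,j)})| \lesssim J^N(\psi^{(i,j)}) \cdot N$ (the bound holds with a constant in both directions on the compact $r$-range $[r_0, 2m]$). The horizon flux is estimated directly from \eqref{DecayAssumptionsHor1}--\eqref{DecayAssumptionsHor2}: on $\Hp$ the normal $\nh = \partial_v|_r$ and the flux $J^N(\psi^{(i,j)}) \cdot \nh$ is a quadratic expression in the tangential derivatives $\partial_v \psi^{(i,j)} = \partial_t \psi^{(i,j)}$ and angular derivatives $\Omega_k \psi^{(i,j)}$, together with the transversal derivative $Y\psi^{(i,j)}$. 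For $i+j \leq 6$ the tangential pieces are bounded by $v^{-2(q+\delta)}$ using \eqref{DecayAssumptionsHor2} (one needs up to $7$ derivatives here, but they are still tangential commutations $\partial_t^{(i+1)}\Omega_k^{(j)}$ — \textbf{here I would be slightly more careful}: to close we only integrate the flux of $\psi^{(i,j)}$ with $i+j\le6$, whose tangential derivatives are $\partial_t^{(i+1)}\Omega_k^{(j)}\psi$ with $i+1+j\le7$, so strictly the assumption should allow $i+j\le6$ for $\psi^{(i,j)}$ meaning we use \eqref{DecayAssumptionsHor2} for orders up to $7$; in the statement of Theorem~\ref{ThmExp} this is absorbed by taking $i+j\le 6$ there and $\le 6$ here, so one should track the derivative count and either restrict to $i+j\le 5$ at the intermediate step or note \eqref{DecayAssumptionsHor2} is assumed for $\le 6$ which already handles the fluxes of $\psi^{(i,j)}$ with $i+j\le 5$ — in any case there is enough room). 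The transversal derivative $Y\psi^{(i,j)}$ on $\Hp$ is handled by the redshift commutation: commuting the wave equation with $Y$ and using the favourable sign generated by the surface gravity, one propagates the pointwise bound $|Y\psi^{(i,j)}|\lesssim v^{-(q+\delta)}$ along $\Hp$ from the horizon data (or, more cleanly, one simply includes $Y$-commuted fields among the $\psi^{(i,j)}$ and runs the same estimate). Consequently $\int_{\Hp \cap \{v_0 \le v \le v_1\}} J^N(\psi^{(i,j)}) \cdot \nh\, \volh \lesssim \int_{v_0}^{v_1} v^{-2(q+\delta)}\,dv \lesssim v_0^{-2(q+\delta)}$ (using $q+\delta>0$, so actually $\lesssim v_0^{-2(q+\delta)+1}$ if $2(q+\delta)\le1$; in general $\lesssim (|v_1-v_0|+1)v_0^{-2(q+\delta)}$, which is the form claimed — one should keep the crude bound $\int_{v_0}^{v_1}v^{-2(q+\delta)}\le (v_1-v_0)v_0^{-2(q+\delta)}$).

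Then I would combine: the energy identity gives
\begin{equation*}
\int_{\{r=r_0\}\cap\{v_0\le v\le v_1\}} J^N(\psi^{(i,j)})\cdot \nro\,\volro \lesssim \int_{\Hp\cap\{v_0\le v\le v_1\}} J^N(\psi^{(i,j)})\cdot\nh\,\volh + \Big|\int_{\mathcal R} K^N(\psi^{(i,j)})\Big| + (\text{flux on }\{v=v_0\}),
\end{equation*}
and the final-slice flux on $\{v=v_1\}$ has the good sign and is dropped. The main obstacle is the spacetime term $\int_{\mathcal R} K^N$: it is not small, only bounded by the energy, so a naive bound does not close. The standard resolution is a Grönwall-in-$v$ argument: define $E(v) := \int_{\{r_0\le r\le 2m\}\cap\{v\}} J^N(\psi^{(i,j)})\cdot n_{\{v=\text{const}\}}$ (or foliate by a family of spacelike hypersurfaces interpolating between $\{r=r_0\}$ and a shift of $\Hp$), differentiate the energy identity in $v_1$, absorb the $K^N$ term into $\frac{d}{dv}E$ via coercivity, and integrate the resulting differential inequality $\frac{d}{dv}E(v) \lesssim E(v) + (\text{horizon flux density}) $ — wait, the sign: redshift gives $K^N \ge c\, J^N\cdot N \ge 0$ so it has the \emph{good} sign and can be \emph{discarded} from the left after moving to the right with its natural sign, i.e.\ it \emph{helps}, except on a small collar near $r=2m$ where the weights degenerate; there one uses a Hardy-type / zeroth-order absorption paying a large constant, which is why $r_0$ must be taken sufficiently close to $2m$. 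So the correct statement is: the bulk term splits as a good (coercive, discard) part plus an error supported near $r=2m$ bounded by $\int J^N\cdot N$ with an arbitrarily small constant (shrinking the collar, i.e.\ pushing $r_0\to 2m$), which is absorbed; no Grönwall is even needed if $r_0$ is close enough to $2m$. I would then obtain $E(v_1)\lesssim E(v_0) + \int_{v_0}^{v_1} v^{-2(q+\delta)}\,dv$, and integrating $E$ over $[v_0,v_1]$ (since the $\{r=r_0\}$ flux is recovered from the energy identity on each sub-slab) yields \eqref{SecondStatementThm}. Finally \eqref{FirstStatementThm}: apply the above energy bound with $i+j$ up to $6$, feed the resulting flux of $\psi$ and its commuted versions into a Sobolev inequality on the spacelike slices $\{v=\text{const}\}\cap\{r_0\le r\le 2m\}$ (three-dimensional, so $H^2$ controls $L^\infty$; the $6$ commutations give ample room), obtaining the pointwise bound $|\psi(v,r,\omega)|\lesssim v^{-q}$ — here the slower rate $v^{-q}$ (rather than $v^{-(q+\delta)}$) is what is propagated because the zeroth-order term $|\psi|\lesssim v^{-q}$ from \eqref{DecayAssumptionsHor1} enters through the lower-order absorption on the collar and through the fundamental-theorem-of-calculus step $\psi(v,r) = \psi(v,2m) + \int_{2m}^r \partial_r\psi$, the integral being controlled by the energy flux which decays like $v^{-(q+\delta)}$ but the boundary value like $v^{-q}$.

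The step I expect to be the genuine technical heart is the careful accounting near $r=2m$: verifying that the redshift vector field's deformation tensor has the claimed coercivity with a constant that beats the lower-order error terms once $r_0$ is chosen close enough to $2m$, and tracking that the horizon flux of the $6$-times commuted field only involves derivatives that are controlled by \eqref{DecayAssumptionsHor2} (the transversal $Y$-derivative being supplied by the redshift commutation lemma). Everything else — the divergence theorem bookkeeping, the $v$-integration of $v^{-2(q+\delta)}$, the final Sobolev embedding — is routine.
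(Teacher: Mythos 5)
Your overall architecture (redshift vector field $N$ with $[N,\partial_t]=0$ from Dafermos--Rodnianski, energy identity on $\{r_0\le r\le 2m\}\cap\{v_0\le v\le v_1\}$, commutation with the Killing fields, and a fundamental-theorem-of-calculus step in $r$ plus Sobolev for \eqref{FirstStatementThm}) matches the paper. But there is a genuine gap at the heart of the argument: you never establish \emph{decay in $v$} of the slice energy $E(v)=\int_{\{r_0\le r\le 2m\}\cap\{v\}}J^N(\psi)\cdot n_{\{v=\mathrm{const}\}}$. After correctly observing that the redshift bulk term satisfies $K^N\ge b\,J^N\cdot N\ge 0$, you \emph{discard} it and conclude that ``no Gr\"onwall is even needed,'' arriving at $E(v_1)\lesssim E(v_0)+\int_{v_0}^{v_1}v^{-2(q+\delta)}\,dv$. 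This only gives \emph{boundedness}: the slice $\{v=v_0\}$ carries interior data about which the hypotheses \eqref{DecayAssumptionsHor1}--\eqref{DecayAssumptionsHor2} say nothing, so $E(v_0)$ has no a priori decay and the right-hand side of your energy identity cannot be bounded by $v_0^{-2(q+\delta)}$. Consequently neither \eqref{SecondStatementThm} nor the rate $(v')^{-2(q+\delta)}$ you need for $\int_{r'}^{2m}\|\partial_r\psi\|^2_{L^\infty(\mathbb{S}^2)}\,dr$ in the proof of \eqref{FirstStatementThm} follows. The paper closes this by \emph{keeping} the coercive bulk term, which turns the identity \eqref{Eq0} into the integral inequality $f(v_1)+b'\int_{v_0}^{v_1}f(v)\,dv\le f(v_0)+C(|v_1-v_0|+1)v_0^{-2(q+\delta)}$, and then invoking Lemma \ref{LemDecay} (the pigeonhole-type lemma of Luk and Franzen) to convert this into the decay $f(v)\lesssim v^{-2(q+\delta)}$; only then is the $\{v=v_0\}$ boundary term admissible and \eqref{SecondStatementThm} obtained. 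Your own aborted ``Gr\"onwall-in-$v$'' paragraph was on the right track --- the correct differential inequality is $\frac{d}{dv}E\le -bE+(\text{horizon flux density})$, with a \emph{negative} coefficient supplied by the coercivity --- but you talked yourself out of it.

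Two smaller points. First, the flux $J^N(\psi)\cdot n_{\Hp}$ through the horizon with respect to the null generator $n_{\Hp}=\partial_v$ involves only derivatives tangential to $\Hp$ (i.e.\ $\partial_t$ and $\Omega_k$ derivatives), not the transversal derivative $Y\psi$; your detour through a ``redshift commutation'' to control $Y\psi$ along $\Hp$ is unnecessary for this theorem (it is used elsewhere in the paper, in Proposition \ref{PropRedShiftHorizon}, for a different purpose). Second, your worry about the tangential derivative count (needing one more $\partial_t$ or $\Omega_k$ than the order being commuted) is legitimate and corresponds to a genuine, though harmless, bookkeeping looseness in the paper.
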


	\begin{remark}
		Let $N$ be a future directed timelike vector field with $[N, \partial_t] = 0$. It then follows from the assumptions \eqref{DecayAssumptionsHor2} of Theorem \ref{StabHorR} that for $1 \leq v_0 < v_1$ we have
		\begin{equation}\label{AltAssump}
		\int\limits_{\Hp \cap \{v_0 \leq v \leq v_1\} } J^N(\partial_t^{(i)} \Omega_k^{(j)} \psi) \cdot \nh \, \volh \lesssim |v_1 - v_0| \cdot v_0^{-2(q + \delta)}
		\end{equation}
		for all $i,j \in \N_0$, $0 \leq i+j \leq 4$, $k = 1,2,3$. The assumption \eqref{DecayAssumptionsHor2} enters the proof in the form of \eqref{AltAssump}. Hence, Theorem \ref{StabHorR} also holds if we replace \eqref{DecayAssumptionsHor2} by \eqref{AltAssump}.
	\end{remark}

	For the proof of  Theorem \ref{StabHorR} we follow an idea by Luk, \cite{Luk10}. It relies in particular on the redshift vector field of Dafermos and Rodnianski, \cite{DafRod09a}, \cite{DafRod08}, and on the following lemma, the proof of which can be found in \cite{Luk10}, Section 6, and \cite{Fra14}, Section 5, Lemma 5.3. 
	
	\begin{lemma} \label{LemDecay}
		Let $f : [1, \infty) \to \R^+$ satisfy
		\begin{equation*}
		f(v_1) + b \int\limits_{v_0}^{v_1} f(v) \, dv \leq f(v_0) + C\big(|v_1 - v_0|+1\big) \cdot v_0^{-p}
		\end{equation*}
		for all $1 \leq v_0 < v_1$, where $b, C, p$ are positive constants. It then follows that $f(v) \lesssim v^{-p}$.
	\end{lemma}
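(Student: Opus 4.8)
The plan is to turn the given integral inequality into a contractive recursion along unit time-steps. First I would apply the hypothesis on a unit interval, i.e.\ with $v_1 = v_0 + 1$ and arbitrary $v_0 \geq 1$, which gives
\[
f(v_0+1) + b\int_{v_0}^{v_0+1} f(s)\,ds \;\leq\; f(v_0) + 2C\, v_0^{-p}\,.
\]
On its own this is not enough: since the integral term is only known to be nonnegative, we obtain at best the monotonicity-type bound $f(v_0+1) \leq f(v_0) + 2C v_0^{-p}$, which does not decay. The point I would exploit is that the hypothesis also controls $f$ \emph{from below} on the unit interval: applying it with left endpoint $s \in [v_0, v_0+1]$ and right endpoint $v_0 + 1$, and discarding the nonnegative integral term, yields $f(v_0+1) \leq f(s) + 2C v_0^{-p}$, hence $\int_{v_0}^{v_0+1} f \geq f(v_0+1) - 2C v_0^{-p}$. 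Substituting this lower bound into the first inequality gives $(1+b)f(v_0+1) \leq f(v_0) + 2C(1+b)v_0^{-p}$, that is,
\[
f(v_0+1) \;\leq\; \lambda\, f(v_0) + 2C\, v_0^{-p}\,, \qquad \lambda := \tfrac{1}{1+b} \in (0,1)\,,
\]
for every $v_0 \geq 1$.

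Then I would iterate this recursion, stepping down from a given $v$ to the base interval $[1,2)$ in unit steps: after $n := \lfloor v-1\rfloor$ steps one finds $f(v) \leq \lambda^n f(v-n) + 2C\sum_{j=0}^{n-1}\lambda^j (v-1-j)^{-p}$. The prefactor $\lambda^n$ decays exponentially in $v$, and $f(v-n)$ is bounded on $[1,2)$ directly from the hypothesis with $v_0 = 1$ (which gives $f(v) \leq f(1)+Cv$), so the first term is $\lesssim v^{-p}$. For the geometric sum I would split at $j \sim n/2$: for small $j$ the weight $(v-1-j)^{-p}$ is comparable to $v^{-p}$ and the $\lambda^j$ are summable, while for large $j$ the factor $\lambda^j$ is exponentially small and easily beats the bounded weight. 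Both pieces are $\lesssim v^{-p}$, so $f(v) \lesssim v^{-p}$ for $v \geq 2$, and the bound on $[1,2]$ is trivial. (Equivalently, one may first prove the estimate on integers and then pass to general $v$ via $f(v) \leq f(\lfloor v\rfloor) + 2C\lfloor v\rfloor^{-p}$.)

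The only genuinely nonroutine step is realising that the integral term must be bounded \emph{below}, and that this lower bound is itself a consequence of the hypothesis read ``in reverse'' on subintervals; once the contractive recursion $f(v_0+1)\leq \lambda f(v_0) + 2C v_0^{-p}$ is in hand, the rest is a standard geometric-series estimate. I would also stress that working with unit intervals—rather than dyadic intervals $[v_0, 2v_0]$—is essential: over a long interval the source term $C(|v_1-v_0|+1)v_0^{-p}$ is of size $\sim v_0^{1-p}$, and a dyadic scheme would only yield the weaker decay $f \lesssim v^{1-p}$.
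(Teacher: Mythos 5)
Your argument is correct. Note that the paper does not prove Lemma \ref{LemDecay} at all: it defers to \cite{Luk10}, Section 6, and \cite{Fra14}, Lemma 5.3, so there is no in-paper proof to compare against. Your proof is essentially the standard one from those references: the inequality is read ``in reverse'' on subintervals of a unit interval to bound the integral term from below by $f(v_0+1)-2Cv_0^{-p}$, which converts the hypothesis into the contractive recursion $f(v_0+1)\leq \frac{1}{1+b}f(v_0)+2Cv_0^{-p}$, and the resulting discrete convolution of a geometric sequence with $v^{-p}$ is summed exactly as you describe. This is precisely the discrete counterpart of the continuous computation in Lemma \ref{AuxLemma} of the paper ($e^{-Dv}\int e^{Dv'}(v')^{-p}dv'=\mathcal{O}(v^{-p})$). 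Your closing observation is also a genuine point worth making: because the source term carries the factor $|v_1-v_0|+1$, a dyadic pigeonhole scheme on $[v,2v]$ only yields $f\lesssim v^{1-p}$, so the unit-step (or more generally $o(v)$-step) discretisation is not a convenience but a necessity here.
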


	\begin{proof}[Proof of Theorem \ref{StabHorR}:]
		By Proposition 3.3.1 of \cite{DafRod08} there exists an $r_0 \in (0, 2m)$ close to $2m$ and a future directed timelike vector field $N$ with $[N, \partial_t] = 0$ such that 
		\begin{equation}\label{RedShift}
		K^N(\psi) \geq b J^N_\mu(\psi) \cdot N^\mu
		\end{equation}
		holds in $\{r_0 \leq r \leq 2m\} \cap \{v \geq 1\}$ for some constant $b>0$.
		We now integrate \eqref{EnergyEstimate} with $X = N$ over the region $\{r_0 \leq r \leq 2m\} \cap \{v_0 \leq v \leq v_1\}$ to obtain
		\begin{equation}\label{Eq0}
		\begin{split}
		\int\limits_{\{r = r_0\} \cap \{v_0 \leq v \leq v_1\}} &J^N(\psi) \cdot \nro \, \volro + \int\limits_{\{r_0 \leq r \leq 2m\} \cap \{v = v_1\}} J^N(\psi) \cdot n_{\{v = v_1\}} \, \vol_{\{v =v_1\}}\\
		&\qquad + \int\limits_{\{r_0 \leq r \leq 2m\} \cap \{v_0 \leq v \leq v_1\}} K^N \, \vol\\
		&=\int\limits_{\{r_0 \leq r \leq 2m\} \cap \{v = v_0\}} J^N(\psi) \cdot n_{\{v = v_0\}} \, \vol_{\{v =v_0\}} + \int\limits_{\Hp \cap \{v_0 \leq v \leq v_1\}} J^N(\psi) \cdot n_{\Hp} \, \vol_{\Hp} \;.
		\end{split}
		\end{equation}
		Together with \eqref{RedShift}  it follows that
		\begin{equation}\label{Eq1}
		\begin{split}
		\int\limits_{\{r_0 \leq r \leq 2m\} \cap \{v = v_1\}} &J^N(\psi) \cdot n_{\{v = v_1\}} \, \vol_{\{v =v_1\}} + b'\int\limits_{v_0}^{v_1} \;\int\limits_{\{r_0 \leq r \leq 2m\} \cap \{v = v'\}} J^N(\psi) \cdot n_{\{v = v'\}} \, \vol_{\{v =v'\}} \, dv'\\
		&\leq\int\limits_{\{r_0 \leq r \leq 2m\} \cap \{v = v_0\}} J^N(\psi) \cdot n_{\{v = v_0\}} \, \vol_{\{v =v_0\}} + \int\limits_{\Hp \cap \{v_0 \leq v \leq v_1\}} J^N(\psi) \cdot n_{\Hp} \, \vol_{\Hp} \;,
		\end{split}
		\end{equation}
		where $b'>0$. Setting $$f(v') = \int\limits_{\{r_0 \leq r \leq 2m\} \cap \{v = v'\}} J^N(\psi) \cdot n_{\{v = v'\}} \, \vol_{\{v =v'\}}$$ and using \eqref{AltAssump} with $i = j =0$ for the last term on the right hand side of \eqref{Eq1}, we obtain
		\begin{equation*}
		f(v_1) + b' \int\limits_{v_0}^{v_1} f(v) \, dv \leq f(v_0) + C\big(|v_1 - v_0|+1\big) \cdot v_0^{-2(q+ \delta)}
		\end{equation*}
		for all $1 \leq v_0 < v_1$. It now follows from Lemma \ref{LemDecay} that $f(v) \lesssim v^{-2(q + \delta)}$. Using this decay for the first term on the right hand side of \eqref{Eq0}, \eqref{AltAssump} for the second term on the right hand side, and dropping the positive second and third term on the left hand side, we obtain
		\begin{equation*}
		\int\limits_{\{r = r_0\} \cap \{v_0 \leq v \leq v_1\}} J^N(\psi) \cdot \nro \, \volro \lesssim \big(|v_1 - v_0| + 1\big) \cdot v_0^{-2(q + \delta)} 
		\end{equation*}
		for all $1 \leq v_0 < v_1$. Note that the vector fields $\partial_t$ and $\Omega_k$ are Killing and thus commute with the wave equation. The same argument then applied to $\partial_t^{(i)} \Omega_k^{(j)} \psi$, $ 0 \leq i + j \leq 6$, $k = 1,2,3$, instead of $\psi$, gives \eqref{SecondStatementThm}.
		
		It remains to prove \eqref{FirstStatementThm}. We note that after commutation with $\Omega_k$, we do not only obtain $f(v) \lesssim v^{-2(q + \delta)}$, but also
		\begin{equation*}
		\sum_{\substack{k = 1,2,3 \\ 0 \leq  j \leq 2}} \;\;\int\limits_{\{r_0 \leq r \leq 2m\} \cap \{v = v'\}} J^N(\Omega_k^{(j)}\psi) \cdot n_{\{v = v'\}} \, \vol_{\{v =v'\}} \lesssim (v')^{-2(q + \delta)} \;.
		\end{equation*}
		Using $\vol_{\{v = v'\}} = r^2 \sin \theta\, dr \wedge d\theta \wedge d\varphi$ this in particular implies for all $r_0 \leq r' \leq 2m$
		\begin{equation*}
		\sum_{\substack{ k = 1,2,3 \\ j = 0,1,2}} \int\limits_{r'}^{2m} \int\limits_{\mathbb{S}^2}\big( \Omega_k^{(j)} \partial_r \psi\big)^2 \Big|_{v = v'}\sin \theta d\theta d\varphi\, r^2 dr \lesssim  (v')^{-2(q + \delta)} \;.
		\end{equation*}
		Sobolev embedding on the spheres gives
		\begin{equation*}
		\int\limits_{r'}^{2m} ||\partial_r \psi ||_{L^\infty(\mathbb{S}^2)}^2\Big|_{v = v', r} dr \lesssim  (v')^{-2(q + \delta)} \;.
		\end{equation*}
		It now follows from \eqref{DecayAssumptionsHor1} that for $v' \geq 1$ and $r' \in [r_0, 2m]$ we have
		\begin{equation*}
		\begin{split}
		|\psi(v',r', \theta, \varphi)| &\leq \int\limits_{r'}^{2m}|\partial_r \psi|(v', r, \theta, \varphi) \, dr + |\psi(v', 2m, \theta, \varphi)| \\
		&\leq (2m - r')^\frac{1}{2} \cdot \Big(\int\limits_{r'}^{2m} ||\partial_r \psi ||_{L^\infty(\mathbb{S}^2)}^2\Big|_{v = v', r} dr\Big)^{\frac{1}{2}} + |\psi(v', 2m, \theta, \varphi)|  \\
		&\lesssim  (v')^{-(q + \delta)} + (v')^{-q} \\
		&\lesssim (v')^{-q} \;.
		\end{split}
		\end{equation*}
		This concludes the proof.
	\end{proof}

	\subsection{Energy estimates from $r=r_0$ to $r = 0$} \label{SecRS}
	
	Let us consider the orthonormal frame adapted to the constant $r$ hypersurfaces $\Sigma_r$:
	\begin{align}\label{adframe}
	e_0=-(\frac{2m}{r}-1)^\frac{1}{2}\partial_r,&&e_1=(\frac{2m}{r}-1)^{-\frac{1}{2}}\partial_t,&&e_2=\frac{1}{r}\partial_\theta,&&e_3=\frac{1}{r\sin\theta}\partial_\varphi.
	\end{align}
	The $e_0$ current of $\psi$ reads
	\begin{align}\label{adJe0}
	J^{e_0}_a(\psi)=(e_0)^b(\partial_a\psi\partial_b\psi-\frac{1}{2}g_{ab}\partial^k\psi\partial_k\psi),&&a,b=0,1,2,3.
	\end{align}
	Note that $J^{e_0}_0(\psi)=\frac{1}{2}[(e_0\psi)^2+|\overline{\nabla}\psi|^2]$, where $\overline{\nabla}$ is the connection intrinsic to $\Sigma_r$. We also use the notation $\slashed{\nabla}$ to denote the connection intrinsic to the round spheres $r\mathbb{S}^2$ of radius $r$.
	The non-vanishing components of the deformation tensor $\pi^{e_0}_{ab}$ of $e_0$ equal
	\begin{align}\label{pie0}
	\pi^{e_0}_{11}=\frac{m}{r^2}(\frac{2m}{r}-1)^{-\frac{1}{2}},\qquad \pi^{e_0}_{22}=\pi^{e_0}_{33}=-(\frac{2m}{r}-1)^\frac{1}{2}\frac{1}{r}
	\end{align}
	Let $\mathcal{D}(\Sigma_{r_0,t_0,T})$ denote the domain of dependence of $\Sigma_{r_0,t_0,T}:=\{r=r_0\}\cap\{t_0\leq t\leq t_0+T\}$, $T>0$, as depicted below:
	\begin{figure}[h]
		\centering
		\def\svgwidth{7cm}
		\includegraphics[scale=1.5]{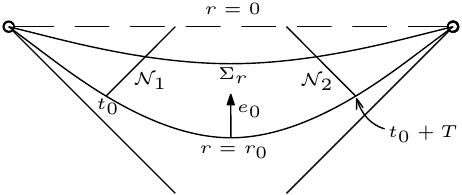}
		\caption{} \label{FigintdomSch}
	\end{figure}

	\noindent
In the following we will use energy estimates in the region $\mathcal{D}(\Sigma_{r_0,t_0,T}) \cap J^-(\Sigma_r)$, $0 <r\leq r_0$, with multipliers of the form $f(r)e_0$, where the weight $f(r)$ is chosen suitably. These kind of $r$-weighted multipliers are adapted to the singular geometry of the Schwarzschild black hole at $r=0$ and are thus very different from the multipliers used in the interior of the Kerr and Reissner-Nordstr\"om black holes, cf.\ \cite{Fra14}, \cite{Gajic15a}, \cite{Gajic15b}, \cite{LukOh15}, \cite{LukSb}.
	
	The affine null generators on the ingoing null hypersurfaces $\mathcal{N}_1,\mathcal{N}_2$ are $n_1=e_0+e_1,n_2=e_0-e_1$ respectively. The intrinsic volume form of $\Sigma_r$ equals
	\begin{align}\label{volSigmarlapse}
	\mathrm{vol}_{\Sigma_r}=(\frac{2m}{r}-1)^\frac{1}{2}r^2dtd\mathbb{S}^2
	\end{align} 
	and it is related to the spacetime volume form via
	\begin{align}\label{volsptim}
	\mathrm{vol}=(\frac{2m}{r}-1)^{-\frac{1}{2}}dr\wedge\mathrm{vol}_{\Sigma_r}.
	\end{align}
	We will use the conclusion of Theorem \ref{StabHorR} as our input. Recall that for fixed $r\in(0,2m)$, the coordinate $v$ is a translation of $t$. 
	
	\subsubsection{Logarithmic upper bound for $\psi$}\label{subsubsec:loguppbd}
	
	We begin by establishing a logarithmic upper bound for the wave.
	
	\begin{proposition}\label{r0to0prop}
		Let $\psi$ be a smooth solution to the wave equation, $\square_g\psi=0$, in the Schwarzschild interior region $\{0< r\leq r_0\}$, $r_0\in(0,2m)$, satisfying $|\psi(r_0,t,\omega)|\lesssim (|t|+1)^{-q}$ and
		\begin{align}\label{initpsir0}
		\int_{\Sigma_{r_0}\cap D(\Sigma_{r_0,t_0,T})}J^{e_0}_0(\partial_t^{(i)}\Omega_k^{(j)}\psi)\mathrm{vol}_{\Sigma_r}\lesssim (T+1)(|t_0|+1)^{-2(q+\delta)},
		\end{align}
		for all $t_0$, $i+j\leq 2$, $k=1,2,3$ and fixed $q,T>0$, $\delta\ge0$. Then, the following weighted energy is bounded by its initial value:
		\begin{align}\label{psieneste0to0}
		r^\frac{3}{2}\int_{\Sigma_r\cap \mathcal{D}(\Sigma_{r_0,t_0,T})}J^{e_0}_0(\partial_t^{(i)}\Omega_k^{(j)}\psi)\mathrm{vol}_{\Sigma_r}\lesssim
		r_0^\frac{3}{2}\int_{\Sigma_{r_0}\cap D(\Sigma_{r_0,t_0,T})}J^{e_0}_0(\partial_t^{(i)}\Omega_k^{(j)}\psi)\mathrm{vol}_{\Sigma_r},
		\end{align}
		for every $r\in(0,r_0]$, $i+j\leq2$. In addition, $\psi$ satisfies the pointwise bound 
		\begin{align}\label{Linftypsiest}
		|\psi(r,t,\omega)|\lesssim |\log r|(|t|+1)^{-q-\delta}+ (|t|+1)^{-q},
		\end{align}
		for all $\{0<r\leq r_0\}$.
	\end{proposition}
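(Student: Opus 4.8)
The plan is to prove \eqref{psieneste0to0} by an energy estimate with the timelike multiplier $X=f(r)\,e_0$, taking the weight $f(r)=r^{3/2}$. This weight is the natural one: the logarithmic profile \eqref{AsymptoticExp} (cf.\ Remark \ref{rem:symexp}) makes $\int_{\Sigma_r}J^{e_0}_0(\psi)\,\mathrm{vol}_{\Sigma_r}$ grow like $r^{-3/2}$ as $r\to0$, so that $r^{3/2}$ times it should stay bounded. I would integrate the identity \eqref{EnergyEstimate} for $X=f(r)e_0$ over the region $\mathcal{R}(r):=\mathcal{D}(\Sigma_{r_0,t_0,T})\cap\{r\le r'\le r_0\}$, whose future boundary consists of $\Sigma_r\cap\mathcal{D}(\Sigma_{r_0,t_0,T})$ together with portions of the two ingoing null hypersurfaces $\mathcal{N}_1,\mathcal{N}_2$ bounding the domain of dependence, and whose past boundary is $\Sigma_{r_0}\cap\mathcal{D}(\Sigma_{r_0,t_0,T})$. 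Since $e_0$ is the future unit normal of $\{r=\mathrm{const}\}$, the flux of $J^{fe_0}$ through $\Sigma_{r'}$ equals $f(r')\int_{\Sigma_{r'}}J^{e_0}_0\,\mathrm{vol}_{\Sigma_{r'}}\ge0$, and the fluxes through $\mathcal{N}_1,\mathcal{N}_2$ are $f(r)\,T(e_0,e_0\pm e_1)=f(r)\big[\tfrac12(e_0\psi\pm e_1\psi)^2+\tfrac12|\slashed{\nabla}\psi|^2\big]\ge0$; dropping the latter two, the inequality \eqref{psieneste0to0} will follow as soon as the bulk term $\int_{\mathcal{R}(r)}K^{fe_0}\,\mathrm{vol}$ has the favourable (non-negative) sign.

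The crux is therefore the sign of $K^{fe_0}$. Using $\pi^{fe_0}_{ab}=f\,\pi^{e_0}_{ab}+f'\,(\tfrac{2m}{r}-1)^{1/2}(e_0)_a(e_0)_b$ (which relies on $(dr)_a=(\tfrac{2m}{r}-1)^{1/2}(e_0)_a$) together with \eqref{pie0}, a direct computation gives
\[
K^{fe_0}(\psi)=\tfrac12\Big[C_0(r)\,(e_0\psi)^2+C_1(r)\,(e_1\psi)^2+C_2(r)\,|\slashed{\nabla}\psi|^2\Big],
\]
and for $f=r^{3/2}$ one obtains the clean formulae
\[
C_0=\frac{r/2}{\sqrt{2m-r}},\qquad C_1=\frac{8m-\tfrac72 r}{\sqrt{2m-r}},\qquad C_2=\frac{2m-\tfrac32 r}{\sqrt{2m-r}}.
\]
Thus $C_0,C_1>0$ throughout $(0,2m)$, while $C_2\ge0$ precisely on $\{r\le\tfrac{4}{3}m\}$. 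Here the power $3/2$ is essentially forced: a smaller power makes $C_0$ negative near $r=0$, and $3/2$ is exactly borderline, with $C_0$ vanishing to first order at $r=0$ — this cancellation (the leading $m$-terms in $C_0$ cancel) is what makes the scheme work. Hence for $r\le\tfrac43 m$ the bulk term is non-negative and \eqref{psieneste0to0} follows directly. For $r_0\in(\tfrac43 m,2m)$ one first bridges the compact range $[\tfrac43 m,r_0]$, where $|C_2|$ is bounded and, by \eqref{volsptim}, $\big|\int_{\mathcal{R}(r)}K^{fe_0}\mathrm{vol}\big|\lesssim\int_r^{r_0}\big(\int_{\Sigma_{r'}}J^{e_0}_0\,\mathrm{vol}_{\Sigma_{r'}}\big)dr'$; a Grönwall argument in $r$ then bounds the weighted energy on $\Sigma_{4m/3}$ by a constant times that on $\Sigma_{r_0}$, and concatenating with the clean estimate on $(0,\tfrac43 m]$ yields \eqref{psieneste0to0} for all $r\in(0,r_0]$. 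Since $\partial_t$ and $\Omega_k$ are Killing and commute with $\Box_g$, applying the same estimate to $\partial_t^{(i)}\Omega_k^{(j)}\psi$ gives \eqref{psieneste0to0} for all $i,j$ (in particular $i+j\le2$).

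For the pointwise bound \eqref{Linftypsiest}, fix $(r,t,\omega)$ with $0<r\le r_0$ and choose the base segment $\Sigma_{r_0,t_0,T}$ so that its two endpoints are the points at which the past null geodesics from $(r,t,\omega)$ meet $\{r'=r_0\}$; then $T=2[r^*(r)-r^*_0]\le 2[r^*(0)-r^*_0]$ is bounded, $(|t_0|+1)\sim(|t|+1)$, and $(r,t,\omega)\in\mathcal{D}(\Sigma_{r_0,t_0,T})$. Feeding \eqref{initpsir0} into \eqref{psieneste0to0} gives $\sum_{i+j\le2}\int_{\Sigma_r\cap\mathcal{D}}J^{e_0}_0(\partial_t^{(i)}\Omega_k^{(j)}\psi)\,\mathrm{vol}_{\Sigma_r}\lesssim r^{-3/2}(|t|+1)^{-2(q+\delta)}$. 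Since $\partial_t$ and $\Omega_k$ commute with $e_0$, a Sobolev embedding $H^2\hookrightarrow L^\infty$ on the three-manifold $\Sigma_r\cong\R_t\times\mathbb{S}^2$ — keeping the weight from $\mathrm{vol}_{\Sigma_r}=(\tfrac{2m}{r}-1)^{1/2}r^2\,dt\,d\mathbb{S}^2$ explicit — turns this into $\|e_0\psi\|^2_{L^\infty(\Sigma_r\cap\mathcal{D})}\lesssim\big[(\tfrac{2m}{r}-1)^{1/2}r^2\big]^{-1}r^{-3/2}(|t|+1)^{-2(q+\delta)}$, and hence, using $\partial_r=-(\tfrac{2m}{r}-1)^{-1/2}e_0$ and simplifying the $r$-powers, $|\partial_r\psi(r,t,\omega)|\lesssim r^{-1}(|t|+1)^{-(q+\delta)}$ uniformly on $(0,r_0]$. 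Integrating radially from $r_0$ to $r$ and using the hypothesis $|\psi(r_0,\cdot)|\lesssim(|t|+1)^{-q}$,
\[
|\psi(r,t,\omega)|\le|\psi(r_0,t,\omega)|+\int_r^{r_0}|\partial_{r'}\psi(r',t,\omega)|\,dr'\lesssim(|t|+1)^{-q}+(|t|+1)^{-(q+\delta)}\log\tfrac{r_0}{r},
\]
which is \eqref{Linftypsiest} since $\log\tfrac{r_0}{r}\lesssim1+|\log r|$ and $\delta\ge0$.

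The main obstacle is the sign analysis in the second step: that $f=r^{3/2}$ produces a non-negative bulk term hinges on the cancellation rendering $C_0$ non-negative and lower order near $r=0$, and on correctly handling the region $\{r>\tfrac43 m\}$ where $C_2$ turns negative (dealt with by the separate Grönwall bridge). A secondary, routine but fiddly, point is tracking the $r$-weights through the Sobolev step, since the spheres $\{r=\mathrm{const}\}$ shrink as $r\to0$; this causes no difficulty provided one works on the fixed manifold $\R\times\mathbb{S}^2$ and carries the factor $(\tfrac{2m}{r}-1)^{1/2}r^2$ by hand.
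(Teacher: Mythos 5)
Your proposal is correct and follows essentially the same route as the paper: the energy identity with the multiplier $r^{3/2}e_0$ over $\mathcal{D}(\Sigma_{r_0,t_0,T})\cap J^-(\Sigma_r)$, the cancellation $\frac{m}{r}(\frac{2m}{r}-1)^{-1}-\frac12=O(r)$ that tames the $(e_0\psi)^2$ coefficient, dropping the non-negative null fluxes, Gr\"onwall, commutation with $\partial_t,\Omega_k$, and finally radial integration plus $H^2\hookrightarrow L^\infty$ on $[t,t+T]\times\mathbb{S}^2$ for \eqref{Linftypsiest}. The only (cosmetic) difference is that you compute the bulk coefficients exactly and get a signed bulk for $r\le\frac43 m$ with a Gr\"onwall bridge on $[\frac43 m,r_0]$, whereas the paper absorbs the indefinite part into an $O(1)\cdot J^{e_0}_0$ term and applies Gr\"onwall on all of $(0,r_0]$.
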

	\begin{proof}
		We apply the energy estimate in the spacetime domain $\mathcal{D}(\Sigma_{r_0,t_0,T})\cap J^{-}(\Sigma_r)$, using the vector field $r^\frac{3}{2}e_0$ as a multiplier\footnote{The exponent $\frac{3}{2}$ is motivated as follows: the standard energy identity for the wave equation \emph{with $e_0$ as a multiplier} gives rise to a bulk term $(e_0\partial_t^{(i)}\Omega_k^{(j)}\psi)^2$ the coefficient of which has leading order behaviour $-(\frac{2m}{r}-1)^{-\frac{1}{2}}\text{tr}\pi^{e_0}\sim\frac{3}{2}\frac{1}{r}$, as $r\rightarrow0$. This is non-integrable. Thus, the favourable weight $r^\frac{3}{2}$ helps us absorb this dangerous term and obtain a standard Gronwall type energy estimate.}:
		\begin{align}\label{Stokespsi}
		\notag&r^\frac{3}{2}\int_{\Sigma_r\cap \mathcal{D}(\Sigma_{r_0,t_0,T})}J^{e_0}_0(\psi)\mathrm{vol}_{\Sigma_r}
		+\sum_{l=1,2}\int_{\mathcal{N}_l\cap J^{-}(\Sigma_r)}r^\frac{3}{2}J^{e_0}(\psi)\cdot n_l\mathrm{vol}_{\mathcal{N}_l}+\int_{\mathcal{D}(\Sigma_{r_0,t_0,T})\cap J^{-}(\Sigma_r)}K^{r^\frac{3}{2}e_0}(\psi)\mathrm{vol}\\
		=&\,r_0^\frac{3}{2}\int_{\Sigma_{r_0}\cap D(\Sigma_{r_0,t_0,T})}J^{e_0}_0(\psi)\mathrm{vol}_{\Sigma_r}
		\end{align}
		According to (\ref{pie0}), the bulk integrand equals:
		\begin{align}\label{Kr3/2e0}
		K^{r^\frac{3}{2}e_0}(\psi)=&\,\pi^{r^\frac{3}{2}e_0}_{\mu\nu}T^{\mu\nu}(\psi)=r^\frac{3}{2}\pi^{e_0}_{\mu\nu}T^{\mu\nu}(\psi)-(e_0r^\frac{3}{2})T_{00}(\psi)\\
		\notag=&\,r^\frac{3}{2}
		\frac{m}{r^2}(\frac{2m}{r}-1)^{-\frac{1}{2}}\frac{1}{2}\big[(e_0\psi)^2+(e_1\psi)^2-|\slashed{\nabla} \psi|^2\big]
		-r^\frac{3}{2}\frac{1}{r}(\frac{2m}{r}-1)^\frac{1}{2}\big[(e_0\psi)^2-(e_1\psi)^2\big]\\
		\notag&+(\frac{2m}{r}-1)^\frac{1}{2}\frac{3}{2}\frac{1}{r}r^\frac{3}{2}\frac{1}{2}\big[(e_0\psi)^2+|\overline{\nabla}\psi|^2\big]
		\\
		\notag=&\,(\frac{2m}{r}-1)^\frac{1}{2}r^\frac{3}{2}  \frac{1}{2}\frac{1}{r}\bigg[\frac{m}{r}(\frac{2m}{r}-1)^{-1}\big[(e_0\psi)^2+(e_1\psi)^2-|\slashed{\nabla}\psi|^2\big]\\
		\notag&-2(e_0\psi)^2+2(e_1\psi)^2
		+\frac{3}{2}(e_0\psi)^2+\frac{3}{2}|\overline{\nabla}\psi|^2
		\bigg]\\
		\notag=&\,O(1)(\frac{2m}{r}-1)^\frac{1}{2} r^\frac{3}{2} J^{e_0}_0(\psi)
		+(\frac{2m}{r}-1)^\frac{1}{2}r^\frac{3}{2}  \frac{1}{2}\frac{1}{r}\big[ 4(e_1\psi)^2+|\slashed{\nabla}\psi|^2\big],
		\end{align}
		where in the last line we used the identity $\frac{m}{r}(\frac{2m}{r}-1)^{-1}=\frac{1}{2}+O(r)$.
		We notice that the last term has a positive sign. Dropping this term in (\ref{Stokespsi}), along with the flux terms through $\mathcal{N}_l\cap J^{-}(\Sigma_r)$, yields:
		\begin{align}\label{Stokespsi2}
		r^\frac{3}{2}\int_{\Sigma_r\cap \mathcal{D}(\Sigma_{r_0,t_0,T})}  J^{e_0}_0(\psi)\mathrm{vol}_{\Sigma_r}
		\leq&  \,r^\frac{3}{2}_0\int_{\Sigma_{r_0}\cap D(\Sigma_{r_0,t_0,T})}  J^{e_0}_0(\psi)\mathrm{vol}_{\Sigma_{r_0}}\\
		\notag&+\int^{r_0}_rO(1)s^\frac{3}{2}\int_{\Sigma_s\cap\mathcal{D}(\Sigma_{r_0,t_0,T})}  J^{e_0}_a(\psi) \mathrm{vol}_{\Sigma_s} ds
		\end{align}
		Thus, by Gronwall's inequality we obtain the energy estimate
		\begin{align}\label{psienest}
		r^\frac{3}{2}\int_{\Sigma_r\cap \mathcal{D}(\Sigma_{r_0,t_0,T})}  J^{e_0}_0(\psi)\mathrm{vol}_{\Sigma_r}\lesssim r^\frac{3}{2}_0\int_{\Sigma_{r_0}\cap D(\Sigma_{r_0,t_0,T})}  J^{e_0}_0(\psi)\mathrm{vol}_{\Sigma_{r_0}},
		\end{align}
		for all $r\in(0,r_0]$.
		After commuting with the Killing vector fields $\partial_t^{(i)}\Omega_k^{(j)}$, we arrive at (\ref{psieneste0to0}). In particular, taking into account the volume form (\ref{volSigmarlapse}) and the scaling of $e_0\sim-r^{-\frac{1}{2}}\partial_r$, we have the bound
		\begin{align}\label{e0psienest}
		\sum_{i+j\leq2}\int_{\Sigma_r\cap \mathcal{D}(\Sigma_{r_0,t_0,T})}  r^2(\partial_r\partial_t^{(i)}\Omega_k^{(j)}\psi)^2dtd\mathbb{S}^2\lesssim \sum_{i+j\leq2}\int_{\Sigma_{r_0}\cap \mathcal{D}(\Sigma_{r_0,t_0,T})}  r_0^2(\partial_r\partial_t^{(i)}\Omega_k^{(j)}\psi)^2dtd\mathbb{S}^2
		\end{align}
		for all $r\in(0,r_0]$, $k=1,2,3$.
		
		The pointwise bound (\ref{Linftypsiest}) follows now by first using the fundamental theorem of calculus along the $e_0$ curves and then the Sobolev embedding $H^2([t_0,t_0+T]\times\mathbb{S}^2)\hookrightarrow L^\infty([t_0,t_0+T]\times\mathbb{S}^2)$:
		\begin{align}\label{Linftyest}
		\psi(r,t,\omega)=&\int^{r_0}_{r}\partial_s\psi(s,t,\omega) ds +\psi(r_0,t,\omega)\\
		\tag{C-S}|\psi(r,t,\omega)|\leq&\, \int^{r_0}_{r}|\partial_s\psi(s,t,\omega)| ds +|\psi(r_0,t,\omega)|\\
		\notag\lesssim&\int^{r_0}_{r}\|\partial_s\psi(s,t,\omega)\|_{H^2([t,t+T]\times\mathbb{S}^2)}ds+(|t|+1)^{-q}\\
		\tag{by (\ref{e0psienest}) and initial assumption (\ref{initpsir0})}\lesssim &\,\bigg|\int^{r}_{r_0}\frac{1}{s}ds\bigg|(|t|+1)^{-(q+\delta)}+|(|t|+1)^{-q}\\
		\notag\lesssim&\,|\log \frac{r}{r_0}|\cdot (|t|+1)^{-(q+\delta)}+(|t|+1)^{-q}
		\end{align}
	\end{proof}
	\begin{remark}
		The pointwise estimate (\ref{Linftyest}) uses up to three derivatives of $\psi$ in $L^2$. A corresponding logarithmic bound for the $L^2$ norm of $\psi$ actually holds without using Sobolev embedding. Indeed, taking the $L^2$ norms of both sides in the first line of (\ref{Linftyest}) we have:
		\begin{align}\label{L2psiest}
		\|\psi(t,r,\omega)\|_{L^2([t_0,t_0+T]\times\mathbb{S}^2)}\leq&\int^{r_0}_{r}\|\partial_s\psi\| _{L^2([t_0,t_0+T]\times\mathbb{S}^2)}ds +\|\psi(t,r_0,\omega)\|_{L^2([t_0,t_0+T]\times\mathbb{S}^2)}\\
		\tag{by (\ref{e0psienest}) and initial assumptions}\lesssim &\,\bigg|\int^{r}_{r_0}\frac{1}{s}ds\bigg|(1+T)(|t_0|+1)^{-(q+\delta)}+
		(1+T)(|t_0|+1)^{-q}\\
		\notag\lesssim&\,|\log \frac{r}{r_0}|\cdot (1+T)(|t_0|+1)^{-(q+\delta)}+(1+T)(|t_0|+1)^{-q}
		\end{align}
	\end{remark}
	\begin{remark}
		It is easy to check that the logarithmic bound (\ref{Linftypsiest}) is saturated by locally homogeneous solutions $\psi\big|_U=\psi\big|_U(r)$, cf.\ Remark \ref{rem:symexp}.
	\end{remark}
	\subsubsection{Renormalized energy estimates from $r=r_0$ to $r=0$}

	According to Proposition \ref{r0to0prop}, we should expect a logarithmic leading order behaviour of $\psi$ in the interior region. To prove this rigorously, we derive corresponding energy estimates for the renormalised function $\frac{\psi}{\log r}$ and proceed to prove the asymptotic expansion in Theorem \ref{ThmExp} for $\psi$.
	\begin{theorem}\label{stabr0to0thm}
		Let $\psi$ be a smooth solution to the wave equation, $\square_g\psi=0$, in the Schwarzschild interior region $\{0< r\leq r_0<2m\}$, satisfying $|\psi(r_0,t,\omega)|\lesssim (|t|+1)^{-q}$ and
		\begin{align}\label{initrenpsir0}
		\int_{\Sigma_{r_0}\cap D(\Sigma_{r_0,t_0,T})}J^{e_0}_0(\partial_t^{(i)}\Omega_k^{(j)}\frac{\psi}{\log r})\mathrm{vol}_{\Sigma_{r_0}}\lesssim  (1+T)(|t_0|+1)^{-2q-2\delta}
		\end{align}
		for all $t_0$, $i+j\leq 5$, $k=1,2,3$, and fixed $T,q>0$, $\delta\ge0$. Then 
		the following weighted energy estimate is valid:
		\begin{align}\label{psirenenest}
		&r^\frac{3}{2}|\log r|^4\int_{\Sigma_r\cap D(\Sigma_{r_0,t_0,T})} \big[(e_0\partial_t^{(i)}\Omega_k^{(j)}\frac{\psi}{\log r})^2+|\overline{\nabla}\partial_t^{(i)}\Omega_k^{(j)}\frac{\psi}{\log r}|^2\big]\mathrm{vol}_{\Sigma_r}\\
		\notag  \lesssim&\, r^\frac{3}{2}_0|\log r_0|^4\int_{\Sigma_{r_0}\cap D(\Sigma_{r_0,t_0,T})} J^{e_0}_0(\partial_t^{(i)}\Omega_k^{(j)}\frac{\psi}{\log r})\mathrm{vol}_{\Sigma_{r_0}}\\
		\notag&+\frac{|\log r_0|^3}{r^\frac{1}{2}_0}\int_{\Sigma_{r_0}\cap D(\Sigma_{r_0,t_0,T})} (\partial_t^{(i)}\Omega_k^{(j)}\frac{\psi}{\log r_0})^2\mathrm{vol}_{\Sigma_{r_0}}
		\end{align}
		for every $r\in(0,r_0]$. Moreover, $\psi$ equals
		\begin{align}\label{psiexp}
		\psi=A(t,\omega)\log r+B(t,\omega)+P(r,t,\omega),
		\end{align}
		where $A,B,P$ are smooth functions satisfying $|A|\lesssim (|t|+1)^{-q-\delta},|B|\lesssim (|t|+1)^{-q},|P(r,t,\omega)|\lesssim r|\log r|(|t|+1)^{-q}$ and
		\begin{align}\label{Penest}
		\int_{\Sigma_r\cap D(\Sigma_{r_0,t_0,T})} \big[(e_0\partial_t^{(i)}\Omega_k^{(j)}P)^2+|\overline{\nabla}\partial_t^{(i)}\Omega_k^{(j)}P|^2\big]\mathrm{vol}_{\Sigma_r}\lesssim r^\frac{1}{2}|\log r|^2(1+T)(|t|_0+1)^{-2q}
		\end{align}
		for $i+j\leq 2$, $r\in(0,r_0]$. 
	\end{theorem}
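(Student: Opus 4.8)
\emph{Strategy.} The proof runs three nested $r$-weighted energy estimates in the spirit of Proposition \ref{r0to0prop}, each one peeling off one term of the expansion \eqref{psiexp}. Throughout we may assume, shrinking $r_0$ and gluing the two ranges of $r_0$ by a finite-energy estimate, that $r_0<1$, so that $\log r<0$ and $|\log r|=-\log r$ on $(0,r_0]$.

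\emph{Step 1: the renormalised estimate \eqref{psirenenest}.} A direct computation (which also underlies Remark \ref{rem:symexp}) gives $\Box_g(\log r)=r^{-2}$, so the renormalised field $v:=\psi/\log r$ solves the inhomogeneous equation
\[
\Box_g v \;=\; -\frac{1}{\log r}\Big(2\big(1-\tfrac{2m}{r}\big)\tfrac{1}{r}\,\partial_r v+\tfrac{v}{r^2}\Big)\;=:\;F ,
\]
whose decisive feature is the prefactor $1/\log r\to0$ as $r\to0$. The plan is to apply the identity \eqref{EnergyEstimate} with multiplier $X=r^{3/2}|\log r|^4 e_0$ on $\mathcal{D}(\Sigma_{r_0,t_0,T})\cap J^-(\Sigma_r)$. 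As in \eqref{Kr3/2e0}, the factor $r^{3/2}$ is there to absorb the non-integrable bulk term $-\mathrm{tr}\,\pi^{e_0}\sim\tfrac32 r^{-1}$ coming from the $e_0$-multiplier, so that the homogeneous bulk reduces to $O(1)(\tfrac{2m}{r}-1)^{1/2}r^{3/2}|\log r|^4 J^{e_0}_0(v)$, plus a positive term one drops, plus lower-order terms from $\partial_r|\log r|^4=-4|\log r|^3/r$. The inhomogeneity contributes $\int Xv\cdot F\,\mathrm{vol}$; using $\mathrm{vol}\sim r^2\,dr\,\mathrm{vol}_{\Sigma_r}$ and $e_0\sim-r^{-1/2}\partial_r$ this splits into a term $\sim r|\log r|^3(\partial_r v)^2$ with the good sign and a cross term $\sim r|\log r|^3 v\,\partial_r v$; integrating the latter by parts in $r$ yields a good-signed zeroth-order bulk term, a controlled $\Sigma_{r_0}$-boundary term, and a $\Sigma_r$-boundary term rendered manifestly small by the a priori bound $|v|=|\psi|/|\log r|\lesssim(|t|+1)^{-q}$ of Proposition \ref{r0to0prop}. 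Gronwall in $r$ then closes the estimate for $v$; since $\partial_t,\Omega_k$ are Killing and commute with $\Box_g$ and with $e_0$, and the structure of $F$ is preserved under $\partial_t^{(i)}\Omega_k^{(j)}$, the same argument for $\partial_t^{(i)}\Omega_k^{(j)}v$ — using the pointwise bounds for $\partial_t^{(i)}\Omega_k^{(j)}\psi$ from Proposition \ref{r0to0prop}, which is why $i+j\le5$ derivatives are assumed on $\Sigma_{r_0}$ — gives \eqref{psirenenest}, the second term on whose right-hand side records the $\Sigma_{r_0}$-contribution of the integration by parts.

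\emph{Step 2: extracting $A$.} From \eqref{psirenenest}, $\mathrm{vol}_{\Sigma_r}\sim r^{3/2}\,dt\,d\mathbb{S}^2$ and $e_0\sim-r^{-1/2}\partial_r$, one reads off $\|\partial_r\partial_t^{(i)}\Omega_k^{(j)}v\|_{L^2(\{t\}\times\mathbb{S}^2)}\lesssim r^{-1}|\log r|^{-2}$ with the appropriate $t$-decay; since $\int_0^{r_0}\frac{dr}{r|\log r|^2}<\infty$, the functions $\partial_t^{(i)}\Omega_k^{(j)}v$ extend continuously (in $L^2(dt\,d\mathbb{S}^2)$, then pointwise by Sobolev embedding on the spheres) to $r=0$, defining $A(t,\omega):=\lim_{r\to0}\psi/\log r\in C^\infty(\R\times\mathbb{S}^2)$. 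The bound $|A|\lesssim(|t|+1)^{-(q+\delta)}$ is inherited from the $(|t_0|+1)^{-2(q+\delta)}$ rate on the right of \eqref{initrenpsir0}, which via Theorem \ref{StabHorR} encodes the faster $v^{-(q+\delta)}$ decay of the derivatives of $\psi$ on $\Sigma_{r_0}$; the same integration gives $\|v-A\|_{L^2(dt\,d\mathbb{S}^2)}\lesssim|\log r|^{-1}$, so that $\psi-A\log r=(v-A)\log r$ is \emph{bounded}, with $|\psi-A\log r|\lesssim(|t|+1)^{-q}$.

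\emph{Step 3: extracting $B$ and $P$.} The function $w:=\psi-A\log r$ solves
\[
\Box_g w \;=\; -(\log r)\,\Box_g A-\frac{A}{r^2},\qquad \Box_g A=-\big(1-\tfrac{2m}{r}\big)^{-1}\partial_t^2 A+\tfrac{1}{r^2}\Delta_{\mathbb{S}^2}A ,
\]
a right-hand side that is $O(|\log r|\,r^{-2})$ near $r=0$ but involves only the already-controlled $A$ and its $t,\omega$-derivatives, not $\partial_r w$. Running an energy estimate for $w$ with a weighted multiplier $r^{3/2}|\log r|^{\beta}e_0$ (suitable $\beta\ge0$) — the coercive bulk again absorbing the $\partial_r w$ from $\int Xw\cdot\Box_g w\,\mathrm{vol}$ via Young, the source being $r$-integrable because $A$ has enough decaying angular derivatives, controlled by \eqref{psirenenest} — shows $w$ and its commuted versions extend continuously to $r=0$, defining $B(t,\omega):=\lim_{r\to0}(\psi-A\log r)\in C^\infty(\R\times\mathbb{S}^2)$ with $|B|\lesssim(|t|+1)^{-q}$ (the slow rate, inherited from $|\psi(r_0,\cdot)|\lesssim(|t|+1)^{-q}$). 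Setting $P:=\psi-A\log r-B=w-B$, the source $\Box_g P=\Box_g w-\Box_g B$ has the same singular order, and one final energy estimate of this type gives \eqref{Penest}; since $P\to0$ at $r=0$, integrating $\partial_r\partial_t^{(i)}\Omega_k^{(j)}P$ from $r=0$ with $\|\partial_r\partial_t^{(i)}\Omega_k^{(j)}P\|_{L^2(\{t\}\times\mathbb{S}^2)}\lesssim|\log r|$ (read off from \eqref{Penest}), $\int_0^r|\log s|\,ds\sim r|\log r|$, and Sobolev embedding on the spheres yields $|P|\lesssim r|\log r|(|t|+1)^{-q}$. Smoothness of $A,B,P$ follows since all their commuted derivatives are uniform limits of smooth functions.

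\emph{Main obstacle.} The crux is Step 1, specifically the choice of weights. The power $|\log r|^4$ is essentially sharp: it is exactly what forces $\|\partial_r v\|_{L^2}\lesssim r^{-1}|\log r|^{-2}$, hence $v-A=O(|\log r|^{-1})$, hence $\psi-A\log r$ bounded — with any power $\le3$ one only obtains $\psi-A\log r$ growing like a positive power of $|\log r|$, and \eqref{psiexp} would fail. This power must be maintained together with the $r^{3/2}$ factor that tames $-\mathrm{tr}\,\pi^{e_0}$, and then every resulting error term — the cross term from the inhomogeneity, the commutator terms, the fluxes on the ingoing null cones $\mathcal{N}_1,\mathcal{N}_2$, and the $\{v=\mathrm{const}\}$ boundary term — has to be verified to be good-signed, Gronwall-integrable in $r$, or absorbable via the a priori bounds of Proposition \ref{r0to0prop}. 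A secondary, purely bookkeeping, point is to carry the two distinct $t$-decay rates ($(|t|+1)^{-(q+\delta)}$ for $A$ versus $(|t|+1)^{-q}$ for $B$ and $P$) and the derivative count (five derivatives on $\Sigma_{r_0}$ for \eqref{psirenenest}, dropping to two for \eqref{Penest} after the Sobolev embeddings) consistently through all three nested estimates.
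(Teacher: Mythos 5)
Your Steps 1 and 2 follow the paper's proof essentially verbatim: the renormalisation $\phi=\psi/\log r$, the multiplier $r^{3/2}|\log r|^4e_0$ with the integration by parts of the cross term $\phi\,\partial_r\phi$, the extraction of $A$ from the integrability of $r^{-1}|\log r|^{-2}$, and your explanation of why the exponent $4$ is forced are all exactly the paper's argument (equations \eqref{boxvarphi} through \eqref{Aenest}). Your observation that $|v-A|\lesssim|\log r|^{-1}$ already makes $\psi-A\log r$ pointwise bounded is correct, but boundedness is not existence of the limit $B$, and that is where the proposal breaks down.

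The gap is in Step 3. An energy estimate for $w=\psi-A\log r$ with multiplier $r^{3/2}|\log r|^{\beta}e_0$, run from $\Sigma_{r_0}$ towards the singularity with an $r$-integrable source, only yields boundedness of the weighted energy, $r^{3/2}\int_{\Sigma_r}J^{e_0}_0(w)\,\mathrm{vol}_{\Sigma_r}\lesssim 1$, hence $\|\partial_r w\|_{L^2(dt\,d\mathbb{S}^2)}\lesssim r^{-1}$, which is \emph{not} integrable near $r=0$; neither the existence of $B=\lim_{r\to0}w$ nor \eqref{Penest} nor $|P|\lesssim r|\log r|$ follows. (Nor does \eqref{psirenenest} alone suffice: unwinding $e_0w=\log r\,e_0(\psi/\log r)-(\tfrac{2m}{r}-1)^{1/2}\tfrac1r(\psi/\log r-A)$ it gives only $\|\partial_rw\|_{L^2}\lesssim r^{-1}|\log r|^{-1}$, still divergent. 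Moreover, a weight $|\log r|^{\beta}$ with $\beta>0$ introduces an uncancelled bulk term of size $\tfrac{\beta}{r|\log r|}J^{e_0}_0(w)$ with the bad sign and nothing in the $w$-equation to absorb it — unlike the $\phi$-equation, whose first-order term $-\tfrac{4}{r\log r}(\cdot)e_0\phi$ is precisely what the exponent $4$ is tuned to cancel — so effectively only $\beta=0$ is available.) The essential extra move, which your proposal omits, is to first prove that the weighted energy of $w$ (and later of $P$) \emph{tends to zero} as $r\to0$ — this is the computation in \eqref{psi-Alogrenlim} and \eqref{psi1enlim}, using \eqref{psirenenest} together with the convergence $\psi/\log r\to A$ — and then to integrate the energy identity over $\{0<s\le r\}$, i.e.\ \emph{starting from this vanishing datum at $r=0$}. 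Only then does the integrable source produce the quantitative smallness $r^{3/2}\int J^{e_0}_0(w)\lesssim r|\log r|^2$ of \eqref{psi1enest}, whence $\|\partial_rw\|_{L^2}\lesssim r^{-1/2}|\log r|$ is integrable and $B$, $P$ and \eqref{Penest} follow. Two consequences of this reversed direction also need handling and are absent from your sketch: the null flux terms now carry the unfavourable sign and are controlled by a one-dimensional Sobolev embedding in $t$ (this is where a derivative is lost, part of why the count drops from five to two), and the source term is linear in the square root of the energy, so one needs the Gronwall variant $|f|^2\le|h_0|^2+\int|h_1||f|\Rightarrow 2|f|\le 2|h_0|+\int|h_1|$ of \eqref{Gron} rather than the standard one.
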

	\begin{proof}
		Let $\phi:=\frac{\psi}{\log r}$. Then $\phi$ satisfies the equation:
		\begin{align}\label{boxvarphi}
		\square\phi=-2(\frac{2m}{r}-1)^\frac{1}{2}\frac{1}{r\log r}e_0\phi
		-\frac{1}{r^2\log r}\phi
		\end{align}
		We consider the weighted multiplier $X=r^\frac{3}{2}|\log r|^4  e_0$ and compute $K^X(\phi)$ using (\ref{pie0}) and (\ref{boxvarphi}):
		\begin{align}\label{divJhat}
		K^X(\phi)=&-r^\frac{3}{2}|\log r|^4 \bigg[2(\frac{2m}{r}-1)^\frac{1}{2}\frac{1}{r\log r}e_0\phi
		+\frac{1}{r^2\log r}\phi
		\bigg]e_0\phi\\
		\notag&+r^\frac{3}{2}|\log r|^4 \bigg[
		\frac{m}{r^2}(\frac{2m}{r}-1)^{-\frac{1}{2}}\frac{1}{2}\big[(e_0\phi)^2+(e_1\phi)^2-|\slashed{\nabla} \phi|^2\big]\\
		\notag&-\frac{1}{r}(\frac{2m}{r}-1)^\frac{1}{2}\big[(e_0\phi)^2-(e_1\phi)^2\big]
		\bigg]\\
		\notag&+(\frac{2m}{r}-1)^\frac{1}{2}\big[\frac{3}{2}\frac{1}{r}+\frac{4}{r\log r}\big]r^\frac{3}{2}|\log r|^4  J^{e_0}_0(\phi)\\
		\notag=&\,(\frac{2m}{r}-1)^\frac{1}{2}r^\frac{3}{2}(\log r)^4\frac{1}{r}\frac{1}{2} \bigg[-\frac{4}{\log r}(e_0\phi)^2
		-(\frac{2m}{r}-1)^{-\frac{1}{2}}\frac{1}{r\log r}2\phi e_0\phi\\
		\notag&+\frac{m}{r}(\frac{2m}{r}-1)^{-1}\big[(e_0\phi)^2+(e_1\phi)^2-|\slashed{\nabla} \phi|^2\big]
		-2(e_0\phi)^2+2(e_1\phi)^2
		+(\frac{3}{2}+\frac{4}{\log r})  \big[(e_0\phi)^2+|\overline{\nabla} \phi|^2\big]\bigg]\\
		\notag=&\,O(1)(\frac{2m}{r}-1)^\frac{1}{2}r^\frac{3}{2}|\log r|^4  J^{e_0}_0(\phi)
		+(\frac{2m}{r}-1)^\frac{1}{2}r^\frac{3}{2}(\log r)^4\frac{1}{r}\big[
		(2+\frac{2}{\log r})(e_1\phi)^2+(\frac{1}{2}+\frac{2}{\log r})|\slashed{\nabla}\phi|^2\big]\\
		\tag{$\frac{m}{r}(\frac{2m}{r}-1)^{-1}=\frac{1}{2}+O(r)$}&+(\frac{2m}{r}-1)^\frac{1}{2}\frac{(\log r)^3}{r^\frac{1}{2}}\phi \partial_r\phi
		\end{align}
		Then the usual energy estimate with vector field $X=r^\frac{3}{2}|\log r|^4 e_0$, applied to the region $D(\Sigma_{r_0,t_0,T})\cap J^{-}(\Sigma_r)$, gives the following inequality:\footnote{Note that the flux terms through the null boundaries of the region have a favourable sign and therefore can be dropped.}
		\begin{align}\label{enidvarphi}
		\notag&\frac{1}{2}r^\frac{3}{2}|\log r|^4\int_{\Sigma_r\cap D(\Sigma_{r_0,t_0,T})} \big[(e_0\phi)^2+|\overline{\nabla}\phi|^2\big]\mathrm{vol}_{\Sigma_r} \\
		\leq&\,\frac{1}{2}r^\frac{3}{2}_0|\log r_0|^4\int_{\Sigma_{r_0}\cap D(\Sigma_{r_0,t_0,T})} \big[(e_0\phi)^2+|\overline{\nabla}\phi|^2\big]\mathrm{vol}_{\Sigma_{r_0}}\\
		\notag&-\int^{r_0}_r\int_{\Sigma_s\cap D(\Sigma_{r_0,t_0,T})}\frac{1}{s}s^\frac{3}{2}|\log s|^4 \big[(2+\frac{2}{\log s})|e_1\phi|^2+(\frac{1}{2}+\frac{2}{\log s})|\slashed{\nabla}\phi|^2\big]\mathrm{vol}_{\Sigma_s} ds\\
		\notag&-\int^{r_0}_r\int_{\Sigma_s\cap D(\Sigma_{r_0,t_0,T})}\bigg[\frac{(\log s)^3}{s^\frac{1}{2}} \phi\partial_s\phi
		+O(1)s^\frac{3}{2}|\log s|^4  J^{e_0}_0(\phi)\bigg]
		\mathrm{vol}_{\Sigma_s} ds
		\end{align}
Observe that the terms in the third line of (\ref{enidvarphi}), whose coefficients come from the deformation tensor of the multiplier $X=r^\frac{3}{2}|\log r|^4e_0$, are non-integrable in $[0,r_0]$ and potentially uncontrollable. However, we notice that for $s<e^{-4}$
they have a negative sign and therefore the part of the integral from $[0,e^{-4}]$ can be dropped. What remains from these terms, integrated in $[e^{-4},r_0]$ (if $r_0>e^{-4}$), can be thus incorporated in the $O(1)$ term in the end of \eqref{enidvarphi}. Taking this into account, we proceed by integrating by parts the term $\phi\partial_s\phi$ in the bulk:
		\begin{align}\label{enestvarphi}
		&\frac{1}{2}r^\frac{3}{2}|\log r|^4\int_{\Sigma_r\cap D(\Sigma_{r_0,t_0,T})} \big[(e_0\phi)^2+|\overline{\nabla}\phi|^2\big]\mathrm{vol}_{\Sigma_r}-\frac{1}{2}\frac{(\log r)^3}{r^\frac{1}{2}}\int_{\Sigma_r\cap D(\Sigma_{r_0,t_0,T})} \phi^2\mathrm{vol}_{\Sigma_r}\\
		\notag  \leq&\, \frac{1}{2}r^\frac{3}{2}_0|\log r_0|^4\int_{\Sigma_{r_0}\cap D(\Sigma_{r_0,t_0,T})} \big[(e_0\phi)^2+|\overline{\nabla}\phi|^2\big]\mathrm{vol}_{\Sigma_{r_0}}-\frac{1}{2}\frac{(\log r_0)^3}{r^\frac{1}{2}_0}\int_{\Sigma_{r_0}\cap D(\Sigma_{r_0,t_0,T})} \phi^2\mathrm{vol}_{\Sigma_{r_0}} \\
		\tag{$\mathrm{vol}_{\Sigma_s}=[\sqrt{2m}s^\frac{3}{2}+O(s^\frac{5}{2})]dtd\mathbb{S}^2$}&+\int^{r_0}_r\int_{\Sigma_s\cap D(\Sigma_{r_0,t_0,T})}\bigg[\frac{1}{2}\big[\frac{3(\log s)^2}{s^\frac{3}{2}}-\frac{1}{2}\frac{(\log s)^3}{s^\frac{3}{2}}+\frac{3}{2}\frac{(\log s)^3}{s^\frac{3}{2}}\big] \phi^2\\
		\notag&+\frac{(\log s)^3}{s^\frac{1}{2}}O(1) \phi^2+O(1)s^\frac{3}{2}|\log s|^4  J^{e_0}_0(\phi)\bigg]
		\mathrm{vol}_{\Sigma_s} ds
		\end{align}
		Note that $\frac{3(\log s)^2}{s^\frac{3}{2}}-\frac{1}{2}\frac{(\log s)^3}{s^\frac{3}{2}}+\frac{3}{2}\frac{(\log s)^3}{s^\frac{3}{2}}<0$, for $s$ small. 
		Hence, the zeroth order term in the fourth line of (\ref{enestvarphi}) can be dropped and a standard Gronwall's inequality can be applied to obtain the energy estimate:
		\begin{align}\label{enestvarphi2}
		&r^\frac{3}{2}|\log r|^4\int_{\Sigma_r\cap D(\Sigma_{r_0,t_0,T})} \big[(e_0\phi)^2+|\overline{\nabla}\phi|^2\big]\mathrm{vol}_{\Sigma_r}+\frac{|\log r|^3}{r^\frac{1}{2}}\int_{\Sigma_r\cap D(\Sigma_{r_0,t_0,T})} \phi^2\mathrm{vol}_{\Sigma_r}\\
		\notag  \lesssim&\,r^\frac{3}{2}_0|\log r_0|^4\int_{\Sigma_{r_0}\cap D(\Sigma_{r_0,t_0,T})} \big[(e_0\phi)^2+|\overline{\nabla}\phi|^2\big]\mathrm{vol}_{\Sigma_{r_0}}+\frac{|\log r_0|^3}{r^\frac{1}{2}_0}\int_{\Sigma_{r_0}\cap D(\Sigma_{r_0,t_0,T})}\phi^2\mathrm{vol}_{\Sigma_{r_0}},
		\end{align}
		for all $r\in(0,r_0]$. The preceding estimate is also valid for $\partial_t^{(i)}\Omega^{(j)}_k\phi$, since $\partial_t^{(i)}\Omega^{(j)}_k$ commutes with the equation (\ref{boxvarphi}). This completes the proof of (\ref{psirenenest}).
		
		The estimate (\ref{psirenenest}) implies that $\frac{\psi}{\log r}$ has a limit in $L^2([t_0,t_0+T]\times\mathbb{S}^2])$, as $r\rightarrow0$. Indeed, given $r_1,r_2\in(0,r_0]$ it holds:
		\begin{align}\label{psirenLinftyest}
		\|\frac{\psi(r_2,t,\omega)}{\log r_2}-\frac{\psi(r_1,t,\omega)}{\log r_1}\|_{L^2([t_0,t_0+T]\times\mathbb{S}^2])}=\big\|\int^{r_2}_{r_1}\partial_s\frac{\psi}{\log s} ds\big\|_{L^2([t_0,t_0+T]\times\mathbb{S}^2])}\\
		\notag\lesssim \bigg|\int^{r_2}_{r_1}\|\partial_s\frac{\psi}{\log s}\|_{L^2([t_0,t_0+T]\times\mathbb{S}^2])}ds\bigg|\lesssim\bigg|\int^{r_2}_{r_1}\frac{1}{s(\log s)^2}ds\bigg|\lesssim\big|\frac{1}{\log r_2}-\frac{1}{\log r_1}\big|,
		\end{align}
		which implies that the function $\frac{\psi}{\log r}:(0,r_0]\to L^2([t_0,t_0+T]\times\mathbb{S}^2)$ is uniformly continuous and hence it extends continuously to $r=0$. Define 
		\begin{align}\label{Adef}
		A(t,\omega):\overset{L^2}{=}\lim_{r\rightarrow0}\frac{\psi}{\log r} 
		\end{align}
		The smoothness of $A$ follows by repeating the above argument locally for $\partial_t^{(i)}\Omega_k^{(j)}\frac{\psi}{\log r}$, for all $i,j$. 
		Also, we compute:
		\begin{align}\label{psi-Alogrenlim}
		&r^\frac{3}{2}\int_{\Sigma_r\cap D(\Sigma_{r_0,t_0,T})}[e_0(\psi-A\log r)]^2\mathrm{vol}_{\Sigma_r}\\
		\notag=&\,r^\frac{3}{2}\int_{\Sigma_r\cap D(\Sigma_{r_0,t_0,T})}\big[\log r\,e_0\frac{\psi}{\log r}-(\frac{2m}{r}-1)^\frac{1}{2}\frac{1}{r}(\frac{\psi}{\log r}-A)\big]^2\mathrm{vol}_{\Sigma_r}\\
		\notag\leq&\,r^\frac{3}{2}\int_{\Sigma_r\cap D(\Sigma_{r_0,t_0,T})}2|\log r|^2(e_0\frac{\psi}{\log r})^2+2(\frac{2m}{r}-1)\frac{1}{r^2}(\frac{\psi}{\log r}-A)^2\mathrm{vol}_{\Sigma_r}\\
		\tag{by (\ref{psirenenest})}\lesssim&\,\frac{1}{|\log r|^2}+\int_{\Sigma_r\cap D(\Sigma_{r_0,t_0,T})}(\frac{\psi}{\log r}-A)^2dtd\mathbb{S}^2\overset{r\rightarrow0}{\longrightarrow}0
		\end{align}
		On the other hand, it holds
		\begin{align}\label{Alogrenlim}
		r^\frac{3}{2}\int_{\Sigma_r\cap D(\Sigma_{r_0,t_0,T})}[e_0(A\log r)]^2\mathrm{vol}_{\Sigma_r}=&\,r^\frac{3}{2}(\frac{2m}{r}-1)^\frac{3}{2}\int_{\Sigma_r\cap D(\Sigma_{r_0,t_0,T})}A^2(t,\omega)dtd\mathbb{S}^2\\
		\notag\overset{r\rightarrow0}{\longrightarrow}&\,(2m)^\frac{3}{2}\int_{\Sigma_r\cap D(\Sigma_{r_0,t_0,T})}A^2(t,\omega)dtd\mathbb{S}^2
		\end{align}
		and hence, by (\ref{psi-Alogrenlim}) we have
		\begin{align}\label{e0psienlim}
		r^\frac{3}{2}\int_{\Sigma_r\cap D(\Sigma_{r_0,t_0,T})}(e_0\psi)^2\mathrm{vol}_{\Sigma_r}\overset{r\rightarrow0}{\longrightarrow}&\,(2m)^\frac{3}{2}\int_{\Sigma_r\cap D(\Sigma_{r_0,t_0,T})}A^2(t,\omega)dtd\mathbb{S}^2.
		\end{align}
		Iterating the above argument with $\partial_t^{(i)}\Omega_k^{(j)}$ and employing (\ref{initpsir0}),(\ref{psieneste0to0}), we conclude that
		\begin{align}\label{Aenest}
		\int_{\Sigma_r\cap D(\Sigma_{r_0,t_0,T})}(\partial_t^{(i)}\Omega_k^{(j)}A)^2dtd\mathbb{S}^2\lesssim (1+T)(|t_0|+1)^{-2q-2\delta},&&i+j\leq 5.
		\end{align}
		The latter bound also implies that $|A|\lesssim \|A\|_{H^2([t,t+T]\times\mathbb{S}^2)}\lesssim t^{-q-\delta}$. 
		
		Let $\psi_1(r,t,\omega):=\psi(r,t,\omega)-A(t,\omega)\log r$. Then $\psi_1$ satisfies
		\begin{align}\label{boxpsi1}
		\square\psi_1=-\Delta_{\mathbb{S}^2}A\frac{\log r}{r^2}-\frac{A}{r^2}-(\frac{2m}{r}-1)^{-1}\partial_t^2A
		\end{align}
		According to (\ref{psi-Alogrenlim}), the $e_0$ part of the integral of the weighted energy current $J^{r^\frac{3}{2}e_0}(\partial_t^{(i)}\Omega_k^{(j)}\psi_1)$, $i+j\leq5$, tends to zero. The same can be easily seen to hold true for the spatial part of the preceding energy:
		\begin{align}\label{psi1enlim}
		r^\frac{3}{2}\int_{\Sigma_r\cap D(\Sigma_{r_0,t_0,T})}  |\overline{\nabla}\partial_t^{(i)}\Omega_k^{(j)}\psi_1|^2\mathrm{vol}_{\Sigma_r}
		=&\,r^\frac{3}{2}|\log r|^2\int_{\Sigma_r\cap D(\Sigma_{r_0,t_0,T})} |\overline{\nabla}\partial_t^{(i)}\Omega_k^{(j)}(\frac{\psi}{\log r}-A)|^2\mathrm{vol}_{\Sigma_r}\\
		\tag{by (\ref{psirenenest})}\lesssim&\,\frac{1}{|\log r|^2}+r^3|\log r|^2\int_{\Sigma_r\cap D(\Sigma_{r_0,t_0,T})}|\overline{\nabla}\partial_t^{(i)}\Omega_k^{(j)} A|^2dtd\mathbb{S}^2\\
		\tag{$e_1\sim\sqrt{r}\partial_t,\,e_{2,3}\sim\frac{1}{r}\partial_{\theta,\varphi}$}\lesssim&\,\frac{1}{|\log r|^2}+r|\log r|^2\overset{r\rightarrow0}{\longrightarrow}0
		\end{align}
		Hence, applying the energy estimate to $\partial_t^{(i)}\Omega_k^{(j)}\psi_1$, $i+j\leq3$, with the vector $X=r^\frac{3}{2}e_0$ in the domain $D(\Sigma_{r_0,t_0,T})\cap\{0<s\leq r\}$, utilizing (\ref{Kr3/2e0}) for $\partial_t^{(i)}\Omega_k^{(j)}\psi_1$ with the additional term $r^\frac{3}{2}e_0\psi_1\cdot\square\psi_1$ in the RHS, we obtain the following (the null flux terms have an unfavourable sign in this case):
		\begin{align}\label{Stokespsi1}
		&r^\frac{3}{2}\int_{\Sigma_r\cap D(\Sigma_{r_0,t_0,T})} \big[(e_0\partial_t^{(i)}\Omega_k^{(j)}\psi_1)^2+|\overline{\nabla}\partial_t^{(i)}\Omega_k^{(j)}\psi_1|^2\big]\mathrm{vol}_{\Sigma_r}\\
		\tag{this term vanishes by (\ref{psi-Alogrenlim}),(\ref{psi1enlim})}=&\,\lim_{r\rightarrow0}r^\frac{3}{2}\int_{\Sigma_r\cap D(\Sigma_{r_0,t_0,T})} \big[(e_0\partial_t^{(i)}\Omega_k^{(j)}\psi_1)^2+|\overline{\nabla}\partial_t^{(i)}\Omega_k^{(j)}\psi_1|^2\big]\mathrm{vol}_{\Sigma_r}\\
		\tag{recall that $n_{1,2}=e_0\pm e_1$}&+\sum_{l=1,2}\int^r_0\int_{\mathbb{S}^2} s^\frac{3}{2}\big[(n_l\partial_t^{(i)}\Omega_k^{(j)}\psi_1)^2+|\slashed{\nabla}\partial_t^{(i)}\Omega_k^{(j)}\psi_1|^2\big](\frac{2m}{s}-1)^{-\frac{1}{2}} s^2d\mathbb{S}^2ds\\
		\notag&+\int^r_0s^\frac{3}{2}\int_{\Sigma_s\cap D(\Sigma_{r_0,t_0,T})} \big[\frac{4}{s}(e_1\partial_t^{(i)}\Omega_k^{(j)}\psi_1)^2+\frac{1}{s}|\slashed{\nabla}\partial_t^{(i)}\Omega_k^{(j)}\psi_1|^2\big]+O(1)J^{e_0}_0(\partial_t^{(i)}\Omega_k^{(j)}\psi_1)\mathrm{vol}_{\Sigma_s} ds\\
		&\notag-\int^r_0(\frac{2m}{s}-1)^{-\frac{1}{2}}s^\frac{3}{2}\int_{\Sigma_s\cap D(\Sigma_{r_0,t_0,T})}2  e_0\partial_t^{(i)}\Omega_k^{(j)}\psi_1\bigg[\Delta_{\mathbb{S}^2}\partial_t^{(i)}\Omega_k^{(j)}A\frac{\log s}{s^2}+\frac{\partial_t^{(i)}\Omega_k^{(j)}A}{s^2}\\
		\notag&+(\frac{2m}{s}-1)^{-1}\partial_t^2\partial_t^{(i)}\Omega_k^{(j)}A\bigg]\mathrm{vol}_{\Sigma_s} ds
		\\
		\tag{by 1D Sobolev in $t$}\lesssim&\sum_{l=1,2}\int^r_0\int_{\Sigma_s\cap D(\Sigma_{r_0,t_0,T})} s^4\big[(n_l\partial_t^{(i+1)}\Omega_k^{(j)}\psi_1)^2+|\slashed{\nabla}\partial_t^{(i+1)}\Omega_k^{(j)}\psi_1|^2\big]dtd\mathbb{S}^2ds\\
		\notag&+\int^r_0\int_{\Sigma_s\cap D(\Sigma_{r_0,t_0,T})}s^3(\partial_t^{(i+1)}\Omega_k^{(j)}\psi_1)^2+\sum_{l=1}^3(\Omega_l\partial_t^{(i)}\Omega_k^{(j)}\psi_1)^2+s^3J^{e_0}_0(\partial_t^{(i)}\Omega_k^{(j)}\psi_1)dtd\mathbb{S}^2ds\\
		\notag&+\int^r_0|\log s|\|A\|_{H^5([t_0,t_0+T]\times\mathbb{S}^2)}\bigg(s^\frac{3}{2}\int_{\Sigma_s\cap D(\Sigma_{r_0,t_0,T})} (e_0\partial_t^{(i)}\Omega_k^{(j)}\psi_1)^2\mathrm{vol}_{\Sigma_s}\bigg)^\frac{1}{2}ds\\
		\tag{by (\ref{psi-Alogrenlim}),(\ref{psi1enlim}) and the $L^2$ estimate (\ref{L2psiest}) for $\partial_t^{(i')}\Omega_k^{(j')}\psi_1$, $i'+j'\leq4$}\lesssim&\,r|\log r|^2(1+T)(|t_0|+1)^{-2q}\\
		\notag&+\int^r_0|\log s|\sqrt{1+T}(|t_0|+1)^{-q-\delta}\bigg(s^\frac{3}{2}\int_{\Sigma_s\cap D(\Sigma_{r_0,t_0,T})} (e_0\partial_t^{(i)}\Omega_k^{(j)}\psi_1)^2\mathrm{vol}_{\Sigma_s}\bigg)^\frac{1}{2}ds
		\end{align}
		Applying now the following Gronwall type inequality:\footnote{Proof of (\ref{Gron}): Let $F(r):=|h_0|^2+\int^r_0|h_1(s)||f(s)|ds$. Then $\partial_rF=|h_1||f|\leq |h_1|\sqrt{F}$ or $2\partial_r\sqrt{F}\leq |h_1|$, which after integrating yields $2|f|\leq 2\sqrt{F}\leq 2|h_0|+\int^r_0|h_1|ds$. If $|h_0|=|h_0(r)|$ is non-decreasing, which is the case at hand (for small $r$), then the same inequality holds true. Indeed, replacing $|h_0(r)|\leq|h_0(R)|$, $r\leq R$, and applying (\ref{Gron}), we evaluate the resulting estimate at $r=R$. Since $R$ is arbitrary, the conclusion follows.}
		\begin{align}\label{Gron}
		|f(r)|^2\leq |h_0|^2+\int^r_0|h_1(s)||f(s)|ds\qquad\Longrightarrow\qquad2|f(r)|\leq 2|h_0|+\int^r_0|h_1(s)|ds
		\end{align}
		to (\ref{Stokespsi1}) we conclude that 
		\begin{align}\label{psi1enest}
		r^\frac{1}{2}\int_{\Sigma_r} \big[(e_0\partial_t^{(i)}\Omega_k^{(j)}\psi_1)^2+|\overline{\nabla}\partial_t^{(i)}\Omega_k^{(j)}\psi_1|^2\big]\mathrm{vol}_{\Sigma_r}\lesssim |\log r|^2(1+T)(|t_0|+1)^{-2q},
		\end{align}
		for all $r\in(0,r_0]$, $i+j\leq 3$.
		Arguing now analogously to (\ref{Linftyest}) for $\psi_1$, we prove that $\psi_1$ is bounded, it has a smooth limit at $r=0$, which we define $B(t,\omega):\overset{L^2}{=}\lim_{r\rightarrow0}\psi_1$, and the pointwise decay $|\psi_1|,|B|\lesssim (|t|+1)^{-q}$ is valid everywhere.
		
		Let $P:=\psi_1-B=\psi-A\log r-B$. Then by (\ref{psi-Alogrenlim}),(\ref{psi1enlim}) $r^\frac{3}{2}\int_{\Sigma_r}(e_0\partial_t^{(i)}\Omega_k^{(j)}P)^2+|\overline{\nabla}\partial_t^{(i)}\Omega_k^{(j)}P|^2\mathrm{vol}_{\Sigma_r}\rightarrow0$ and by definition $\int_{\Sigma_r\cap D(\Sigma_{r_0,t_0,T})}(\partial_t^{(i)}\Omega_k^{(j)}P)^2dtd\mathbb{S}^2\rightarrow0$, as $r\rightarrow0$, for $i+j\leq3$. Also, using (\ref{psi1enest}) for $\partial_r P=\partial_r\psi_1$ and C-S we deduce that
		\begin{align}\label{P1H4est}
		&\frac{1}{2}\partial_r\int_{\Sigma_r\cap D(\Sigma_{r_0,t_0,T})}(\partial_t^{(i)}\Omega_k^{(j)}P)^2dtd\mathbb{S}^2\\
		\notag\leq&\,\bigg(\int_{\Sigma_r\cap D(\Sigma_{r_0,t_0,T})}(\partial_t^{(i)}\Omega_k^{(j)}P)^2dtd\mathbb{S}^2\bigg)^\frac{1}{2}\bigg(\int_{\Sigma_r\cap D(\Sigma_{r_0,t_0,T})}(\partial_r\partial_t^{(i)}\Omega_k^{(j)}P)^2dtd\mathbb{S}^2\bigg)^\frac{1}{2}\\
		\tag{$i+j\leq 3$}\leq&\,\frac{|\log r|}{\sqrt{r}}(1+T)(|t_0|+1)^{-q}\bigg(\int_{\Sigma_r\cap D(\Sigma_{r_0,t_0,T})}(\partial_t^{(i)}\Omega_k^{(j)}P)^2dtd\mathbb{S}^2\bigg)^\frac{1}{2}
		\end{align}
		Integrating on $[0,r]$ it follows that
		\begin{align}\label{PH4est2}
		\int_{\Sigma_r\cap D(\Sigma_{r_0,t_0,T})}(\partial_t^{(i)}\Omega_k^{(j)}P)^2dtd\mathbb{S}^2\lesssim r|\log r|^2(1+T)(|t_0|+1)^{-2q},&&r\in(0,r_0],\;i+j\leq3. 
		\end{align}
		Furthermore, $P$ satisfies the equation 
		\begin{align}\label{boxu}
		\square P=-\Delta_{\mathbb{S}^2}A\frac{\log r}{r^2}-\frac{A+\Delta_{\mathbb{S}^2}B}{r^2}-(\frac{2m}{r}-1)^{-1}(\partial_t^2A\log r+\partial^2_tB).
		\end{align}
		Arguing as in (\ref{Stokespsi1}), we arrive at the identity [for $i+j\leq2$]:
		\begin{align}\label{StokesP}
		&r^\frac{3}{2}\int_{\Sigma_r\cap D(\Sigma_{r_0,t_0,T})} \big[(e_0\partial_t^{(i)}\Omega_k^{(j)}P)^2+|\overline{\nabla}\partial_t^{(i)}\Omega_k^{(j)}P|^2\big]\mathrm{vol}_{\Sigma_r}\\
		\notag=&\sum_{l=1,2}\int^r_0\int_{\mathbb{S}^2} s^\frac{3}{2}\big[(n_l\partial_t^{(i)}\Omega_k^{(j)}P)^2+|\slashed{\nabla}\partial_t^{(i)}\Omega_k^{(j)}P|^2\big](\frac{2m}{s}-1)^{-\frac{1}{2}} s^2d\mathbb{S}^2ds\\
		\notag&+\int^r_0s^\frac{3}{2}\int_{\Sigma_s\cap D(\Sigma_{r_0,t_0,T})} \big[\frac{4}{s}(e_1\partial_t^{(i)}\Omega_k^{(j)}P)^2+\frac{1}{s}|\slashed{\nabla}\partial_t^{(i)}\Omega_k^{(j)}P|^2\big]+O(1)J^{e_0}_0(\partial_t^{(i)}\Omega_k^{(j)}P)\mathrm{vol}_{\Sigma_s} ds\\
		&\notag-\int^r_0(\frac{2m}{s}-1)^{-\frac{1}{2}}s^\frac{3}{2}\int_{\Sigma_s\cap D(\Sigma_{r_0,t_0,T})}2  e_0\partial_t^{(i)}\Omega_k^{(j)}P\bigg[\Delta_{\mathbb{S}^2}\partial_t^{(i)}\Omega_k^{(j)}A\frac{\log s}{s^2}+\frac{\partial_t^{(i)}\Omega_k^{(j)}A+\Delta_{\mathbb{S}^2}\partial_t^{(i)}\Omega_k^{(j)}B}{s^2}\\
		\notag&+(\frac{2m}{s}-1)^{-1}\big(\partial_t^2\partial_t^{(i)}\Omega_k^{(j)}A\log s+\partial^2_t\partial_t^{(i)}\Omega_k^{(j)}B\big)\bigg]\mathrm{vol}_{\Sigma_s} ds\\
		\notag\lesssim&\sum_{l=1,2}\int^r_0\int_{\mathbb{S}^2} s^4\big[(n_l\partial_t^{(i+1)}\Omega_k^{(j)}P)^2+|\slashed{\nabla}\partial_t^{(i+1)}\Omega_k^{(j)}P|^2\big]dtd\mathbb{S}^2ds\\
		\notag&+\int^r_0\int_{\Sigma_s\cap D(\Sigma_{r_0,t_0,T})} s^3(\partial_t^{(i+1)}\Omega_k^{(j)}P)^2+\sum_{l=1}^3|\Omega_l\partial_t^{(i)}\Omega_k^{(j)}P|^2+s^3J^{e_0}_0(\partial_t^{(i)}\Omega_k^{(j)}P)dtd\mathbb{S}^2ds\\
		\notag&+\int^r_0|\log s|(\|A\|_{H^4([t_0,t_0+T]\times\mathbb{S}^2)}+\|B\|_{H^4([t_0,t_0+T]\times\mathbb{S}^2)})\bigg(s^\frac{3}{2}\int_{\Sigma_s\cap D(\Sigma_{r_0,t_0,T})} (e_0\partial_t^{(i)}\Omega_k^{(j)}P)^2\mathrm{vol}_{\Sigma_s} \bigg)^\frac{1}{2}ds\\
		\tag{applying (\ref{PH4est2}) to $\partial_t^{(i')}\Omega_k^{(j')}P$, $i'+j'\leq3$}
		\lesssim&\, r^2|\log r|^2(1+T)(|t_0|+1)^{-2q}\\
		\notag&+\int^r_0|\log s|\sqrt{1+T}(|t_0|+1)^{-q}\bigg(s^\frac{3}{2}\int_{\Sigma_s\cap D(\Sigma_{r_0,t_0,T})} (e_0\partial_t^{(i)}\Omega_k^{(j)}P)^2\mathrm{vol}_{\Sigma_s} \bigg)^\frac{1}{2}ds,
		\end{align}
		Thus, employing (\ref{Gron}) we obtain the bound:
		\begin{align}\label{Penest2}
		\int_{\Sigma_r\cap D(\Sigma_{r_0,t_0,T})} \big[(e_0\partial_t^{(i)}\Omega_k^{(j)}P)^2+|\overline{\nabla}\partial_t^{(i)}\Omega_k^{(j)}P|^2\big]\mathrm{vol}_{\Sigma_r}\lesssim r^\frac{1}{2}|\log r|^2(1+T)(|t_0|+1)^{-2q},
		\end{align}
		for every $r\in(0,r_0],\;i+j\leq2$. Finally, by definition $P$ has a zero pointwise limit at $r=0$. Hence, integrating $\partial_rP$ in $[0,r]$ we have:
		\begin{align}\label{PLinftyest}
		P(r,t,\omega)=&\int^{r}_0\partial_sP(s,t,\omega) ds \\
		\notag|P(r,t,\omega)|\lesssim&\int^{r}_{0}|\partial_sP(s,t,\omega)| ds\\
		\notag\lesssim&\,\int^r_0\|\partial_sP(s,t,\omega)\|_{H^2([t,t+T]\times\mathbb{S}^2)} ds\\
		\lesssim&\,r|\log r|(|t|+1)^{-q}\tag{by (\ref{Penest2})}
		\end{align}
		This completes the proof of the theorem.
	\end{proof}
	\subsection{Concluding remark on the continuous dependence of $A(t,\omega)$}

	We conclude this section with the following corollary, which is needed for the next section. It emphasizes the continuous dependency of $A(t,\omega)$ on initial data, given on a spacelike hypersurface $\Sigma'$, as in Theorem \ref{ThmSphSym} a).
	\begin{corollary} \label{CorContDepA}
		Let $\psi$ be a smooth solution to the wave equation on the maximal analytic Schwarzschild spacetime and consider a hypersurface $\Sigma'$ as in Theorem \ref{ThmSphSym} a).
		We denote the future normal of $\Sigma' \cap \{0 < r < 2m\}$ with $n_{\Sigma' \cap \{0 < r < 2m\}}$ and the induced volume form with $\vol_{\Sigma' \cap \{0 < r < 2m\}}$.
		Let $N$ be a future directed timelike vector field, invariant under the flow of $\partial_t$, and let $\mu >0$. Assume that
		\begin{equation}\label{AssumptionCorContDep}
		\begin{aligned}
		\int\limits_{\Sigma' \cap \{0 < r < 2m\}} J^N(\partial_t^{(i)} \Omega_k^{(j)} \psi) \cdot n_{\Sigma' \cap \{0 < r < 2m\}} \, \vol_{\Sigma' \cap \{0 < r < 2m\}} &\leq \mu \\
		\int\limits_{\Sigma' \cap \Hp \cap\{v_1 \leq v \leq v_2\}} J^N(\partial_t^{(i)} \Omega_k^{(j)} \psi) \cdot n_{\Hp} \, \vol_{\Hp} &\leq \mu |v_2 - v_1| \cdot v_1^{-2(q + \delta)}
		\end{aligned}
		\end{equation}
		holds for $i,j \in \N_0$, $0 \leq i+j \leq 4$, $k = 1,2,3$, and $v_0 \leq v_1 < v_2$, together with the analogous estimate for the portion of $\Sigma'$ on the left event horizon. Then $\psi$  satisfies the asymptotic expansion \eqref{AsymptoticExp} near the singularity with
		\begin{equation}\label{ConclusionCorContDep}
		|A(t,\omega)| \leq C(\mu) (|t| + 1)^{-(q + \delta)} \;,
		\end{equation}
		where $C(\mu) \to 0$ for $\mu \to 0$.
	\end{corollary}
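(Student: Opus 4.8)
The plan is to regard Corollary~\ref{CorContDepA} as a quantitative version of Theorems~\ref{ThmExp} and~\ref{stabr0to0thm}: the existence of the expansion \eqref{AsymptoticExp} for $\psi$ is already provided by Theorem~\ref{ThmExp}, so the only new content is the bound \eqref{ConclusionCorContDep} together with $C(\mu)\to 0$ as $\mu\to 0$. The key observation is that every estimate appearing in the proofs of Theorems~\ref{StabHorR}, \ref{r0to0prop} and~\ref{stabr0to0thm} is linear in $\psi$, so that if the data on $\Sigma'$ has size $\mu$, this size is carried through, up to universal constants, all the way to the coefficient $A$. I would therefore re-run those proofs with the initial hypersurface $\Sigma'$ in place of $\Hp$ and keep track of the factor $\mu$ throughout.

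The first step is to propagate the flux bounds on $\Sigma'$ to flux bounds on a slice $\{r=r_0\}$, with $r_0\in(0,2m)$ close to $2m$ chosen as in Theorem~\ref{StabHorR} so that the redshift inequality \eqref{RedShift} holds in $\{r_0\le r\le 2m\}$. I would split $\Sigma'$ into its two pieces lying on the event horizons and the interior piece $\Sigma'\cap\{0<r<2m\}$. On each horizon piece I would apply the redshift argument from the proof of Theorem~\ref{StabHorR} verbatim; since that argument uses the horizon hypothesis only through its flux form \eqref{AltAssump}, and the second line of \eqref{AssumptionCorContDep} is precisely \eqref{AltAssump} with constant $\mu$, this yields $\int_{\{r=r_0\}\cap\{v_0\le v\le v_1\}}J^{e_0}_0(\partial_t^{(i)}\Omega_k^{(j)}\psi)\,\volro\lesssim\mu\,(|v_1-v_0|+1)\,v_0^{-2(q+\delta)}$ on the parts of $\{r=r_0\}$ to the future of the horizon pieces, for the relevant range of $i,j$ and $k=1,2,3$. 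On the interior piece I would run an $r$-weighted energy estimate of the type of \eqref{Stokespsi} in the region enclosed by $\Sigma'\cap\{0<r<2m\}$ and $\{r=r_0\}$ --- reading it backwards in $r$ wherever $\Sigma'$ lies to the future of $\{r=r_0\}$ --- using that the ingoing null fluxes and the $r^{3/2}$-weighted bulk term retain the favourable signs displayed in \eqref{Kr3/2e0}--\eqref{Stokespsi}, so that only a Gronwall inequality in $r$ is needed, and bounding the right-hand side by the first line of \eqref{AssumptionCorContDep}, i.e.\ by $\mu$. Combining the two contributions produces exactly the hypothesis \eqref{initpsir0} of Proposition~\ref{r0to0prop}, but with an extra factor $\mu$ on the right.

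Feeding this into Section~\ref{SecRS}: Proposition~\ref{r0to0prop} then supplies the weighted energy estimate \eqref{psieneste0to0} for $\partial_t^{(i)}\Omega_k^{(j)}\psi$ with the same extra factor $\mu$ on the right, and running the argument of Theorem~\ref{stabr0to0thm} (which applies to $\psi$), in particular the limit identity \eqref{e0psienlim}, now yields \eqref{Aenest} with an extra factor $\mu$, that is $\int_{\Sigma_r\cap D(\Sigma_{r_0,t_0,T})}(\partial_t^{(i)}\Omega_k^{(j)}A)^2\,dt\,d\mathbb{S}^2\lesssim\mu\,(1+T)(|t_0|+1)^{-2(q+\delta)}$ for $i+j$ in the relevant range. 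Taking $T=1$ and invoking the Sobolev embedding $H^2([t_0,t_0+1]\times\mathbb{S}^2)\hookrightarrow L^\infty$ then gives $|A(t,\omega)|\lesssim\mu^{1/2}(|t|+1)^{-(q+\delta)}$, which is \eqref{ConclusionCorContDep} with $C(\mu)=C\mu^{1/2}\to 0$ as $\mu\to 0$.

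I expect the main obstacle to lie in the first step: $\Sigma'$ coincides with $\Hp$ near its two ends but descends into the interior, whereas $\{r=r_0\}$ only asymptotes to $\Hp$, so one has to carve up the spacetime region between them carefully, use the redshift estimate on the near-horizon parts and a forward- or backward-in-$r$ weighted estimate on the interior part, and in each case verify that the null-flux and bulk contributions have the signs that make the Gronwall argument close. This is the place where the arguments of Theorem~\ref{StabHorR} and Proposition~\ref{r0to0prop} have to be genuinely re-done rather than merely quoted; everything after it is a routine linear rerun of Section~\ref{SecRS}.
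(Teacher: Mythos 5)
Your proposal is correct and follows essentially the same route as the paper: the paper's own proof is literally the one-line remark that the corollary is ``a trivial modification of the results obtained in this section'', starting from $\Sigma'$ and tracking the dependence of the constants on the data, and your write-up simply supplies the details of that modification (redshift propagation from the horizon portions of $\Sigma'$, an $r$-weighted estimate from the interior portion to $\{r=r_0\}$, and a linear rerun of Proposition~\ref{r0to0prop} and Theorem~\ref{stabr0to0thm} carrying the factor $\mu$ through to \eqref{Aenest}, whence $C(\mu)\sim\mu^{1/2}$).
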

	
	\begin{proof}
		The proof is a trivial modification of the results obtained in this section, the difference being that here we start from a hypersurface of the form $\Sigma'$  and we keep track of the dependence of the constants in the energy estimates on the initial data. 
	\end{proof}
	
	\section{An open set of waves blowing up on all of $\{r = 0\}$: proof of Theorem \ref{ThmOpenSet}} \label{SecOpenSet}

	The proof of Theorem \ref{ThmOpenSet} proceeds by first constructing a spherically symmetric solution which blows up on all of $\{r = 0\}$. Here, Theorem \ref{ThmSphSym} a) is used. The open set of waves blowing up on all of $\{r = 0\}$ is then constructed by showing that sufficiently small perturbations of this spherically symmetric solution still blow up on all of $\{r = 0\}$. Here, Corollary \ref{CorContDepA} is used.

	Given a Cauchy hypersurface $\Sigma$ as in Figure \ref{FigSchw} we denote the future directed normal by $n_\Sigma$. We introduce an energy space $\mathcal{E}$ of smooth initial data $(\overline{\psi}_0, \overline{\psi}_1) = (\psi|_\Sigma, n_\Sigma \psi)$ on $\Sigma$ such that an energy norm $E$ of the following form is finite (see \cite{AnArGa18} for details)
	\begin{equation}\label{EnergyNorm}
	\begin{split}
	E(\psi|_\Sigma, n_\Sigma \psi) &= \sum_{i,j,k, \Gamma} \int\limits_{\Sigma \cap \{r \geq R\}}  \Big(r^{i_0} | \slashed\nabla^{i_1}\partial_v^{i_2}(r^{i_3}\partial_v^{i_4}\Gamma \phi)|^2 \\
	&\qquad+ r^{j_0}|\slashed\nabla^{j_1}\partial_v^{j_2}(r^2\partial_v)^{j_3} \Gamma \phi|^2 + r^{k_0}|\slashed\nabla^{k_1}\partial_v^{k_2}(r^2\partial_v)^{k_3}\Gamma\phi_{\ell = 1,2}|^2\Big)\, \mathrm{vol}_\Sigma \\
	&\qquad + \int\limits_{\Sigma} T(\psi)(n_\Sigma, n_\Sigma)(\Gamma \psi) \, \mathrm{vol}_\Sigma \;,
	\end{split}
	\end{equation}
	where $\phi = r\psi$, $\phi_{\ell = 1,2}$ denotes the projection of $\psi$ on the first and second spherical harmonics, $\partial_v$ is with respect to the standard Eddington-Finkelstein $(u,v, \theta, \varphi)$ coordinates in the exterior, and $\vol_\Sigma$ denotes the volume form of $\Sigma$. The sum is over a finite set of multi-indices $i,j,k$, and $\Gamma$ is a finite collection of products of the form $\partial_t^{(n)} \Omega_k^{(m)}$.
	Note that all the derivatives in \eqref{EnergyNorm} can be computed from $(\psi|_\Sigma, n_\Sigma \psi)$. As shown in \cite{AnArGa16b}, \cite{AnArGa18}, this class of initial data leads to decay of the arising solutions on the event horizon as in the assumptions of Theorem \ref{ThmMaster} with $q=3$. For the proof of Theorem \ref{ThmOpenSet} we will work with this class of initial data and show openness with respect to the topology induced by the energy norm \eqref{EnergyNorm}.\footnote{As mentioned in Theorem \ref{ThmAAG} a different class of initial data that allows for slower decay towards spacelike infinity $\iota^0$ only leads to decay along the horizon with $q=2$ in \eqref{ThmMasterCond}. Our proof below transfers directly to this class of initial data.}
	
	\begin{proof}[Proof of Theorem \ref{ThmOpenSet}:]
		Let $\Sigma_0$ be a spherically symmetric Cauchy hypersurface for the maximal analytic Schwarzschild spacetime as in Figure \ref{FigHypersurfaces} and let $\psi_0$ be a spherically symmetric solution to the wave equation arising from initial data in $\mathcal{E}$ such that
		\begin{equation*}
		\partial_t \psi_0 \gtrsim v^{-4}
		\end{equation*}
		holds on $\Hp \cap \{v \geq v_0\}$, for some $v_0 \geq 1$ -- and analogously, with the same positive sign, for the left event horizon. By \cite{AnArGa16b} such a solution exists. By Proposition \ref{PropRedShiftHorizon} there exists a $v_1 \geq v_0$  such that
		\begin{equation*}
		Y\psi_0 \gtrsim v^{-4}
		\end{equation*}
		holds on $\Hp \cap \{v \geq v_1\}$ -- and analogously for the left event horizon.
		\begin{figure}[h]
			\centering
			\def\svgwidth{7cm}
			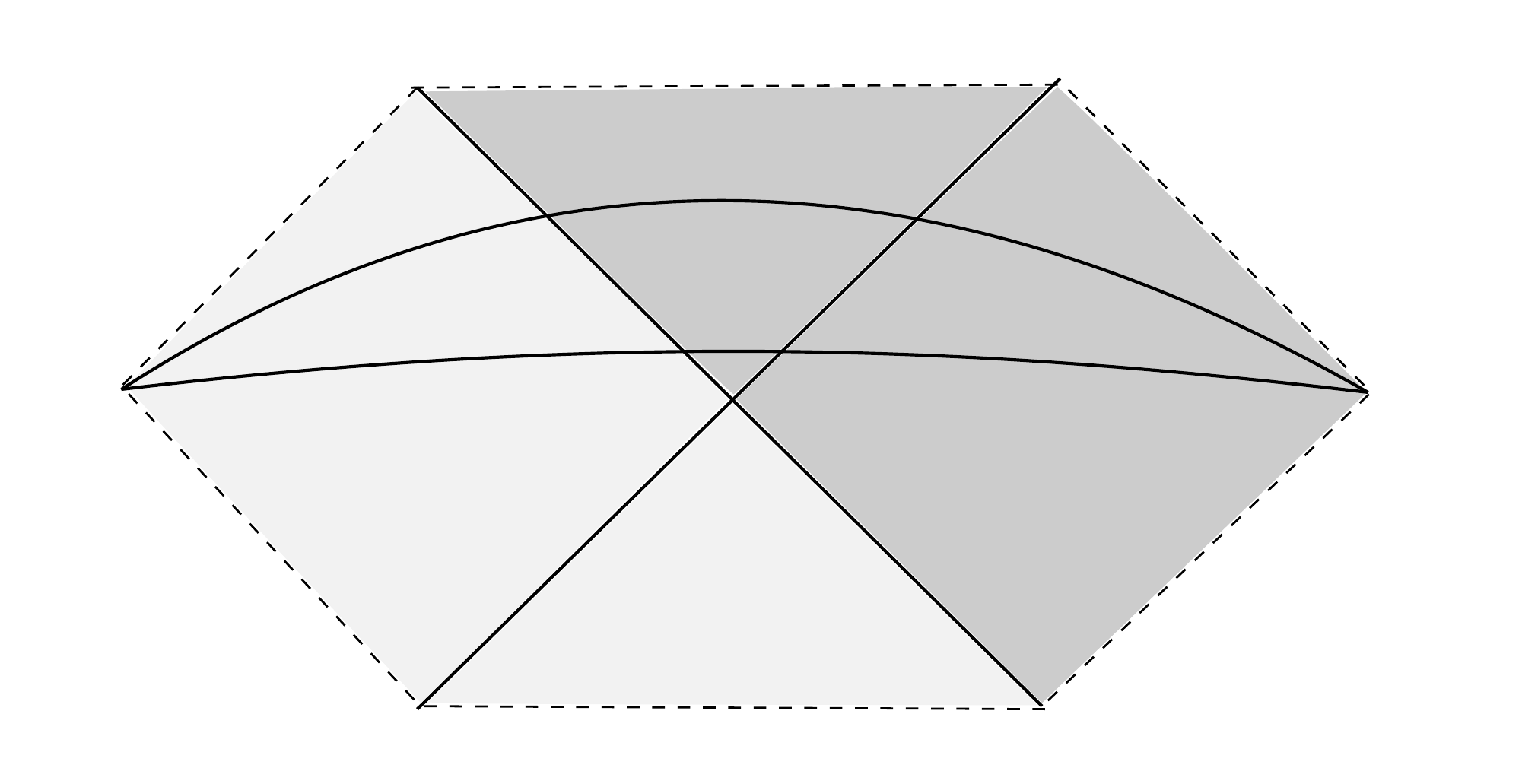
			\caption{The hypersurfaces $\Sigma_0$ and $\Sigma_1$ in the proof of Theorem \ref{ThmOpenSet}} \label{FigHypersurfaces}
		\end{figure}
		Let $\Sigma_1$ be a spherically symmetric Cauchy hypersurface for the maximal analytic Schwarzschild spacetime intersecting the right event horizon at $v = v_1$ -- and similarly for the left event horizon -- see Figure \ref{FigHypersurfaces}. Consider the induced initial data for $\psi_0$ on $\Sigma_1$. Note that we have $\partial_t \psi_0|_{\Hp}(v_1) >0$ and $Y\psi_0|_{\Hp}(v_1) >0$ -- and similarly for the left event horizon. We  can now change the spherically symmetric initial data 
		on the part of $\Sigma_1$ lying in the interior,
		such that $\partial_u \psi_0 >0$ and $\partial_v \psi_0 >0$ holds there. Here, we have denoted the new spherically symmetric solution again by $\psi_0$, and we continue doing so. By Theorem \ref{ThmSphSym} a) there exists now a $C_0>0$ such that
		\begin{equation*}
		\psi_0(t,r) \geq \psi_0(t,r_0) + C_0 (|t|+1)^{-4} \log \frac{r_0}{r}
		\end{equation*}
		holds in $\{0 < r < r_0\}$ for some $r_0 \in (0,2m)$.
		Moreover, by the results of \cite{AnArGa16b} the upper bounds in the assumptions of Theorem \ref{ThmExp} are satisfied with $q=3$ and $\delta = 1$ and thus $\psi_0$ has an expansion in $\{0 < r < r_0\}$ of the form 
		\begin{equation*}
		\psi_0(t,r) = A_0(t) \log r + B_0(t) + P_0(t,r)
		\end{equation*}
		with $|A_0(t)| \lesssim (|t| + 1)^{-4}$, $|B_0(t)| \lesssim (|t| + 1)^{-3}$, and $|P_0(t,r)| \lesssim r |\log r| (|t| +1)^{-3}$. It now follows as in the proof of Theorem \ref{ThmMaster} that the following holds
		\begin{equation}\label{C0}
		A_0(t) \leq -C_0 (|t| + 1)^{-4} \;.
		\end{equation}
		We now consider a ball $B_\lambda(\psi_0)$ of radius $\lambda$ in the energy \eqref{EnergyNorm} around the initial data induced by $\psi_0$ on $\Sigma_1$. It follows from the work \cite{AnArGa18} that for every solution $\psi$ of the wave equation arising from initial data  contained in $B_\lambda(\psi_0)$ the assumption \ref{AssumptionCorContDep} of Corollary \ref{CorContDepA} holds with $\delta =1$ and a $\mu$ depending on $\lambda$ that satisfies $\mu(\lambda) \to 0$ for $\lambda \to 0$. We can now choose $\lambda_0 >0$ small enough such that the constant $C\big(\mu(\lambda)\big)$ in \eqref{ConclusionCorContDep} satisfies $C\big(\mu(\lambda_0)\big) < \frac{C_0}{2}$. Together with \eqref{C0} it now follows that for every solution $\psi$ of the wave equation arising from initial data in $B_{\lambda_0}(\psi_0)$ on $\Sigma_1$ the expansion \eqref{AsymptoticExp} holds with $A (t, \omega) \leq -\frac{C_0}{2}(|t| + 1)^{-4}$. Clearly, this set of solutions also corresponds to an open set of initial data on $\Sigma_0$. This proves Theorem \ref{ThmOpenSet}.
	\end{proof}

	\section{Full asymptotic expansion of $\psi$ near $\{r=0\}$: proof of Theorem \ref{ThmFullExp}}\label{sec:fullexp}

	\begin{proof}[Proof of Theorem \ref{ThmFullExp}:]
		We shall prove the validity of the expansion (\ref{fullasymexp}) by induction. For $N=0$, the conclusion holds by Theorem \ref{ThmExp}, for $\zeta_0=A,\eta_0=B,R_0=P$. 
		Plugging (\ref{fullasymexp}) into the wave equation for $\psi$ we obtain:
		\begin{align}\label{boxvN}
		\square R_N
		+\frac{\log r}{r^2}\sum_{n=0}^{N-1}\bigg[n(n+1)\zeta_n
		-2m(n+1)^2\zeta_{n+1}+\Delta_{\mathbb{S}^2}\zeta_{n}+\sum_{l=0}^{n-3}(\frac{1}{2m})^l\frac{\partial_t^2\zeta_{n-3-l}}{2m}\bigg]r^n&\\ 
		\notag+\frac{1}{r^2}\sum_{n=0}^{N-1}\bigg[(2n+1)\zeta_{n}+n(n+1)\eta_{n}-2m(2n+2)\zeta_{n+1}
		-2m(n+1)^2\eta_{n+1}+\Delta_{\mathbb{S}^2}\eta_{n}\\
		\notag+\sum_{l=0}^{n-3}(\frac{1}{2m})^l\frac{\partial_t^2\eta_{n-3-l}}{2m}\bigg]r^n
		+\bigg[N(N+1)\zeta_{N}+\Delta_{\mathbb{S}^2}\zeta_{N}+\sum_{l=0}^{N-3}(\frac{1}{2m})^l\frac{\partial_t^2\zeta_{N-3-l}}{2m}\bigg]r^{N-2}\log r&\\
		\notag+\bigg[(2N+1)\zeta_N+N(N+1)\eta_N
		+\Delta_{\mathbb{S}^2}\eta_N+\sum_{l=0}^{N-3}(\frac{1}{2m})^l\frac{\partial_t^2\eta_{N-3-l}}{2m}\bigg]r^{N-2}+O(r^{N-1}|\log r|)=&\,0
		\end{align}
		Assuming $\zeta_{n+1},\eta_{n+1}$, $n=0,\ldots,N-1$ are given by (\ref{zetanetan}), the  equation \eqref{boxvN} becomes
		\begin{align}\label{boxvN2}
		\square R_N=&-\bigg[N(N+1)\zeta_{N}+\Delta_{\mathbb{S}^2}\zeta_{N}+\sum_{l=0}^{N-3}(\frac{1}{2m})^l\frac{\partial_t^2\zeta_{N-3-l}}{2m}\bigg]r^{N-2}\log r\\
		\notag&-\bigg[(2N+1)\zeta_N+N(N+1)\eta_N
		+\Delta_{\mathbb{S}^2}\eta_N+\sum_{l=0}^{N-3}(\frac{1}{2m})^l\frac{\partial_t^2\eta_{N-3-l}}{2m}\bigg]r^{N-2}+O(r^{N-1}|\log r|)
		\end{align}
		On the other hand, expressing the wave operator in coordinates we have 
		\begin{align}\label{boxvN3}
		\notag \square R_N=&\,(1-\frac{2m}{r})\partial^2_rR_N+\frac{2}{r}(1-\frac{m}{r})\partial_rR_N+\frac{1}{r^2}\Delta_{\mathbb{S}^2}R_N+(\frac{2m}{r}-1)^{-1}\partial^2_tR_N\\
		\notag =&\,\frac{1}{r}(1-\frac{2m}{r})\partial_r(r\partial_rR_N)+\frac{1}{r}\partial_rR_N+\frac{1}{r^2}\Delta_{\mathbb{S}^2}R_N+(\frac{2m}{r}-1)^{-1}\partial^2_tR_N\\
		=&-\bigg[N(N+1)\zeta_{N}+\Delta_{\mathbb{S}^2}\zeta_{N}+\sum_{l=0}^{N-3}(\frac{1}{2m})^l\frac{\partial_t^2\zeta_{N-3-l}}{2m}\bigg]r^{N-2}\log r\\
		\notag&-\bigg[(2N+1)\zeta_N+N(N+1)\eta_N
		+\Delta_{\mathbb{S}^2}\eta_N+\sum_{l=0}^{N-3}(\frac{1}{2m})^l\frac{\partial_t^2\eta_{N-3-l}}{2m}\bigg]r^{N-2}+O(r^{N-1}|\log r|)
		\end{align}
		Hence, solving for $\partial_r(r\partial_rR_N)$, integrating in $[0,r]$ and using our assumptions on $R_N$ it follows that
		\begin{align}\label{boxvN4}
		r\partial_rR_N=&-\int^r_0s(1-\frac{2m}{s})^{-1}\bigg[N(N+1)\zeta_{N}+\Delta_{\mathbb{S}^2}\zeta_{N}+\sum_{l=0}^{N-3}(\frac{1}{2m})^l\frac{\partial_t^2\zeta_{N-3-l}}{2m}\bigg]s^{N-2}\log sds+O(r^{N+1})\\
		\notag=&\,\frac{1}{2m(N+1)}\bigg[N(N+1)\zeta_{N}+\Delta_{\mathbb{S}^2}\zeta_{N}+\sum_{l=0}^{N-3}(\frac{1}{2m})^l\frac{\partial_t^2\zeta_{N-3-l}}{2m}\bigg]r^{N+1}\log r+O(r^{N+1})
		\end{align}
		Divide with $r$ and integrate in $[0,r]$ once more to deduce that
		\begin{align}\label{boxvN4}
		R_N=&\frac{1}{2m(N+1)^2}\bigg[N(N+1)\zeta_{N}+\Delta_{\mathbb{S}^2}\zeta_{N}+\sum_{l=0}^{N-3}(\frac{1}{2m})^l\frac{\partial_t^2\zeta_{N-3-l}}{2m}\bigg]r^{N+1}\log r+O(r^{N+1})
		\end{align}
		The last equation implies that $R_N r^{-N-1}(\log r)^{-1}$ has a limit as $r\rightarrow0$. Define
		\begin{align}\label{zetaN+1}
		\zeta_{N+1}:=\lim_{r\rightarrow0}\frac{R_N}{r^{N+1}\log r}=\frac{N(N+1)\zeta_N+\Delta_{\mathbb{S}^2}\zeta_N+\sum_{l=0}^{N-3}(\frac{1}{2m})^l\frac{\partial_t^2\zeta_{N-3-l}}{2m}}{2m(N+1)^2}
		\end{align}
		We proceed by setting 
		\begin{align}\label{wN}
		w_N=R_N-\zeta_{N+1}r^{N+1}\log r
		\end{align}
		and performing a similar procedure for $w_N$ to add another term in the expansion. Plugging (\ref{wN}) into (\ref{boxvN2}) and using (\ref{zetaN+1}) we arrive at the following equation for $w_N$:
		\begin{align}\label{boxwN}
		\square w_N=&-\bigg[(2N+1)\zeta_N+N(N+1)\eta_N-2m(2N+2)\zeta_{N+1}
		+\Delta_{\mathbb{S}^2}\eta_N+\sum_{l=0}^{N-3}(\frac{1}{2m})^l\frac{\partial_t^2\eta_{N-3-l}}{2m}\bigg]r^{N-2}\\
		\notag&+O(r^{N-1}|\log r|)
		\end{align}
		Expressing the LHS in coordinates as above and treating (\ref{boxwN}) as an ODE for $w_N$ in $r$ we derive:
		\begin{align}\label{boxwN2}
		r\partial_rw_N=&-\int^r_0s(1-\frac{2m}{s})^{-1}\bigg[(2N+1)\zeta_N+N(N+1)\eta_N-2m(2N+2)\zeta_{N+1}
		+\Delta_{\mathbb{S}^2}\eta_N\\
		\notag&+\sum_{l=0}^{N-3}(\frac{1}{2m})^l\frac{\partial_t^2\eta_{N-3-l}}{2m}\bigg]s^{N-2}ds+O(r^{N+2}|\log r|)\\
		\notag=&\,\frac{1}{2m(N+1)}\bigg[(2N+1)\zeta_N+N(N+1)\eta_N-2m(2N+2)\zeta_{N+1}
		+\Delta_{\mathbb{S}^2}\eta_N\\
		\notag&+\sum_{l=0}^{N-3}(\frac{1}{2m})^l\frac{\partial_t^2\eta_{N-3-l}}{2m}\bigg]r^{N+1}+O(r^{N+2}|\log r|)
		\end{align}
		which after dividing with $r$ and integrating in $[0,r]$ yields the formula:
		\begin{align}\label{boxwN3}
		w_N=&\,\frac{1}{2m(N+1)^2}\bigg[(2N+1)\zeta_N+N(N+1)\eta_N-2m(2N+2)\zeta_{N+1}
		+\Delta_{\mathbb{S}^2}\eta_N\\
		\notag&+\sum_{l=0}^{N-3}(\frac{1}{2m})^l\frac{\partial_t^2\eta_{N-3-l}}{2m}\bigg]r^{N+1}+O(r^{N+2}|\log r|)
		\end{align}
		This in turn implies that 
		\begin{align}\label{etaN+1}
		\eta_{N+1}:=&\,\lim_{r\rightarrow0}\frac{w_N}{r^{N+1}}\\
		\notag=&\,\frac{(2N+1)\zeta_N+N(N+1)\eta_N-2m(2N+2)\zeta_{N+1}
			+\Delta_{\mathbb{S}^2}\eta_N
			+\sum_{l=0}^{N-3}(\frac{1}{2m})^l\frac{\partial_t^2\eta_{N-3-l}}{2m}}{2m(N+1)^2},
		\end{align}
		which together with (\ref{zetaN+1}) confirms (\ref{zetanetan}) for $n=N$. Next, setting 
		\begin{align}\label{vN+1}
		R_{N+1}:=w_N-\eta_{N+1}r^{N+1}=R_N-\zeta_{N+1}r^{N+1}\log r-\eta_{N+1}r^{N+1}
		\end{align}
		and going back to (\ref{boxwN}) we find that $R_{N+1}$ satisfies the equation
		\begin{align}\label{boxvN+1}
		\square R_{N+1}=O(r^{N-1}|\log r|)
		\end{align}
		Furthermore, by our assumptions on $R_N$ and (\ref{vN+1}), $R_{N+1}$ satisfies the bounds $|\partial_t^{(i)}\Omega_k^{(j)}R_{N+1}|\lesssim r^{N+1}|\log r|$, for all $i,j$, and $|\partial_r\partial_t^{(i)}\Omega_k^{(j)}R_{N+1}|\lesssim r^N|\log r|$.
		
		We improve upon the later bounds by viewing (\ref{boxvN+1}) as an ODE for $R_{N+1}$ in $r$ and treating the spatial derivatives of $R_{N+1}$ as inhomogeneous terms:
		\begin{align}\label{ODEvN+1}
		\frac{1}{r}(1-\frac{2m}{r})\partial_r(r\partial_rR_{N+1})=&-\frac{1}{r}\partial_rR_{N+1}-\frac{1}{r^2}\Delta_{\mathbb{S}^2}R_{N+1}-(\frac{2m}{r}-1)^{-1}\partial_t^2R_{N+1}+O(r^{N-1}|\log r|)\\
		\notag=&\,O(r^{N-1}|\log r|)
		\end{align}
		Hence, multiplying (\ref{ODEvN+1}) with $r(1-\frac{2m}{r})^{-1}$ we obtain:
		\begin{align}\label{vN+1bounds}
		|\partial_rR_{N+1}|\lesssim r^{N+1}|\log r|,&&|R_{N+1}|\lesssim r^{N+2}|\log r.
		\end{align}
		The same bounds obviously hold after commuting with the Killing vector fields $\partial_t^{(i)}\Omega_k^{(j)}$. This confirms the inductive assumptions for the remainder $R_{N+1}$ in the expansion 
		\begin{align}\label{fullexpN+1}
		\psi=\sum_{n=0}^{N}\zeta_nr^n\log r
		+\sum_{n=0}^{N}\eta_nr^n+R_N\overset{(\ref{vN+1})}{=}\sum_{n=0}^{N+1}\zeta_nr^n\log r
		+\sum_{n=0}^{N+1}\eta_nr^n+R_{N+1}
		\end{align}
		with $\zeta_{N+1},\eta_{N+1}$ satisfying (\ref{zetaN+1}),(\ref{etaN+1}). Thus, the proof of Theorem \ref{ThmFullExp} is complete.
	\end{proof}
	\begin{remark}
		Instead of invoking Theorem \ref{ThmExp} in the proof of Theorem \ref{ThmFullExp} above to confirm the zeroth step in our induction argument, we could start from scratch and in fact use a similar ODE type of argument to derive anew the expansion (\ref{AsymptoticExp}). Indeed, using the logarithmic upper bounds for $\psi$ and its derivatives that we derived in Section \ref{subsubsec:loguppbd}, we could view the wave equation for $\psi$ as an ODE in $r$, treating the spatial derivatives of $\psi$ as inhomogeneous, lower order terms. This would enable us to define $A(t,\omega),B(t,\omega),P(t,r,\omega)$ and prove the relevant bounds for $P$.
	\end{remark}
	\section{The one to one correspondence $\mathcal{S}$ of solutions $\psi$ and expansion coefficients $A,B$: proof of Theorem \ref{ThmIso}}\label{isoS}

	So far we have established a detailed description of linear waves in a Schwarzschild interior from the point of view of the forward Cauchy problem and we provided a full asymptotic expansion (\ref{fullasymexp}) of solutions towards $r=0$, which is determined by its first two leading terms, i.e., $A,B$ via the recurrence relations (\ref{zetanetan}). 
	
	In this section we study the map $\mathcal{S}$, given by (\ref{S}), which takes a solution $\psi$ to $A,B$. It is immediate from Theorem \ref{ThmExp} that $\mathcal{S}$ is well defined. We first prove that $\mathcal{S}$ is surjective:
	
	Given any smooth functions $A,B$, decaying appropriately at $|t|=+\infty$, we show that there exists a corresponding solution $\psi$ to the wave equation in the interior, having the expansion (\ref{AsymptoticExp}) towards $r=0$.
	This is achieved via a backwards construction method.
	\begin{figure}[h]
		\centering
		\def\svgwidth{7cm}
		\includegraphics[scale=1.5]{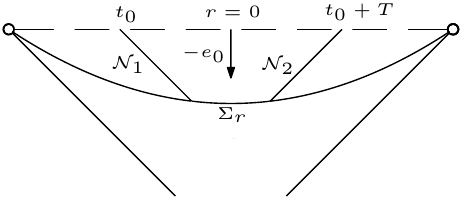}
		\caption{Backwards domain of dependence} \label{FigbackdomSch}
	\end{figure}

	Consider the following representation for $\psi$ which includes (\ref{AsymptoticExp}):\footnote{The necessity of inserting more terms in the expansion of $\psi$ is explained below in Remark \ref{rem:zetaeta}.}
	\begin{align}\label{Rdef}
	\psi(r,t,\omega)=&\,A(t,\omega)\log r+B(t,\omega)+\zeta_1(t,\omega)r\log r+\eta_1(t,\omega)r\\
	\notag&+\zeta_2(t,\omega)r^2\log r+\eta_2(t,\omega) r^2+R(r,t,\omega),
	\end{align}
	where we treat $R$ as an error term. The wave equation for $\psi$ then induces an inhomogeneous wave equation for $R$:  
	\begin{align}\label{boxR}
	\notag\square R+\frac{\log r}{r^2}(\Delta_{\mathbb{S}^2}A-2m\zeta_1 )
	+\frac{1}{r^2}(A-4m\zeta_1 +\Delta_{\mathbb{S}^2}B-2m\eta_1 )
	+\frac{\log r}{r}(2\zeta_1 +\Delta_{\mathbb{S}^2}\zeta_1-8m\zeta_2 )&\\
	+\frac{1}{r}(3\zeta_1 +2\eta_1 +\Delta_{\mathbb{S}^2}\eta_1-8m\zeta_2-8m\eta_2 )+\log r(\Delta_{\mathbb{S}^2}\zeta_2+6\zeta_2)+(5\zeta_2+6\eta_2+\Delta_{\mathbb{S}^2}\eta_2 )&\\
	\notag+(\frac{2m}{r}-1)^{-1}\partial_t^2(A\log r+B+\zeta_1  r\log r+\eta_1  r
	+\zeta_2r^2\log r+\eta_2 r^2)&=0
	\end{align}
	Our method of proof is based on deriving an a priori energy estimate for $R$ in suitable weighted $H^s$ spaces in the backward domain of dependence $D^{-}(\Sigma_{0,t_0,T})$ of $\Sigma_{0,t_0,T}:=\{r=0\}\cap\{t_0\leq t\leq t_0+T\}$, for any $t_0$ and fixed $T>0$, see Figure \ref{FigbackdomSch}. 
	\begin{theorem}\label{backthm}
		Let $A(t,\omega),B(t,\omega)\in C^\infty(\mathbb{R}\times\mathbb{S}^2)$ 
		such that
		\begin{align}\label{A,Bdec}
		|\partial_t^{(i)}\Omega_k^{(j)}A|\lesssim(|t|+1)^{-q-\delta},\qquad|\partial_t^{(i)}\Omega_k^{(j)}B|\lesssim(|t|+1)^{-q},
		\end{align}
		for $i+j\leq 8$. Then there exists a unique smooth function $\psi$ solving $\square_g\psi=0$ in the interior region $\{0<r\leq r_0\}$, $r_0<2m$, having the representation (\ref{Rdef}) with 
		\begin{align}\label{zetaeta}
		\begin{split}
		\zeta_1 :=\frac{\Delta_{\mathbb{S}^2}A}{2m},&\qquad\qquad\eta_1 :=\frac{A-4m\zeta_1 +\Delta_{\mathbb{S}^2}B}{2m},\\
		\zeta_2:=\frac{2\zeta_1+\Delta_{\mathbb{S}^2}\zeta_1}{8m},&\qquad\qquad\eta_2:=\frac{3\zeta_1+2\eta_1+\Delta_{\mathbb{S}^2}\eta_1-8m\zeta_2}{8m},
		\end{split}
		\end{align}
		$R\in C^\infty(M)$, satisfying $|R|\lesssim r^3|\log r|(|t|+1)^{-q}$ and
		\begin{align}\label{Rensp}
		r^{-\frac{5}{2}}\int_{\Sigma_r\cap D^{-}(\Sigma_{t_0,t_0+T})} J^{e_0}(\partial_t^{(i)}\Omega_k^{(j)}R)\mathrm{vol}_{\Sigma_r}
		\lesssim (1+T)(|t_0|+1)^{-2q}r^2|\log r|^2,
		\end{align}
		for all $r\in(0,2m)$, $t_0\in\mathbb{R}$, $T>0$, $i+j\leq 2$.
	\end{theorem}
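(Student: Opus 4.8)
The plan is to realise the desired solution as a perturbation of the explicit part of the ansatz \eqref{Rdef} and to construct the remainder $R$ by solving a singular \emph{backwards} problem — with $(A,B)$ playing the role of prescribed asymptotic data at $r=0$ — on the past domain of dependence $D^{-}(\Sigma_{0,t_0,T})$. First I would substitute \eqref{Rdef} into $\square_g\psi=0$ written in $(t,r,\omega)$-coordinates as in Remark~\ref{rem:symexp}. The coefficients \eqref{zetaeta} are forced by the requirement that the coefficients of $r^{-2}\log r$, $r^{-2}$ (which determines $\zeta_1,\eta_1$) and of $r^{-1}\log r$, $r^{-1}$ (which determines $\zeta_2,\eta_2$) in the resulting expression all vanish; this turns \eqref{boxR} into an equation $\square R=F$ with $F=F(r,t,\omega)$ smooth for $r>0$ and satisfying $|\partial_t^{(i)}\Omega_k^{(j)}F|\lesssim|\log r|\,(|t|+1)^{-q}$ near $r=0$ for a sufficient number of derivatives, using \eqref{A,Bdec} with $i+j\le 8$. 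The residual logarithmic growth of $F$ (coming from the $r^{0}\log r$ term produced by $\square(\zeta_2 r^{2}\log r)$) is exactly why the ansatz must be carried to order $r^{2}$: stopping at $\zeta_1,\eta_1$ would leave an $r^{-1}|\log r|$ inhomogeneity, too singular to absorb in the energy estimate below.

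For each small $\epsilon>0$ the hypersurface $\{r=\epsilon\}$ is spacelike and $g$ is smooth there, so I would solve the regular Cauchy problem $\square R_\epsilon=F$ with trivial data $R_\epsilon=\partial_r R_\epsilon=0$ on $\{r=\epsilon\}$, evolving towards increasing $r$ inside $D^{-}(\Sigma_{0,t_0,T})\cap\{\epsilon\le r\le r_0\}$; finite speed of propagation makes this well posed. The crucial step is an energy estimate for $R_\epsilon$ uniform in $\epsilon$, obtained by integrating $\nabla^\mu J^{X}_\mu(R_\epsilon)=K^{X}(R_\epsilon)$ with the multiplier $X=r^{-5/2}e_0$ over $D^{-}(\Sigma_{0,t_0,T})\cap\{\epsilon\le s\le r\}$. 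The weight $r^{-5/2}$ is dictated by scaling: if $R\sim r^{3}\log r$ then $\int_{\Sigma_r}J^{e_0}_0(R)\,\mathrm{vol}_{\Sigma_r}\sim r^{9/2}|\log r|^{2}$, so the correctly normalised quantity is exactly the $r^{-5/2}$-weighted energy appearing in \eqref{Rensp}. The deformation-tensor part of $K^{r^{-5/2}e_0}(R_\epsilon)$ is computed as in \eqref{Kr3/2e0} but with exponent $-\frac52$ in place of $\frac32$: its leading term is $-c\,(\frac{2m}{r}-1)^{1/2}r^{-5/2}r^{-1}\big[(e_0R_\epsilon)^2+|\slashed{\nabla} R_\epsilon|^2\big]$, whose sign is favourable for the backwards integration (playing the role the weight $r^{3/2}$ played in the forwards direction of Section~\ref{SecRS}), so it and the null fluxes may be discarded, leaving an $O(1)\,r^{-5/2}(\frac{2m}{r}-1)^{1/2}J^{e_0}_0(R_\epsilon)$ term for Gronwall. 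The inhomogeneous contribution $\int r^{-5/2}(e_0R_\epsilon)F$, estimated by Cauchy--Schwarz with $\mathrm{vol}_{\Sigma_s}\sim s^{3/2}\,dt\,d\mathbb{S}^2$ and $|F|\lesssim|\log s|(|t_0|+1)^{-q}$, contributes a term $\lesssim(1+T)^{1/2}(|t_0|+1)^{-q}\int_0^{r}|\log s|\,\big(s^{-5/2}\!\int_{\Sigma_s}J^{e_0}_0(R_\epsilon)\,\mathrm{vol}_{\Sigma_s}\big)^{1/2}ds$. Since the data at $r=\epsilon$ vanishes, the Gronwall-type inequality \eqref{Gron} then yields the $r^{-5/2}$-weighted energy $\lesssim(1+T)(|t_0|+1)^{-2q}r^{2}|\log r|^{2}$ uniformly in $\epsilon$, and commuting with $\partial_t^{(i)}\Omega_k^{(j)}$, $i+j\le2$ (legitimate as $\partial_t,\Omega_k$ are Killing), gives \eqref{Rensp}.

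The uniform bounds together with their higher-order analogues (commuting with all $\partial_t^{(i)}\Omega_k^{(j)}$ and converting $\partial_r^2$ via the equation) give compactness; a standard argument produces a limit $R$ solving $\square R=F$ on $D^{-}(\Sigma_{0,t_0,T})$, satisfying \eqref{Rensp}, and — integrating $\partial_rR$ from $r=0$ and using Sobolev embedding on $[t_0,t_0+T]\times\mathbb{S}^2$ as in \eqref{Linftyest} — the pointwise bound $|R|\lesssim r^{3}|\log r|(|t|+1)^{-q}$; interior regularity for $r>0$ gives $R$ smooth there. Varying $(t_0,T)$ and patching yields $\psi$ on all of $\{0<r\le r_0\}$, and the expansion \eqref{Rdef} then gives $\lim_{r\to0}\psi/\log r=A$ and $\lim_{r\to0}(\psi-A\log r)=B$, i.e.\ the constructed solution is a preimage of $(A,B)$. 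Uniqueness follows by running the same energy estimate on the difference of two such solutions (now $F=0$): both differences satisfy \eqref{Rensp}, so their $\Sigma_\epsilon$-energies vanish as $\epsilon\to0$, and the homogeneous Gronwall inequality forces the difference to be $0$.

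The delicate point is the uniform backwards weighted energy estimate. This means calibrating the multiplier weight correctly; verifying that the deformation-tensor bulk term and the null fluxes carry the right signs when the estimate is run outward from $r=0$ (the orientation being reversed relative to Sections~\ref{SecHR}--\ref{SecRS}); and organising the Cauchy--Schwarz and Gronwall bookkeeping so that the tame but non-integrable inhomogeneity $F$ produces precisely the decay $r^{2}|\log r|^{2}(|t_0|+1)^{-2q}$ while the vanishing data at $r=\epsilon$ is propagated. Everything else is a variation on the energy-estimate machinery of Sections~\ref{SecExp} and~\ref{sec:fullexp}.
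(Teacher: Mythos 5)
Your proposal is correct, and its analytic core coincides with the paper's: the same cancellation of the singular inhomogeneities forcing \eqref{zetaeta} (and the same explanation, cf.\ Remark \ref{rem:zetaeta}, for why the ansatz must be carried to order $r^2$), the same multiplier $r^{-\frac{5}{2}}e_0$ whose deformation-tensor bulk term has a favourable sign in the direction of increasing $r$, the same Cauchy--Schwarz treatment of the $O(|\log s|)$ inhomogeneity followed by the Gronwall inequality \eqref{Gron} to produce \eqref{Rensp}, and the same integration of $\partial_r R$ from $r=0$ plus Sobolev embedding to get $|R|\lesssim r^3|\log r|(|t|+1)^{-q}$. The one place where you genuinely deviate is the passage from the a priori estimate to existence: the paper invokes a duality argument (citing Alinhac--G\'erard) and omits the details, whereas you solve regularised Cauchy problems with trivial data on the spacelike slices $\{r=\epsilon\}$, observe that the Gronwall bound with vanishing data at $r=\epsilon$ is uniform in $\epsilon$, and extract a limit by compactness. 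Both mechanisms are standard and valid here; yours is more constructive and self-contained, at the price of having to track uniformity in $\epsilon$ and verify that the limit inherits \eqref{Rensp} and the vanishing at $r=0$ (which your uniform bound $\|R_\epsilon(r)\|_{L^2}\lesssim r^3|\log r|$ does supply), while the paper's duality route is shorter but outsources the functional-analytic step. Your uniqueness argument via the homogeneous energy estimate on the difference of two remainders satisfying \eqref{Rensp} is also sound and is essentially what the paper achieves through Corollary \ref{thm:iso}.
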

	\begin{proof}
		For $\zeta_1 ,\eta_1 ,\zeta_2,\eta_2$ defined via (\ref{zetaeta}), the equation (\ref{boxR}) becomes
		\begin{align}\label{boxR2}
		\square R=&-\log r(\Delta_{\mathbb{S}^2}\zeta_2+6\zeta_2)-(5\zeta_2+6\eta_2+\Delta_{\mathbb{S}^2}\eta_2 )\\
		\notag&-(\frac{2m}{r}-1)^{-1}\partial_t^2(A\log r+B+\zeta_1  r\log r+\eta_1  r
		+\zeta_2r^2\log r+\eta_2 r^2)
		\end{align}
		We derive an a priori energy estimate for $R$ in $D^{-}(\Sigma_{0,t_0,T})\cap\{0<r\leq r_0\}$, $r_0< 2m$, with {\it trivial} initial data on $\Sigma_{0,t_0,T}$.
		
		Recall (\ref{pie0}) to compute:
		\begin{align}\label{Kr-5/2R}
		K^{r^{-\frac{5}{2}}e_0}( R)=&\,r^{-\frac{5}{2}}\pi^{e_0}_{\mu\nu}T^{\mu\nu}(\psi)-(e_0r^{-\frac{5}{2}})T_{00}(\psi)+r^{-\frac{5}{2}}e_0 R\cdot\square_g R\\
		\notag=&\,r^{-\frac{5}{2}}
		\frac{m}{r^2}(\frac{2m}{r}-1)^{-\frac{1}{2}}\frac{1}{2}\big[(e_0 R)^2+(e_1 R)^2-|\slashed{\nabla}  R|^2\big]
		-r^{-\frac{5}{2}}\frac{1}{r}(\frac{2m}{r}-1)^\frac{1}{2}\big[(e_0 R)^2-(e_1 R)^2\big]\\
		\notag&-(\frac{2m}{r}-1)^\frac{1}{2}\frac{5}{2}\frac{1}{r}r^{-\frac{5}{2}}\frac{1}{2}\big[(e_0 R)^2+|\overline{\nabla} R|^2\big]+r^{-\frac{5}{2}}e_0 R\cdot\square_g R
		\\
		\notag=&\,
		(\frac{2m}{r}-1)^\frac{1}{2}r^{-\frac{5}{2}} \frac{1}{r}\frac{1}{2}\bigg[\frac{m}{r}(\frac{2m}{r}-1)^{-1}\big[(e_0R)^2+(e_1R)^2-|\slashed{\nabla}R|^2\big]-2(e_0R)^2+2(e_1R)^2\\
		\notag&-\frac{5}{2}(e_0 R)^2-\frac{5}{2}|\overline{\nabla} R|^2\bigg]
		+r^{-\frac{5}{2}}e_0 R\cdot\square_gR\\
		\notag=&-(\frac{2m}{r}-1)^\frac{1}{2}r^{-\frac{5}{2}} \frac{1}{r}\big[2(e_0R)^2+\frac{3}{2}|\slashed{\nabla}R|^2\big]+O(1)(\frac{2m}{r}-1)^\frac{1}{2}r^{-\frac{5}{2}} J^{e_0}_0(R)+r^{-\frac{5}{2}}e_0 R\cdot\square_gR
		\end{align}
		{\it Initial data assumption for $R$}: 
		\begin{align}\label{Renlim}
		\lim_{r\rightarrow0}\int_{\Sigma_r\cap D^{-}(\Sigma_{0,t_0,T})}r^{-\frac{5}{2}}\big[(e_0\partial_t^{(i)}\Omega_k^{(j)}R)^2+|\overline{\nabla}\partial_t^{(i)}\Omega_k^{(j)}R|^2\big]\mathrm{vol}_{\Sigma_r}=0,
		\end{align}
		for $i+j\leq2$. Hence, applying the energy estimate with $X=r^{-\frac{5}{2}}e_0$ to $\partial_t^{(i)}\Omega_k^{(j)}R$, $i+j\leq2$, in the domain $D^{-}(\Sigma_{0,t_0,T})$, we arrive at the identity:
		\begin{align}\label{StokesR}
		&r^{-\frac{5}{2}} \int_{\Sigma_r\cap D^{-}(\Sigma_{0,t_0,T})}  J^{e_0}_0(\partial_t^{(i)}\Omega_k^{(j)}R)\mathrm{vol}_{\Sigma_r}+\sum_{l=1,2}\int_{\mathcal{N}_l}s^{-\frac{5}{2}}J^{e_0}_a(\partial_t^{(i)}\Omega_k^{(j)}R)\cdot (n_l)^a\mathrm{vol}_{\mathcal{N}_l}\\
		\notag=&\int^r_0(\frac{2m}{s}-1)^{-\frac{1}{2}}\int_{\Sigma_s\cap D^{-}(\Sigma_{0,t_0,T})}K^{s^{-\frac{5}{2}}e_0}(\partial_t^{(i)}\Omega_k^{(j)}R)\mathrm{vol}_{\Sigma_s}ds\\
		\notag=&\int^r_0\int_{\Sigma_s\cap D^{-}(\Sigma_{0,t_0,T})}
		-\frac{1}{s^\frac{7}{2}}\big[2(e_0\partial_t^{(i)}\Omega_k^{(j)}R)^2+\frac{3}{2}|\slashed{\nabla}\partial_t^{(i)}\Omega_k^{(j)}R|^2\big]
		+O(1) s^{-\frac{5}{2}} J^{e_0}_0(\partial_t^{(i)}\Omega_k^{(j)}R)\\
		\notag&+(\frac{2m}{s}-1)^{-\frac{1}{2}}s^{-\frac{5}{2}}e_0\partial_t^{(i)}\Omega_k^{(j)}R
		\bigg[-\log s(\Delta_{\mathbb{S}^2}\zeta_2+6\zeta_2)-(5\zeta_2+6\eta_2+\Delta_{\mathbb{S}^2}\eta_2 )\\
		\notag&-(\frac{2m}{s}-1)^{-1}\partial_t^2(A\log s+B+\zeta_1  s\log s+\eta_1 s
		+\zeta_2s^2\log s+\eta_2 s^2)\bigg]
		\mathrm{vol}_{\Sigma_s} ds
		\end{align}
		By the assumptions (\ref{A,Bdec}), (\ref{zetaeta}) we then have
		\begin{align}\label{Renineq}
		r^{-\frac{5}{2}} \int_{\Sigma_r\cap D^{-}(\Sigma_{0,t_0,T})}  J^{e_0}_0(\partial_t^{(i)}\Omega_k^{(j)}R)\mathrm{vol}_{\Sigma_r}
		\lesssim\int^r_0\int_{\Sigma_s\cap D^{-}(\Sigma_{0,t_0,T})}
		s^{-\frac{5}{2}}  J^{e_0}_0(\partial_t^{(i)}\Omega_k^{(j)}R)\mathrm{vol}_{\Sigma_s}ds&\\
		\notag+\int^r_0|\log s|(|t_0|+1)^{-q}\bigg(s^{-\frac{5}{2}}\int_{\Sigma_s\cap D^{-}(\Sigma_{0,t_0,T})}  (e_0\partial_t^{(i)}\Omega_k^{(j)}R)^2 \mathrm{vol}_{\Sigma_s}\bigg)^\frac{1}{2} ds&,
		\end{align}
		for $i+j\leq2$. Utilising now (\ref{Gron}), we arrive at (\ref{Rensp}). 
		
		By assumption $\partial_t^{(i)}\Omega_k^{(j)}R$ vanishes at $r=0$, $i+j\leq2$. Employing (\ref{Rensp}) we deduce the inequality:
		\begin{align*}
		\notag&\frac{1}{2}\partial_r\int_{\Sigma_r\cap D^{-}(\Sigma_{0,t_0,T})}(\partial_t^{(i)}\Omega_k^{(j)}R)^2dtd\mathbb{S}^2\\
		\leq& \bigg(\int_{\Sigma_r\cap D^{-}(\Sigma_{0,t_0,T})}(\partial_t^{(i)}\Omega_k^{(j)}R)^2dtd\mathbb{S}^2\bigg)^\frac{1}{2}\bigg(\int_{\Sigma_r\cap D^{-}(\Sigma_{0,t_0,T})}(\partial_r\partial_t^{(i)}\Omega_k^{(j)}R)^2dtd\mathbb{S}^2\bigg)^\frac{1}{2}\\
		\notag\lesssim&\sqrt{1+T}(|t_0|+1)^{-q}r^2|\log r|\bigg(\int_{\Sigma_r\cap D^{-}(\Sigma_{0,t_0,T})}(\partial_t^{(i)}\Omega_k^{(j)}R)^2dtd\mathbb{S}^2\bigg)^\frac{1}{2}
		\end{align*}
		or 
		\begin{align*}
		\partial_r\bigg(\int_{\Sigma_r\cap D^{-}(\Sigma_{0,t_0,T})}(\partial_t^{(i)}\Omega_k^{(j)}R)^2dtd\mathbb{S}^2\bigg)^\frac{1}{2}
		\lesssim \sqrt{1+T}(|t_0|+1)^{-q}r^2|\log r|
		\end{align*}
		Hence, integrating in $[0,r]$ we obtain the bound:
		\begin{align}\label{RL2est}
		\bigg(\int_{\Sigma_r\cap D^{-}(\Sigma_{0,t_0,T})}(\partial_t^{(i)}\Omega_k^{(j)}R)^2dtd\mathbb{S}^2\bigg)^\frac{1}{2}
		\lesssim \sqrt{1+T}(|t_0|+1)^{-q}r^3|\log r|,&&i+j\leq2,
		\end{align}
		from which it also follows that $|R|\lesssim\|R\|_{H^2}\lesssim r^3|\log r|(|t|+1)^{-q}$.
		
		Together with a standard duality type of argument that we omit, see \cite{AlGer}, the a priori energy estimate (\ref{Rensp}) yields a solution $R$ to (\ref{boxR}) in $D^{-}(\Sigma_{0,t_0,T})$, for any $t_0>0$ (Figure \ref{FigbackdomSch}). By the domain of dependence property, the solutions coincide in the overlapping regions and hence they produce a solution $\psi$ to the wave equation in the region $\{0<r\leq r_0\}$, for an arbitrary $r_0<2m$, having the representation (\ref{Rdef}) with $R$ satisfying (\ref{Rensp}). 
		This completes the proof of the theorem.
	\end{proof}
	\begin{remark}\label{rem:zetaeta}
		The asymptotic expansion (\ref{Rdef}) contains more terms than (\ref{AsymptoticExp}) which we obtained in the forward problem. The reason we had to include next order terms in the expansion of $\psi$ for the above construction is because of the singular inhomogeneous terms in the first line of (\ref{boxR}). Had we not included the functions $\zeta_1,\eta_1,\zeta_2,\eta_2$ in (\ref{Rdef}), defined via (\ref{zetaeta}), then the $|\log s|$ coefficient in the bulk integral in the last line of (\ref{Renineq}) would be of the order $|\log s|s^{-2}$ which is far from being integrable in $[0,r]$. This would not allow us to apply Gronwall's inequality and obtain an energy estimate.
	\end{remark}
	The following corollary is an immediate application of the previous theorem, along with Theorem \ref{ThmFullExp} for $N=2$.
	\begin{corollary}\label{thm:iso}
		Let $\psi_1,\psi_2$ be two smooth solutions of the wave equation in the interior region $\{0<r< 2m\}$, having the expansion:
		\begin{align}\label{exppsi1=exppsi2}
		\psi_l(r,t,\omega)=A(t,\omega)\log r+B(t,\omega)+P_l(r,t,\omega),&&l=1,2.
		\end{align}
		where $A,B$ are smooth functions and the error terms $P_l$, $l=1,2$, satisfy $|\partial_t^{(i)}\Omega_k^{(j)}P_l|\lesssim r|\log r|$, for all $i+j\leq 5$, $k=1,2,3$.
		Then $\psi_1=\psi_2$ in $\{0<r<2m\}$.
	\end{corollary}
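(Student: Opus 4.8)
The plan is to work with the difference $\psi:=\psi_1-\psi_2$, which is a smooth solution of $\square_g\psi=0$ in $\{0<r<2m\}$, and to show that it vanishes identically. The argument splits in two parts: first, upgrade the crude control $|\partial_t^{(i)}\Omega_k^{(j)}\psi|\lesssim r|\log r|$ near $\{r=0\}$ into the sharp decay that a uniqueness statement requires; and second, invoke that uniqueness statement, which is exactly the content of the backward energy estimate in the proof of Theorem \ref{backthm}.

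For the first part I would apply Theorem \ref{ThmFullExp} with $N=2$ separately to $\psi_1$ and to $\psi_2$. Since both carry the representation \eqref{exppsi1=exppsi2} with the \emph{same} pair $(A,B)$ and with $|\partial_t^{(i)}\Omega_k^{(j)}P_l|\lesssim r|\log r|$, one reads off $\lim_{r\to0}\psi_l/\log r=A$ and $\lim_{r\to0}(\psi_l-A\log r)=B$ for $l=1,2$; hence the leading coefficients $\zeta_0=A$, $\eta_0=B$ in the expansion \eqref{fullasymexp} are the same for $\psi_1$ and $\psi_2$, and then so are all of $\zeta_1,\eta_1,\zeta_2,\eta_2$ by the recurrence relations \eqref{zetanetan}. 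Subtracting the two $N=2$ expansions gives
\begin{align*}
\psi=\psi_1-\psi_2=R_2^{(1)}-R_2^{(2)}=:R,\qquad |\partial_t^{(i)}\Omega_k^{(j)}R|\lesssim r^3|\log r|,\quad |\partial_r\partial_t^{(i)}\Omega_k^{(j)}R|\lesssim r^2|\log r|,
\end{align*}
for all $i,j$. In particular $\psi$ is a smooth solution of the wave equation of the form \eqref{Rdef} with vanishing leading data $A=B=0$, hence with $\zeta_1=\eta_1=\zeta_2=\eta_2=0$ by \eqref{zetaeta}, and with error term equal to $\psi$ itself; moreover $|R|\lesssim r^3|\log r|$ and the bounds above imply the weighted energy bound \eqref{Rensp} for $R$ (on any fixed $t$-interval).

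For the second part I would feed this into the a priori estimate from the proof of Theorem \ref{backthm}, working in $D^{-}(\Sigma_{0,t_0,T})\cap\{0<r\le r_0\}$ with the multiplier $r^{-5/2}e_0$. Since all the functions $A,B,\zeta_i,\eta_i$ attached to $\psi$ vanish, the right-hand side of \eqref{boxR2} is identically zero, so $\square_g\psi=0$; and the decay of $\psi$ from the first part, together with $e_0\sim r^{-1/2}\partial_r$, $e_1\sim r^{1/2}\partial_t$, $e_{2,3}\sim r^{-1}\partial_{\theta,\varphi}$ and $\mathrm{vol}_{\Sigma_r}\sim r^{3/2}\,dt\,d\mathbb{S}^2$, shows that the $r^{-5/2}$-weighted energy of $\psi$ on $\Sigma_r$ is $O(r^2|\log r|^2)\to0$, i.e.\ the vanishing initial condition \eqref{Renlim} holds. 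In the energy identity \eqref{StokesR}, with its source terms now zero and with the $\mathcal{N}_1,\mathcal{N}_2$ fluxes discarded as in the proof of Theorem \ref{backthm}, one is left with $f(r)\lesssim\int_0^r f(s)\,ds$ for $f(r):=r^{-5/2}\int_{\Sigma_r\cap D^{-}(\Sigma_{0,t_0,T})}J^{e_0}_0(\psi)\,\mathrm{vol}_{\Sigma_r}$ and $f(0^+)=0$; Gronwall's inequality forces $f\equiv0$, so $e_0\psi=0$ and $\overline{\nabla}\psi=0$, whence $\psi$ is constant on the region and, vanishing as $r\to0$, is zero there. Letting $(t_0,T)$ vary so that the sets $D^{-}(\Sigma_{0,t_0,T})$ exhaust $\{0<r\le r_0\}$, and then $r_0\uparrow2m$, gives $\psi_1=\psi_2$ in $\{0<r<2m\}$. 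Equivalently, one can simply invoke the uniqueness assertion already contained in Theorem \ref{backthm}: both $R$ and the zero function solve $\square_g(\cdot)=0$, are of the form \eqref{Rdef} with $A=B=0$, and have a remainder obeying \eqref{Rensp}.

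The genuinely load-bearing step is the passage through Theorem \ref{ThmFullExp} with exactly $N=2$ in the first part: the weight $r^{-5/2}$ is forced by the degeneration of the Schwarzschild metric at $\{r=0\}$, and it turns a remainder of size $O(r^{N+1}|\log r|)$ into a weighted energy of size $O(r^{2N-2}|\log r|^2)$, which tends to $0$ only for $N\ge2$. With a coarser expansion the boundary contribution at $\{r=0\}$ would not vanish and the Gronwall step would not close. Conceptually, this matching of the order of the expansion to the weight in the energy estimate is what substitutes for the usual ``zero Cauchy data $\Rightarrow$ zero solution'' principle, which is not directly available here because $\{r=0\}$ is a singular, spacelike hypersurface; everything else is routine bookkeeping of powers of $r$ and $|\log r|$.
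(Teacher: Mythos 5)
Your proposal is correct and follows exactly the route the paper intends: the paper gives only the one-line indication that the corollary is ``an immediate application'' of Theorem \ref{backthm} together with Theorem \ref{ThmFullExp} for $N=2$, and your two steps (upgrading the difference to $O(r^3|\log r|)$ via the $N=2$ expansion with cancelling coefficients, then running the $r^{-5/2}e_0$ backward energy estimate with vanishing data at $r=0$) are precisely the details being suppressed, including the correct explanation, matching Remark \ref{rem:zetaeta}, of why $N=2$ is the minimal order that makes the boundary term at $\{r=0\}$ vanish. The only cosmetic imprecision is the phrase ``for all $i,j$'' in your bound on $R$: the hypothesis controls only $i+j\le 5$ derivatives of $P_l$ and each step of the induction in Theorem \ref{ThmFullExp} consumes two of them, so the bound holds for the finitely many derivatives that survive, which is all the energy estimate needs.
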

	\begin{proof}[Proof of Theorem \ref{ThmIso}]
		From Corollary \ref{thm:iso} and Theorem \ref{backthm} it follows that $S$ is one to one and onto respectively. Thus, $S$ is an isomorphism between vector spaces. 
	\end{proof}

	\section*{Acknowledgements}
	The authors would like to thank Mihalis Dafermos for valuable discussions. They would also like to thank Yannis Angelopoulos, Stefanos Aretakis, Dejan Gajic for 
	helpful comments on their recent series of papers \cite{AnArGa16a,AnArGa16b,AnArGa18}. G.F. was supported by the EPSRC grant EP/K00865X/1 on `Singularities of Geometric Partial Differential Equations'. G.F. would also like to thank his former advisor, Spyros Alexakis, for numerous enlightening communications while this work was being derived. He would also like to thank Jonathan Luk and Volker Schlue for stimulating discussions. J.S. would like to thank Peter Hintz for a stimulating discussion.

\end{document}